\documentclass[a4paper]{SCIYA2023enOL}

\usepackage{bbm}
\usepackage{array}
\usepackage{multirow}
\usepackage{makecell}
\usepackage{threeparttable}
\usepackage{graphicx}
\usepackage[shortlabels,inline]{enumitem}

\newcommand{\Tr}{\mathrm{Tr}}
\newcommand{\rank}{\mathrm{rank}}

\providecommand{\C}{}
\renewcommand{\C}{\mathcal{C}}

\newcommand{\BF}{\mathbb{F}}

\newcommand{\BZ}{\mathbb{Z}}

\DeclareMathOperator{\disc}{disc}
\DeclareMathOperator{\Sym}{Sym}

\DeclareMathOperator{\GL}{GL}
\DeclareMathOperator{\Gal}{Gal}

\DeclareMathOperator{\wt}{wt}

\SetEnumerateShortLabel{a}{\textup{(\alph*)}}
\SetEnumerateShortLabel{A}{\textup{(\Alph*)}}
\SetEnumerateShortLabel{1}{\textup{(\arabic*)}}
\SetEnumerateShortLabel{i}{\textup{(\roman*)}}
\SetEnumerateShortLabel{I}{\textup{(\Roman*)}}

\begin{document}
\ensubject{Coding theory}

\ArticleType{ARTICLES}
\Year{2026}
\Month{January}
\Vol{69}
\No{1}
\BeginPage{1}
\DOI{Science China Mathematics Manuscript for review}

\title[]{Further Study of ${p}$-Ary Self-Orthogonal Codes From Square Functions}
{Further Study of ${p}$-Ary Self-Orthogonal Codes From Square Functions}

\author[1]{Tingting Wu}{wutingting58@163.com}
\author[1,$\ast$]{Xianhong Xie}{{xianhxie@ahau.edu.cn}}

\AuthorMark{Wu T T}
\AuthorCitation{Wu T T, Xie X H}

\address[1]{School of Artificial Intelligence, Anhui Agricultural University, Hefei {\rm 230036}, Anhui, China}

\abstract{In this paper, we investigate a family of trace codes over finite fields of odd characteristic defined by two quadratic trace conditions. By analyzing the quadratic forms associated with these trace conditions and evaluating the corresponding character sums, we obtain the parameters and complete weight distributions of the codes. We further exploit the natural multiplicative orbit structure of the defining sets to obtain punctured codes and determine their parameters, weight distributions and dual distances. Moreover, we establish self-orthogonality criteria and investigate divisibility and projectivity of the constructed codes. These results give a unified treatment for arbitrary odd prime fields and for extension degrees of both parities.}

\keywords{Linear codes, punctured codes, self-orthogonal codes, square functions, weight distributions}

\MSC{94B05, 11T71, 11T23}

\maketitle

\section{Introduction}

Linear codes over finite fields are fundamental objects in coding theory, with applications in communications, cryptography, and combinatorics~\cite{1,2}. In particular, linear codes with explicitly determined parameters and weight distributions have attracted considerable attention because of their applications in secret sharing schemes and combinatorial designs~\cite{3,4,5}. Among various classes of linear codes, self-orthogonal codes are of particular interest. A linear code $\C$ over $\mathbb{F}_p$ is called self-orthogonal if $\C\subseteq \C^{\perp}$, where $\C^{\perp}$ denotes its Euclidean dual. Self-orthogonal codes have been extensively studied due to their connections with quantum error correction, lattices and linear complementary dual (LCD) codes~\cite{6,7,8,9,10,11}. Therefore, constructing self-orthogonal codes with good parameters and explicitly determining weight distributions remains an important topic in coding theory.

The defining-set construction provides a useful framework for constructing linear codes over finite fields. For a subset \(D\) of \(\mathbb F_q\), the trace code associated with \(D\) is defined by

\[
\C_D=
\left\{
\left(\Tr(cx)\right)_{x\in D}
:c\in\mathbb F_q
\right\},
\]
the set $D$ is called the defining set of $\C_D$. When the defining set is specified by algebraic functions or trace conditions, the parameters and weight distributions of the resulting codes can often be determined by exponential sums, multiplicative character sums, and quadratic forms~\cite{12,13,14}.

The defining-set construction has also been used to obtain self-orthogonal codes. Several families of self-orthogonal codes have been constructed from weakly regular bent functions and related special functions~\cite{10,11}. Other approaches include constructions based on $p$-divisible codes~\cite{21} and vectorial dual-bent functions~\cite{23}. Trace codes with self-orthogonality properties were studied in~\cite{22}, while Zhang \emph{et al.}~\cite{24} constructed several families of $q$-ary self-orthogonal codes directly via the defining-set construction. These works demonstrate that appropriately chosen defining sets can provide an effective way to simultaneously control the weight distribution and orthogonality properties of the resulting codes.

A construction particularly relevant to this paper was recently given by Xiang and Tang~\cite{15}. They considered ternary self-orthogonal codes associated with the square function \(f(x)=x^2\) and two trace conditions. Their construction shows that imposing two quadratic trace conditions on the defining set can produce ternary self-orthogonal codes with flexible parameters and relatively few nonzero weights. This naturally leads to the following question: can this double-trace construction be extended from the ternary case to arbitrary odd prime fields? If so, how do the arithmetic properties of the defining parameter affect the parameters, weight distributions and structural properties of the resulting codes?

In this paper, we investigate this question by considering a family of \(p\)-ary trace codes defined by two quadratic trace conditions. Define
\[
D_a=
\left\{
x\in\mathbb F_q^*:
\\ \Tr(x^2)=0,\quad
\Tr(ax^2)=0
\right\},
\]
and consider the trace code

\[
\C_{D_a}=
\left\{
\left(\\ \Tr(cx)\right)_{x\in D_a}
:c\in\mathbb F_q
\right\}.
\]

The main contributions of this paper are summarized as follows.
\begin{enumerate}
\item We extend the double-trace construction based on the square function in \cite{15} from the ternary even-degree setting to arbitrary odd primes and both odd and even extension degrees.

\item We determine the complete weight distributions of the constructed codes by evaluating quadratic character sums and enumerating solutions to systems of quadratic trace equations.

\item We clarify the dependence of the code parameters and weight distributions on the intermediate field \(\mathbb F_p(a)\), the degree
\[
d=[\mathbb F_p(a):\mathbb F_p],
\]
and the quotient \(e=m/d\), with separate treatments according to the parity of \(m\) and \(e\).

\item We exploit the \(\mathbb F_p^*\)-orbit structure of the defining set to construct the associated punctured codes and determine their parameters, weight distributions and dual distances.

\item We establish self-orthogonality criteria and investigate divisibility and projectivity for the constructed codes and their punctured codes.
\end{enumerate}

\section{Preliminaries}
\subsection{Notations}
Throughout this paper, we adopt the following notations:
\begin{itemize}
	\item $p$ is an odd prime and $q=p^m>1$;
	\item $\zeta_{p}=\exp(2\pi\mathrm{i}/p)$ is a primitive $p$-th root of unity, where $\mathrm{i}=\sqrt{-1}$;
	\item $\BF_{q}$ is a finite field with $q$ elements and $\BF_{q}^{*}:=\mathbb{F}_{q}-\{0\}$ is the multiplicative group of $\mathbb F_{q}$;
	\item $\Tr:=\Tr_{\mathbb F_q/\mathbb F_p}$ is the absolute trace from $\mathbb F_{q}$ to $\mathbb F_{p}$;
	\item $\eta_q$ and $\eta_p$ denote the quadratic characters of $\BF_{q}^{*}$ and $\BF_{p}^{*}$, respectively, and let $\eta_q(0)=\eta_p(0)=0$;
	\item $\chi_q$ and $\chi_p$ denote the canonical additive characters of $\BF_{q}$ and $\BF_{p}$, respectively;
	\item $G_q:=G_q(\eta_q)$ and $G_p:=G_p(\eta_p)$ denote the corresponding quadratic Gauss sums; whenever $K=\BF_{p^d}$ is used, $\eta_K$, $\chi_K$ and $G_K:=G_K(\eta_K)$ have the analogous meanings, and $q_K:=|K|=p^d$;
	\item For a matrix $A$, $A^T$ is its transpose. In particular, the transpose of a column vector is a row vector;
	\item For a field extension $K/\mathbb F_p$, we write $N_{K/\mathbb F_p}$ for the norm.
\end{itemize}
\subsection{Projective line}
\begin{definition}\label{projective}
	The projective line over $\BF_p$, denoted $\mathbb P^1(\BF_p)$, is the set of one-dimensional subspaces of the two-dimensional vector space $\BF_p^2$. Equivalently,
	\[\mathbb P^1(\BF_p)=(\BF_p^2\setminus\{(0,0)\})/\sim,\] where the equivalence relation $\sim$ is defined by
	\[
	(x_0,x_1)\sim (y_0,y_1)
	\Longleftrightarrow \text{there exists }\lambda\in\BF_p^*\ 
	\text{such that }(x_0,x_1)=\lambda(y_0,y_1).
	\]
\end{definition}

An element of $\mathbb P^1(\BF_p)$ is called a point and is written in homogeneous coordinates as $[x_0:x_1]$ with $(x_0,x_1)\neq(0,0)$ and
\[[x_0:x_1]=[\lambda x_0:\lambda x_1]\ \text{for all } \lambda\in\BF_p^*.\]

Every point with $x_1\neq0$ can be represented uniquely as $[t:1]$, where $t=x_0/x_1\in\BF_p$. The remaining point $[1:0]$ is called the point at infinity and is denoted by $\infty$. Hence there is a bijection \[\mathbb P^1(\BF_p)\leftrightarrow \BF_p\cup \{\infty\},\]
	and the total number of points is
\[\#\mathbb P^1(\BF_p)=p+1.\]

The group of automorphisms of $\mathbb P^1(\BF_p)$ as a projective line is $\operatorname{PGL}_2(\BF_p)$, acting by M\"{o}bius transformations
\[z\mapsto \frac{az+b}{cz+d},\qquad a,b,c,d\in\BF_p,\ ad-bc\neq0,\] with the usual conventions for $z=\infty$.
\subsection{ Quadratic Gauss sums}
\begin{definition}\label{gauss}
	Suppose $\eta_q$ is a quadratic multiplicative character of $\BF_{q}$, the Gauss sum $G_q(\eta_q)$ over $\BF_{q}$ is defined by
	\[G_q(\eta_q)=\sum_{x\in\BF_q}\eta_q(x)\chi_q(x).\]
\end{definition}
The following lemmas can be found in \cite{2,16}.
\begin{lemma} \label{lemma 1}
	Let $p$ be a prime, and $q=p^m$. Then
	\[
	\begin{aligned}
	G_q
	&=(-1)^{m-1}\mathrm{i}^{(\frac{p-1}{2})^2m}p^{m/2}\\
	&=\begin{cases}
	(-1)^{m-1}p^{m/2}, & p\equiv1\pmod4\\
	(-1)^{m-1}\mathrm{i}^mp^{m/2}, & p\equiv3\pmod4.
	\end{cases}
	\end{aligned}
	\]
\end{lemma}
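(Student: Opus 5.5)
The plan is to reduce to the prime field through the Hasse--Davenport relation and then evaluate the prime-field sum $G_p$ directly. The two displayed forms of the statement agree by a short check: if $p\equiv1\pmod 4$ then $(\frac{p-1}{2})^2$ is even, and in fact divisible by $4$, so $\mathrm{i}^{(\frac{p-1}{2})^2m}=1$. If $p\equiv3\pmod4$ then $\frac{p-1}{2}$ is odd, so $(\frac{p-1}{2})^2\equiv1\pmod 4$ and $\mathrm{i}^{(\frac{p-1}{2})^2m}=\mathrm{i}^m$. It therefore suffices to prove the first line.

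\emph{Step 1 (the case $m=1$).} I will show $G_p=\mathrm{i}^{(\frac{p-1}{2})^2}\sqrt p$, i.e.\ $\sqrt p$ or $\mathrm{i}\sqrt p$ according as $p\equiv1$ or $3\pmod 4$. The modulus and phase come from elementary manipulations:
\begin{itemize}
\item Using $\sum_{x}\chi_p(x)=0$, one has $G_p=\sum_{x\in\BF_p}\zeta_p^{x^2}$.
\item Since $\overline{G_p}=\eta_p(-1)G_p$, one gets $G_p^2=\eta_p(-1)p$, where $\eta_p(-1)=(-1)^{(p-1)/2}$.
\item Hence $G_p=\pm\sqrt p$ or $\pm\mathrm{i}\sqrt p$, and only the sign remains.
\end{itemize}
Determining that sign is the main obstacle; it is Gauss's classical theorem. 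My first choice is Schur's determinant argument. Consider the matrix $(\zeta_p^{jk})_{0\le j,k<p}$, whose trace is $G_p$. Its eigenvalues are $\pm\sqrt p,\pm\mathrm{i}\sqrt p$, with multiplicities fixed by the relations $M^2=p\cdot(\text{permutation } j\mapsto -j)$ and $M^4=p^2I$. Comparing $\det M$, computed as a Vandermonde product $\prod_{j<k}(\zeta_p^k-\zeta_p^j)$ whose argument can be determined explicitly, pins down the multiplicities and hence the sign. Alternatively, since the paper says the lemma is standard, I would cite \cite{2,16} for this step.

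\emph{Step 2 (lifting to $\BF_q$).} Let $\eta_q=\eta_p\circ N_{\BF_q/\BF_p}$ and $\chi_q=\chi_p\circ\Tr$; these are exactly the lifts of $\eta_p$ and $\chi_p$, because the norm $\BF_q^*\to\BF_p^*$ is surjective and so sends squares onto squares. The Davenport--Hasse lifting theorem then gives
\[-G_q=(-G_p)^m,\qquad\text{so}\qquad G_q=(-1)^{m-1}G_p^m.\]
Substituting Step 1 yields
\[G_q=(-1)^{m-1}\mathrm{i}^{(\frac{p-1}{2})^2m}p^{m/2},\]
as claimed.

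If the Davenport--Hasse theorem is not to be invoked, it can be proved for this quadratic case via Stickelberger-type counting. One counts the solutions of $\Tr(x^2)=0$, equivalently the points of the quadric $\sum_{i}\Tr(\omega_ix)\Tr(\omega_jx)$ over $\BF_p$ for a basis $\{\omega_i\}$. This quadratic form in $m$ variables has discriminant related to $\disc(\omega_i)$, which connects to $\eta_q$ versus $\eta_p$, and diagonalizing gives the $m$-th power formula directly.
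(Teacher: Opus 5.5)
Your proposal is correct. The paper gives no argument of its own and simply cites \cite{2,16}; your route (Gauss's evaluation $G_p=\mathrm{i}^{(\frac{p-1}{2})^2}\sqrt p$, then the Davenport--Hasse lifting $G_q=(-1)^{m-1}G_p^m$) is the standard textbook proof behind that citation.

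Two small tightenings are worth making.
\begin{itemize}
\item The identity $\eta_p\circ N_{\BF_q/\BF_p}=\eta_q$ holds because surjectivity of the norm makes $\eta_p\circ N_{\BF_q/\BF_p}$ a nontrivial character of order $2$ on the cyclic group $\BF_q^*$, hence equal to the unique quadratic character. Saying only that the norm ``sends squares onto squares'' does not by itself show that nonsquares go to nonsquares.
\item If you pursue the optional discriminant route, you cannot use the paper's Lemma~\ref{quadratic}, since it derives $\eta_p(\det H_{\rm tr})=(-1)^{m-1}$ from the very lemma you are proving. You would instead need Stickelberger's independent argument: $\delta=\det(\omega_j^{p^i})$ satisfies $\delta^p=(-1)^{m-1}\delta$, so $\eta_p(\delta^2)=\delta^{p-1}=(-1)^{m-1}$.
\end{itemize}
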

\begin{lemma}\label{exp}
	Let $g(x)=b_2x^2+b_1x+b_0 \in \BF_{q}[x]$, where $b_2\neq 0$. Then
\[
	\sum_{x\in \BF_q}\chi_q(g(x))=\chi_q(b_0-b_1^2(4b_2)^{-1})\eta_q(b_2)G_q.
\]
\end{lemma}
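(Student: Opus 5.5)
We prove the displayed identity
\[
\sum_{x\in \BF_q}\chi_q(g(x))=\chi_q\bigl(b_0-b_1^2(4b_2)^{-1}\bigr)\eta_q(b_2)G_q
\]
by completing the square and then reducing to the basic quadratic character sum over $\BF_q$. Since $p$ is odd, $2$ and $4b_2$ are invertible in $\BF_q$. We write
\[
g(x)=b_2\Bigl(x+\frac{b_1}{2b_2}\Bigr)^2+b_0-\frac{b_1^2}{4b_2}.
\]
Because $\chi_q$ is a homomorphism from the additive group, the constant term factors out as $\chi_q(b_0-b_1^2(4b_2)^{-1})$. The translation $y=x+b_1/(2b_2)$ is a bijection of $\BF_q$, so
\[
\sum_{x\in\BF_q}\chi_q(g(x))=\chi_q\bigl(b_0-b_1^2(4b_2)^{-1}\bigr)\sum_{y\in\BF_q}\chi_q(b_2y^2).
\]

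It remains to show that $\sum_{y\in\BF_q}\chi_q(b_2y^2)=\eta_q(b_2)G_q$. We group the terms by $u=y^2$. For $u\in\BF_q$, the number of $y$ with $y^2=u$ equals $1+\eta_q(u)$; this uses the convention $\eta_q(0)=0$ and the fact that a nonzero square has exactly two square roots. Hence
\[
\sum_{y}\chi_q(b_2y^2)=\sum_{u\in\BF_q}(1+\eta_q(u))\chi_q(b_2u)=\sum_u\chi_q(b_2u)+\sum_u\eta_q(u)\chi_q(b_2u).
\]
The first sum vanishes by orthogonality of additive characters, since $b_2\neq0$ makes $u\mapsto\chi_q(b_2u)$ a nontrivial character.

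In the second sum we substitute $v=b_2u$ and use $\eta_q(b_2^{-1}v)=\eta_q(b_2)\eta_q(v)$, which holds because $\eta_q$ is multiplicative and $\eta_q(b_2^{-1})=\eta_q(b_2)$. This gives $\eta_q(b_2)\sum_v\eta_q(v)\chi_q(v)=\eta_q(b_2)G_q$ by Definition \ref{gauss}, and combining with the factored constant term proves the identity.

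None of these steps is a real obstacle; the argument only needs care at two points. The first is the counting identity $\#\{y:y^2=u\}=1+\eta_q(u)$, including the case $u=0$. The second is that odd characteristic is what allows the division by $2b_2$ when completing the square. The explicit value of $G_q$ from Lemma \ref{lemma 1} is not needed here; it is only used later, when the formula is evaluated.
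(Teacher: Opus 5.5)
Your proof is correct. It is the standard argument: complete the square, then evaluate $\sum_{y}\chi_q(b_2y^2)$ using $\#\{y:y^2=u\}=1+\eta_q(u)$, orthogonality of additive characters, and the substitution $v=b_2u$. The paper gives no proof of its own for this lemma and only cites the literature, where this same argument is the usual one.
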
	
\begin{lemma}\label{chengkai}\cite[Lemma~3.1]{20}
	Suppose $q$ is an odd prime power and $\mathscr Q(X_1,\ldots,X_n)$ in $n$ variables over $\BF_q$ has rank $r$. Then
	\[\sum_{(x_1,\ldots,x_n)\in\BF_q^n}\chi_q(\mathscr Q(x_1,\ldots,x_n))=\eta_q(\gamma)G_q^rq^{n-r},\]
	where $\gamma$ is a diagonal multiplier of $\mathscr Q$.
\end{lemma}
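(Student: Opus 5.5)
The plan is to diagonalize $\mathscr Q$ and reduce the sum to a product of one-variable Gauss sums, each handled by Lemma~\ref{exp}.

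\textbf{Diagonalization.} Since $q$ is odd, $\mathscr Q$ corresponds to a symmetric matrix $A$ over $\BF_q$ with $\mathscr Q(X)=X^TAX$, and $\rank A=r$. By the standard diagonalization of quadratic forms in odd characteristic, there is an invertible matrix $M\in\GL_n(\BF_q)$ with
\[
M^TAM=\mathrm{diag}(a_1,\ldots,a_r,0,\ldots,0),\qquad a_i\neq0 .
\]
The diagonal multiplier is then $\gamma=a_1\cdots a_r$. Its class modulo squares, hence $\eta_q(\gamma)$, does not depend on the choice of $M$: the determinant of the nondegenerate part changes only by a square factor.

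\textbf{Change of variables.} Substituting $X=MY$ permutes $\BF_q^n$, so
\[
\sum_{x\in\BF_q^n}\chi_q(\mathscr Q(x))=\sum_{y\in\BF_q^n}\chi_q\Bigl(\sum_{i=1}^r a_iy_i^2\Bigr).
\]
The summand factors because $\chi_q$ is additive. Each of the $n-r$ free variables $y_{r+1},\ldots,y_n$ contributes a factor $q$.

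\textbf{One-variable sums.} For each $i\le r$, Lemma~\ref{exp} with $g(y)=a_iy^2$ gives $\sum_y\chi_q(a_iy^2)=\eta_q(a_i)G_q$. Multiplying the factors and using that $\eta_q$ is multiplicative gives
\[
\eta_q(a_1\cdots a_r)\,G_q^r\,q^{n-r}=\eta_q(\gamma)\,G_q^r\,q^{n-r}.
\]

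\textbf{Main point needing care.} The only delicate step is the diagonalization, together with the well-definedness of $\eta_q(\gamma)$. I would establish the diagonalization by the standard inductive argument. If $\mathscr Q\neq0$, pick a vector $v$ with $\mathscr Q(v)\neq0$. Such a $v$ exists because the associated bilinear form is nonzero and $2$ is invertible, so polarization produces it. Then split off $\langle v\rangle$ and its orthogonal complement, and recurse on the complement.
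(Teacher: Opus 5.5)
Your argument is correct. Diagonalizing by a congruence $M^TAM=\mathrm{diag}(a_1,\ldots,a_r,0,\ldots,0)$, substituting $X=MY$, and factoring the sum into $r$ one-variable sums evaluated by Lemma~\ref{exp}, with a factor $q$ from each of the $n-r$ free variables, is the standard proof of this result; the paper does not prove it but imports it from \cite[Lemma~3.1]{20}. One small simplification: you need not argue separately that $\eta_q(\gamma)$ is independent of $M$. It follows from the identity itself, since the left-hand side does not depend on $M$ and $G_q^r q^{n-r}\neq 0$.
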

\section{The code $\C_{D_a}$ from double trace functions}
In this section, we always assume
\[
\begin{aligned}
a&\in\BF_{p^m}\setminus\BF_p,&
d&=[\BF_p(a):\BF_p],\\
K&=\BF_p(a)=\BF_{p^d},&
e&=m/d.
\end{aligned}
\]
For later use, define the two trace quadratic forms on the underlying $\BF_p$-space $V_q=\BF_q$ by
\[
Q_0(x):=\Tr(x^2),\qquad Q_a(x):=\Tr(ax^2).
\]
	Let
	\[
	\phi:\ \BF_q\longrightarrow\mathbb F_p^{\#D_a},\qquad
	c\longmapsto(\Tr(cx))_{x\in D_a},
	\]
	and set $\mathcal H:=\#\ker\phi$.
	When $m$ is even, we have
\[a\in K_0:=\BF_{p^{m/2}}\Longleftrightarrow d\mid\frac{m}{2}\Longleftrightarrow 2\mid e.\]

Set
\[
\begin{aligned}
\theta&=\eta_p(-1),\\
\delta&=\theta^{m/2} &&(2\mid m),\\
\varepsilon_m&=\theta^{(m+3)/2} &&(2\nmid m),\\
\delta_e&=\eta_K(-1)^{(e+3)/2} &&(2\nmid e).
\end{aligned}
\]
  When $m$ is even, Lemma~\ref{lemma 1} gives
 \begin{align}
 G_q&=-\delta p^{m/2}\notag\\
 &=\begin{cases}
 -p^{m/2}, & \mbox{if } p\equiv1\pmod4 \\
 -(-1)^{m/2} p^{m/2}, & \mbox{if }p\equiv3\pmod4.
 \end{cases}\label{me1}
 \end{align}
\begin{lemma}\label{quadratic}
	For any $(u,v)\in\BF_p\times\BF_p\setminus\{(0,0)\}$, take $c=u+va$. Then the quadratic form
	\[Q_c(x):=Q_{u,v}(x)=\Tr((u+va)x^2)=\Tr(cx^2)\]is a non-degenerate quadratic form of rank $m$ over the $m$-dimensional $\BF_p$-space $\BF_q$.
	
	Moreover, let $H_{\rm tr}=(\Tr(e_ie_j))_{i,j}$ be the trace pairing matrix under an arbitrary basis $(e_i)$ of $\BF_q/\BF_p$ ,and let $M_c$  be the symmetric matrix corresponding to the quadratic form $x\mapsto \Tr(cx^2)$. Then the determinant of $Q_{c}$ is given by
	\begin{equation}\label{qua1}
	\det M_c=\det(H_{\rm tr})N_{\BF_q/\BF_p}(c),
	\end{equation}and the quadratic character satisfies
	\begin{equation}\label{qua2}
	\eta_p(\det H_{\rm tr})=(-1)^{m-1}.
	\end{equation}

	In particular, the discriminant of the trace form $x\mapsto \Tr(x^2)$ is a square class  if $m$ is odd, and a nonsquare class if $m$ is even.
\end{lemma}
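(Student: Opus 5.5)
Since $a\notin\BF_p$, the elements $1,a$ are $\BF_p$-linearly independent, so $(u,v)\neq(0,0)$ forces $c=u+va\neq0$. I will treat the rank claim, formula \eqref{qua1}, and formula \eqref{qua2} in that order.

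\textbf{Non-degeneracy.} The polar bilinear form of $Q_c$ is
\[
B_c(x,y)=Q_c(x+y)-Q_c(x)-Q_c(y)=2\Tr(cxy).
\]
Because $p$ is odd, $2$ is invertible. Suppose $B_c(x,y)=0$ for all $y$. Then $\Tr(cx\cdot y)=0$ for all $y\in\BF_q$. Non-degeneracy of the trace pairing gives $cx=0$, hence $x=0$. So the radical is trivial and $Q_c$ has rank $m$.

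\textbf{Determinant formula \eqref{qua1}.} Fix a basis $(e_i)$ and let $A_c$ be the matrix of multiplication by $c$ in this basis, so $ce_j=\sum_k (A_c)_{kj}e_k$. The symmetric matrix of $Q_c$ then satisfies
\[
(M_c)_{ij}=\Tr(ce_ie_j)=\sum_k (A_c)_{kj}\Tr(e_ie_k)=(H_{\rm tr}A_c)_{ij}.
\]
Taking determinants gives $\det M_c=\det H_{\rm tr}\cdot\det A_c$. By definition of the norm, $\det A_c=N_{\BF_q/\BF_p}(c)$.

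\textbf{Quadratic character \eqref{qua2}.} I will compute the character sum $S=\sum_{x\in\BF_q}\chi_p(\Tr(x^2))$ in two ways.
\begin{itemize}
\item Directly, $S=\sum_x\chi_q(x^2)=G_q$, using Lemma~\ref{exp} with $g(x)=x^2$.
\item By Lemma~\ref{chengkai} over $\BF_p$ with $n=r=m$, $S=\eta_p(\gamma)G_p^m$, where $\gamma$ is the product of the diagonal coefficients after diagonalizing $Q_1$.
\end{itemize}
A diagonal multiplier differs from $\det M_1=\det H_{\rm tr}$ only by a nonzero square, so $\eta_p(\gamma)=\eta_p(\det H_{\rm tr})$. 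Hence
\[
G_q=\eta_p(\det H_{\rm tr})\,G_p^m.
\]
Lemma~\ref{lemma 1} gives
\[
G_q=(-1)^{m-1}\mathrm{i}^{(\frac{p-1}{2})^2m}p^{m/2},\qquad G_p^m=\mathrm{i}^{(\frac{p-1}{2})^2m}p^{m/2},
\]
the latter because the $m=1$ case has sign $(-1)^0=1$. Comparing, $\eta_p(\det H_{\rm tr})=(-1)^{m-1}$. This is independent of the basis, since changing basis multiplies $\det H_{\rm tr}$ by a square.

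\textbf{The final claim.} The discriminant class of $x\mapsto\Tr(x^2)$ is the class of $\det H_{\rm tr}$. By \eqref{qua2} it is a square when $m$ is odd and a nonsquare when $m$ is even.

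\textbf{Main obstacle.} The delicate step is the Gauss sum comparison. It requires matching the normalization of $\gamma$ in Lemma~\ref{chengkai} with $\det H_{\rm tr}$ up to squares, and getting the sign and power of $\mathrm{i}$ in $G_p^m$ exactly right.
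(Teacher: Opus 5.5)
Your proposal is correct and follows the paper's proof: non-degeneracy from the polar form $2\Tr(cxy)$, the factorization $M_c=H_{\rm tr}A_c$, and a two-way evaluation of a quadratic Gauss sum via Lemma~\ref{exp} and Lemma~\ref{chengkai}. The only difference is cosmetic: you do the comparison at $c=1$ and obtain $G_q=(-1)^{m-1}G_p^m$ from Lemma~\ref{lemma 1}, whereas the paper works with general $c$ and cites Hasse--Davenport, so you avoid needing the identity $\eta_q(c)=\eta_p(N_{\BF_q/\BF_p}(c))$.
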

\begin{proof}
	Since $a\notin\BF_p$, $c=u+va\neq0$ for $(u,v)\neq(0,0)$. The polarization form of $Q_{c}$ is
	\[B_{c}(x,y):=2\Tr(cxy).\]
	By the non-degeneracy of the trace pairing, $B_{c}$ is non-degenerate, hence the rank of $Q_{c}$ is $m$.
	
	Under the basis $(e_i)$, the matrix of the quadratic form is $H_{\rm tr}M_{{\rm mul},c}$, where $M_{{\rm mul},c}$ is the matrix of the multiplication map $x\mapsto cx$. Therefore, \[
	\det(H_{\rm tr}M_{{\rm mul},c})=\det(H_{\rm tr})N_{\BF_q/\BF_p}(c),
	\]which establishes Eq.~\eqref{qua1}.
	
	By Lemma~\ref{exp} and the Hasse-Davenport theorem,
	\begin{equation}\label{1}\sum_{x\in\BF_q}\zeta_p^{\Tr(cx^2)}
	=\eta_q(c)G_q=\eta_p(N_{\BF_q/\BF_p}(c))(-1)^{m-1}G_p^m.\end{equation}
	In addition, since $M_c$ is the symmetric matrix of $Q_{c}$, by Lemma~\ref{chengkai}, we have \begin{equation}\label{2}\sum_{x\in\BF_q}\zeta_p^{\Tr(cx^2)}
	=\eta_p(\det M_c)G_p^m=\eta_p(\det(H_{\rm tr})N_{\BF_q/\BF_p}(c))G_p^m.\end{equation}
	Combining ~\eqref{1} and ~\eqref{2} yields Eq.~\eqref{qua2}.
\end{proof}

\subsection{The case of even $m$}
In $\mathbb P^1(\BF_p)$, we take
\[
\alpha_\infty=1,\qquad \alpha_t=a-t\quad(t\in \BF_p),
\]
	and define
\begin{equation}\label{ha}
h(a)=\sum_{j\in\mathbb P^1(\BF_p)}\eta_q(\alpha_j)
=1+\sum_{t\in \BF_p}\eta_q(a-t).
\end{equation}
Suppose $f_a(X)\in\BF_p[X]$ is the minimal polynomial of $a$, then
\[N_{K/\BF_p}(a-t)=(-1)^df_a(t).\]
Note that for $A\in K^*$,
\[\eta_q(A)=\eta_p(N_{K/\BF_p}(A))^e.\]
	Therefore, we have
\begin{align}
h(a)
&=1+\sum_{t\in \BF_p}\eta_q(a-t)
=1+\sum_{t\in \BF_p}\eta_p(N_{K/\BF_p}(a-t))^e\notag\\
&=1+\sum_{t\in \BF_p}\eta_p\!\left((-1)^df_a(t)\right)^e.\label{ha1}
\end{align}
\begin{lemma}\label{n0}
	Set	\[
	n_0=\#\{x\in \BF_q:\Tr(x^2)=\Tr(ax^2)=0\}.
	\]
	Then
	\[
	n_0=p^{m-2}-\delta(p-1)p^{m/2-2}h(a),\] We further have
	\[n=|D_a|=n_0-1=p^{m-2}-1-\delta(p-1)p^{m/2-2}h(a).
	\]
	In particular,
	\begin{enumerate}
		\item If $2\mid e$, then $\eta_q|_{K^*}=1$ and
		\[h(a)=p+1,\qquad
		n=p^{m-2}-1-\delta(p^2-1)p^{m/2-2}.\]
		\item If $2\nmid e$, then $2\mid d$ and $h(a)=1+\sum\limits_{t\in \BF_p}\eta_p(f_a(t))$ generally depends on the minimal polynomial $f_a(X)$. Hence,
		\[
		n=p^{m-2}-1-\delta(p-1)p^{m/2-2}h(a).
		\]
	\end{enumerate}
\end{lemma}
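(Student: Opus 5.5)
We count $n_0$ with additive characters, using orthogonality in both trace conditions:
\[
n_0=\frac{1}{p^2}\sum_{x\in\BF_q}\sum_{u,v\in\BF_p}\zeta_p^{u\Tr(x^2)+v\Tr(ax^2)}
=\frac{1}{p^2}\Bigl(q+\sum_{(u,v)\neq(0,0)}\sum_{x\in\BF_q}\chi_q\bigl((u+va)x^2\bigr)\Bigr).
\]
For $(u,v)\neq(0,0)$ the element $c=u+va$ is nonzero, as in Lemma~\ref{quadratic}. By Lemma~\ref{exp} the inner sum is therefore $\eta_q(u+va)G_q$. Hence
\[
n_0=p^{m-2}+p^{-2}G_q\sum_{(u,v)\neq(0,0)}\eta_q(u+va).
\]

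The key step is to group the nonzero pairs $(u,v)$ into $\BF_p^*$-orbits, i.e.\ points of $\mathbb P^1(\BF_p)$. For $\lambda\in\BF_p^*$ we have $\eta_q(\lambda)=1$, since every element of $\BF_p$ is a square in $\BF_q$ when $m$ is even. So $\eta_q(u+va)$ depends only on the point $[u:v]$, and each point accounts for $p-1$ pairs. The point with $v=0$ has representative $1=\alpha_\infty$. The point $[-t:1]$ has representative $a-t=\alpha_t$. Thus
\[
\sum_{(u,v)\neq(0,0)}\eta_q(u+va)=(p-1)h(a),
\]
with $h(a)$ as in \eqref{ha}. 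Substituting $G_q=-\delta p^{m/2}$ from \eqref{me1} gives
\[
n_0=p^{m-2}-\delta(p-1)p^{m/2-2}h(a).
\]
Removing $x=0$ then gives $n=n_0-1$, since $D_a$ excludes $0$.

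For the special cases we use \eqref{ha1}. If $2\mid e$, then $\eta_q(A)=\eta_p(N_{K/\BF_p}(A))^e=1$ for every $A\in K^*$. Hence every term of $h(a)$ equals $1$, so $h(a)=p+1$ and
\[
n=p^{m-2}-1-\delta(p^2-1)p^{m/2-2}.
\]
If $2\nmid e$, then $m=de$ even forces $d$ even. Then $(-1)^d=1$, and the exponent $e$ can be dropped because $\eta_p$ takes values $\pm1$ (and $0$ only if $f_a(t)=0$, which is impossible since $f_a$ is irreducible of degree $d\ge2$). This gives
\[
h(a)=1+\sum_{t\in\BF_p}\eta_p(f_a(t)).
\]

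The main point needing care is the projective grouping: it relies on $\eta_q$ being trivial on $\BF_p^*$ for even $m$. The remaining steps are routine substitutions.
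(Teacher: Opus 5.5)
Your proposal is correct and follows the paper's proof. The steps match: orthogonality over $(u,v)\in\BF_p^2$, evaluating each nonzero term as $\eta_q(u+va)G_q$, grouping into the $p+1$ points of $\mathbb P^1(\BF_p)$ using $\eta_q|_{\BF_p^*}=1$ for even $m$, and substituting $G_q=-\delta p^{m/2}$. Your derivation of the two special cases directly from \eqref{ha1} is actually more explicit than the paper's, which only addresses them through a remark on $\operatorname{PGL}_2(\BF_p)$-transformations.
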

\begin{proof}
	Note that \begin{align*}
	n_0&=\frac1{p^2}
	\sum_{u,v\in\BF_p}\sum_{x\in \BF_q}
	\chi_q((u+av)x^2)\\
	&=p^{m-2}+\frac1{p^2}
	\sum  _{(u,v)\in\BF_p\times\BF_p\setminus\{(0,0)\}} \sum_{x\in \BF_q}
	\chi_q((u+av)x^2).
	\end{align*}
	For $(u,v)\neq(0,0)$, the coefficients can be written as $\rho \alpha_j$ among $p+1$ projective lines, where $\rho\in\BF_p^*$. If $2\mid m$, then
	\[\eta_q(\rho)=\eta_p(\rho)^m=1.\]
	By Eq.~  \eqref{me1} , we have
	\[
	\begin{aligned}
	n_0&=p^{m-2}+\frac{p-1}{p^2}G_qh(a)=p^{m-2}-\delta(p-1)p^{m/2-2}h(a),\\
	n&=\#D_a=n_0-1.
	\end{aligned}
	\]
	Suppose \[a'=\frac{ra+s}{ta+u}\] is a well-defined transformation in $\operatorname{PGL}_2(\BF_p)$, then
	\[
	\begin{aligned}
	\langle1,a'\rangle_{\BF_p}
	&=(ta+u)^{-1}\langle1,a\rangle_{\BF_p},\\
	h(a')&=\eta_q((ta+u)^{-1})h(a).
	\end{aligned}
	\]
	Hence, if $2\mid e$, $h(a')=h(a)$, and if $2\nmid e$, the sign can change, the function $h(a)$ cannot be said to depend only on the value of the unweighted $\operatorname{PGL}_2(\BF_p)$-orbit.
\end{proof}
For $\beta\in K$, define
\begin{equation}\label{Za}
Z_a(\beta)=
\mathrm{1}_{\{\Tr_{K/\BF_p}(\beta)=0\}}
+\sum_{t\in \BF_p}
\eta_q(a-t)\,
\mathrm{1}_{\{\Tr_{K/\BF_p}(\beta/(a-t))=0\}}.
\end{equation}Clearly, $Z_a(0)=h(a)$.
For any   $b\in\BF_q$ , we take
\begin{equation*}
\beta(b)=\Tr_{\BF_q/K}(b^2).
\end{equation*}
\begin{lemma}\label{bb}
	Let $\eta_K$ denote the quadratic character of $K$, $q_K=p^d$ and
	\[
	R_e(\beta)=
	\#\{b\in\BF_q:\Tr_{\BF_q/K}(b^2)=\beta\},  \qquad  \beta\in K.
	\]
	Then
	\begin{itemize}
		\item If $2\mid e$, then
		\[
		\begin{aligned}
		R_e(0)&=q_K^{e-1}-\eta_K(-1)^{e/2}(q_K-1)q_K^{e/2-1},\\
		R_e(\beta)&=q_K^{e-1}+\eta_K(-1)^{e/2}q_K^{e/2-1}
		\quad(\beta\ne0).
		\end{aligned}
		 \]
		\item If $2\nmid e$, then
		 \[
		\begin{aligned}
		R_e(0)&=q_K^{e-1},\\
		R_e(\beta)&=q_K^{e-1}+\delta_e\eta_K(\beta)q_K^{(e-1)/2}
		\quad(\beta\ne0).
		\end{aligned}
		\]
	\end{itemize}
\end{lemma}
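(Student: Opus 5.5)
We count solutions of $\Tr_{\BF_q/K}(b^2)=\beta$ with additive characters of $K$. Orthogonality of $\chi_K$ gives
\[
R_e(\beta)=\frac1{q_K}\sum_{y\in K}\chi_K(-y\beta)\sum_{b\in\BF_q}\chi_K\bigl(y\Tr_{\BF_q/K}(b^2)\bigr)
=\frac1{q_K}\sum_{y\in K}\chi_K(-y\beta)\sum_{b\in\BF_q}\chi_q(yb^2),
\]
where we use $\chi_K\circ\Tr_{\BF_q/K}=\chi_q$, which follows from transitivity of the trace. The term $y=0$ contributes $q/q_K=q_K^{e-1}$. For $y\neq0$, Lemma~\ref{exp} evaluates the inner sum as $\eta_q(y)G_q$. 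Since $y\in K^*$, we have $\eta_q(y)=\eta_K(N_{\BF_q/K}(y))=\eta_K(y^e)=\eta_K(y)^e$. Hence
\[
R_e(\beta)=q_K^{e-1}+\frac{G_q}{q_K}\sum_{y\in K^*}\eta_K(y)^e\chi_K(-y\beta).
\]

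\emph{Case $2\mid e$.} Here $\eta_K(y)^e=1$, so the remaining sum is $\sum_{y\in K^*}\chi_K(-y\beta)$. This equals $q_K-1$ if $\beta=0$ and $-1$ otherwise. We need $G_q$ in terms of $K$. By Hasse--Davenport, $G_q=(-1)^{e-1}G_K^{e}=-G_K^e$. Lemma~\ref{lemma 1} applied over $K$ gives $G_K^2=\eta_K(-1)q_K$, since $G_K^2=\eta_K(-1)q_K$ holds for any quadratic Gauss sum. Therefore $G_q=-\eta_K(-1)^{e/2}q_K^{e/2}$. Substituting yields $R_e(0)=q_K^{e-1}-\eta_K(-1)^{e/2}(q_K-1)q_K^{e/2-1}$ and $R_e(\beta)=q_K^{e-1}+\eta_K(-1)^{e/2}q_K^{e/2-1}$ for $\beta\neq0$.

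\emph{Case $2\nmid e$.} Now $\eta_K(y)^e=\eta_K(y)$. For $\beta=0$ the sum $\sum_{y\in K^*}\eta_K(y)$ vanishes, so $R_e(0)=q_K^{e-1}$. For $\beta\neq0$, substitute $z=-y\beta$ to get
\[
\sum_{y}\eta_K(y)\chi_K(-y\beta)=\eta_K(-\beta)G_K.
\]
Thus $R_e(\beta)=q_K^{e-1}+q_K^{-1}G_qG_K\,\eta_K(-1)\eta_K(\beta)$. With $e$ odd, Hasse--Davenport gives $G_q=G_K^e$, so $G_qG_K=G_K^{e+1}=(\eta_K(-1)q_K)^{(e+1)/2}$. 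The correction term is therefore $\eta_K(-1)^{(e+1)/2+1}\eta_K(\beta)q_K^{(e-1)/2}$. Since $\eta_K(-1)^{(e+3)/2}=\delta_e$, this is exactly $\delta_e\eta_K(\beta)q_K^{(e-1)/2}$.

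The only delicate point is the sign bookkeeping: the Hasse--Davenport sign $(-1)^{e-1}$, the identity $G_K^2=\eta_K(-1)q_K$, and matching the result to $\delta_e$. Everything else is routine orthogonality.
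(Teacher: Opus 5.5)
Your proposal is correct and follows essentially the same route as the paper: character orthogonality over $K$, reduction of the inner sum to $\eta_K(y)^e G_q$, and Hasse--Davenport over $\BF_q/K$, split by the parity of $e$. You make the sign bookkeeping more explicit than the paper does, for instance the identity $G_K^2=\eta_K(-1)q_K$ and the matching with $\delta_e$.
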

\begin{proof}
	  By additive-character orthogonality over $K$,
	 \begin{align*}
	R_e(\beta)
	&=  \frac{1}{q_K} \sum_{y\in K} \chi  _K ( -y\beta)
	 \sum  _{b\in\BF_q} \chi_q(  yb^2 ) \\
	&=  q_K^{e-1} +  \frac{G_q}{q_K}
	 \sum_{y\in K^*}\chi_K(  -y \beta)\eta_K(y)^e  .
	 \end{align*}
	If   $e$ is even , then $\eta_K(y)^e=1$  and
		$G_q=-\eta_K(-1)^{e/2}q_K^{e/2}$ by Hasse-Davenport over $\BF_q/K$.
		This gives the stated values for $\beta=0$ and $\beta\ne0$.

	If $e$ is odd ,  then $\eta_K(y)^e=\eta_K(y)$  and
		$G_q=G_K^{e}$. For $\beta=0$, 
		\[\sum_{y\in K^*}\chi_K(  -y \beta)\eta_K(y)^e =\sum_{y\in K^*}\eta_K(y)=0.\] For $\beta\ne0$,
	  \[
	\begin{aligned}
	\frac{G_q}{q_K}
	\sum_{y\in K^*}\chi_K(-y\beta)\eta_K(y)
	&=\frac{G_K^{e+1}}{q_K}\eta_K(-\beta)\\
	&=\delta_e\eta_K(\beta)q_K^{(e-1)/2}.
	\end{aligned}
	\]
	 This thus proves the lemma.
	 \end{proof}
\begin{lemma}\label{three}
	For any   $b\in\BF_q^*$, set  \[
	N_a(b)=\#\{x\in   \BF_q :
	\Tr(x^2)=\Tr(ax^2)=\Tr(bx)=0\}.
	\]
	Then\[
	N_a(b)=p^{m-3}
	-\delta(p-1)p^{m/2-2}Z_a(\beta(b)).
	\]
\end{lemma}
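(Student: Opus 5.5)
We count $N_a(b)$ by additive-character orthogonality over the three $\BF_p$-linear trace conditions:
\[
N_a(b)=\frac1{p^3}\sum_{u,v,w\in\BF_p}\sum_{x\in\BF_q}\chi_q\bigl((u+va)x^2+wbx\bigr).
\]
We split the sum according to whether $(u,v)=(0,0)$.

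\textbf{The terms with $(u,v)=(0,0)$.} These give $\frac1{p^3}\sum_w\sum_x\chi_q(wbx)$. Since $b\neq0$, the inner sum vanishes unless $w=0$, so this part contributes exactly $p^{m-3}$.

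\textbf{The terms with $(u,v)\neq(0,0)$.} Here $c=u+va\neq0$, and Lemma~\ref{exp} gives
\[
\sum_{x}\chi_q(cx^2+wbx)=\eta_q(c)G_q\,\chi_q\!\left(-\frac{w^2b^2}{4c}\right).
\]
As in the proof of Lemma~\ref{n0}, we write $c=\rho\alpha_j$ with $\rho\in\BF_p^*$ and $j\in\mathbb P^1(\BF_p)$. Since $m$ is even, $\eta_q(\rho)=1$. Moreover $\chi_q(-w^2b^2/(4\rho\alpha_j))=\zeta_p^{\,s\,T_j}$, where $s=-w^2/(4\rho)\in\BF_p$ and
\[
T_j:=\Tr\bigl(b^2/\alpha_j\bigr)=\Tr_{K/\BF_p}\bigl(\beta(b)/\alpha_j\bigr).
\]
The last equality holds because $\alpha_j\in K$ and $\Tr=\Tr_{K/\BF_p}\circ\Tr_{\BF_q/K}$. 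So the contribution of the point $j$ is
\[
\frac{G_q\,\eta_q(\alpha_j)}{p^3}\sum_{\rho\in\BF_p^*}\sum_{w\in\BF_p}\zeta_p^{-w^2T_j/(4\rho)}.
\]

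\textbf{Evaluating the double sum over $(\rho,w)$.} This is the step I expect to be the crux. If $T_j=0$, the sum equals $(p-1)p$. If $T_j\neq0$, the inner Gauss-type sum is $\eta_p(-T_j\rho^{-1})G_p$ (using $\eta_p(4)=1$). Summing $\eta_p(\rho)$ over $\rho\in\BF_p^*$ then gives $0$. Hence the double sum equals $(p-1)p\,\mathrm 1_{\{T_j=0\}}$.

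\textbf{Assembling.} Summing over $j$, with $\alpha_\infty=1$ and $\alpha_t=a-t$, the indicators reproduce exactly the definition~\eqref{Za}. Thus the nondegenerate part equals
\[
\frac{(p-1)p\,G_q}{p^3}\,Z_a(\beta(b)).
\]
Substituting $G_q=-\delta p^{m/2}$ from~\eqref{me1} yields $-\delta(p-1)p^{m/2-2}Z_a(\beta(b))$. Adding the $p^{m-3}$ term gives the claimed formula.

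The main care points are the vanishing of the $T_j\ne0$ terms via $\sum_{\rho}\eta_p(\rho)=0$, and checking that $\eta_q(\rho)=1$ for $\rho\in\BF_p^*$. The latter holds because $\BF_p^*$ consists of squares in $\BF_q$ when $m$ is even, so we never need $e$'s parity here.
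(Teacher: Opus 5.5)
Your proof is correct and follows essentially the same route as the paper. The paper likewise applies character orthogonality over the three trace conditions, obtains $p^{m-3}$ from the $(u,v)=(0,0)$ terms, applies Lemma~\ref{exp} to the rest, groups by $c=\rho\alpha_j$ with $\eta_q(\rho)=1$, and rewrites $\Tr(b^2/\alpha_j)$ as $\Tr_{K/\BF_p}(\beta(b)/\alpha_j)$. The one addition is that you justify the vanishing of the $T_j\neq0$ terms via $\sum_{\rho}\eta_p(\rho)=0$, which the paper asserts as ``clearly''.
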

\begin{proof}
	We have
	\[N_a(b)=\frac1{p^3}\sum_{u,v,s\in \BF_p}
	\sum_{x\in \BF_q}
	\chi_q((u+va)x^2+sbx).\]
	
	If $(u,v)=(0,0)$, then
	\begin{equation}\label{00}
	\sum_{s\in \BF_p}\sum_{x\in \BF_q}\chi_q(sbx)
	=
	\begin{cases}
	p^{m+1},&b=0,\\
	p^m,&b\ne0.
	\end{cases}
	\end{equation}
	Let $b\neq0$ and $c=u+va\neq0$, by Lemma~\ref{exp},
	\[
	\sum_x\chi_q(cx^2+sbx)
	=\eta_q(c)G_q\,
	\chi_q\!\left(-\frac{s^2b^2}{4c}\right).
	\]
	Take $c=\rho \alpha_j$, where $\rho\in\BF_p^*$. Then $\eta_q(\rho)=1$ and
	\begin{equation}\label{sq}
	\begin{aligned}
	&\sum_{\rho\in\BF_p^*}\sum_{s\in\BF_p}\eta_q(\rho \alpha_j)G_q
	\chi_q\!\left(-\frac{s^2b^2}{4\rho \alpha_j}\right)\\
	&\quad=\eta_q(\alpha_j)G_q
	\sum_{\rho\in\BF_p^*}\sum_{s\in\BF_p}
	\chi_p\!\left(-\frac{s^2}{4\rho}\Tr(\frac{b^2}{\alpha_j})\right).
	\end{aligned}
	\end{equation}
	Clearly, the sum of Eq.~\eqref{sq} $=p(p-1)\eta_q(\alpha_j)G_q$ if $\Tr(\frac{b^2}{\alpha_j})=0$, otherwise, $=0$. Notice that
	\[\Tr (b^2/\alpha_j)
	=\Tr_{K/\BF_p} (\beta(b)/\alpha_j).\]Combining with the part of $b\neq0$ in ~\eqref{00}, we obtain
	\[N_a(b)=p^{m-3}
	-\delta(p-1)p^{m/2-2}Z_a(\beta(b)).\qedhere
	\]
\end{proof}
For $b\in\BF_q$, write
\[
\mathbf c_b=(\Tr(bx))_{x\in D_a}.
\]
By Lemmas~\ref{n0} and ~\ref{three}, we have the following result.
\begin{theorem}\label{th:weight1}
	The code $\C_{D_a}$ has weight $\wt( \mathbf c_0)=0$ and
	\[
	\wt( \mathbf c_b)
	=w_a^{\rm even}(\beta(b))=(p-1)p^{m-3}
	-\delta(p-1)p^{m/2-2}\bigl(h(a)-Z_a(\beta(b))\bigr),\quad b\neq0
	\]
	Obviously, the weights of nonzero codes depend only on $\beta(b)=\Tr_{\BF_q/K}(b^2)=\beta$.
\end{theorem}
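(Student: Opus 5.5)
The weight of $\mathbf c_b$ is obtained by counting zero coordinates. For $b\in\BF_q$, the number of $x\in D_a$ with $\Tr(bx)=0$ equals $N_a(b)-1$, since $x=0$ satisfies all three trace conditions and has been removed from $D_a$. Hence
\[
\wt(\mathbf c_b)=\#D_a-(N_a(b)-1)=n_0-N_a(b).
\]
For $b=0$ every coordinate $\Tr(0\cdot x)$ vanishes, so $\wt(\mathbf c_0)=0$ trivially. For $b\neq 0$ we can therefore substitute the two closed formulas already established.

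First, insert Lemma~\ref{n0}, which gives $n_0=p^{m-2}-\delta(p-1)p^{m/2-2}h(a)$ for even $m$. Second, insert Lemma~\ref{three}, which for $b\in\BF_q^*$ gives $N_a(b)=p^{m-3}-\delta(p-1)p^{m/2-2}Z_a(\beta(b))$. Subtracting,
\[
n_0-N_a(b)=(p^{m-2}-p^{m-3})-\delta(p-1)p^{m/2-2}\bigl(h(a)-Z_a(\beta(b))\bigr).
\]
Since $p^{m-2}-p^{m-3}=(p-1)p^{m-3}$, this is exactly the stated expression $w_a^{\rm even}(\beta(b))$.

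The final remark, that the weight depends only on $\beta=\Tr_{\BF_q/K}(b^2)$, follows by inspection. The quantities $h(a)$ and $\delta$ do not involve $b$, and by \eqref{Za} the function $Z_a(\beta)$ depends on $b$ only through $\beta(b)$ via the indicators $\mathrm{1}_{\{\Tr_{K/\BF_p}(\beta/(a-t))=0\}}$.

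There is no real obstacle here. The only point needing care is bookkeeping: removing $x=0$ affects both counts, contributing $-1$ to each, so the two corrections cancel in the difference. One should also check that Lemma~\ref{three} is applied only for $b\neq0$, because its proof used the $b\neq0$ case of \eqref{00}.
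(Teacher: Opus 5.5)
Your proposal is correct and takes the same route as the paper, which obtains the theorem directly from Lemma~\ref{n0} and Lemma~\ref{three}. Your explicit bookkeeping $\wt(\mathbf c_b)=(n_0-1)-(N_a(b)-1)=n_0-N_a(b)$ is exactly the step the paper leaves implicit here and writes out in the odd-$m$ analogue, Theorem~\ref{th:weight3}.
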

\begin{example}
Take $p=3$, $m=6$, and let $a\in\BF_{3^3}\subset\BF_{3^6}$ be a root of
$X^3-X-1\in\BF_3[X]$. Then $K=\BF_3(a)=\BF_{3^3}$, $d=3$, and $e=2$.
An exhaustive computation gives
$n=104$. For $b=1$, $\beta(1)=2, Z_a(2)=1,\wt(\mathbf c_1)=72.$

\end{example}

Set
\[\widehat R_e(\beta)=R_e(\beta)-\mathrm{1}_{\{\beta=0\}}.\]
	Then $\widehat R_e(\beta)=\#\{b\in\BF_q^*:\ \beta(b)=\Tr_{\BF_q/K}(b^2)=\beta\}$. Define
\[\mathcal B_{\rm att}=\{\beta\in K:\widehat R_e(\beta)>0\},\]
	and
\begin{equation}\label{H}
\mathcal H=1+
\sum_{\substack{\beta\in\mathcal B_{\rm att}\\w_a^{\rm even}(\beta)=0}}
\widehat R_e(\beta).
\end{equation}
Thus Eq.~\eqref{H} is an explicit evaluation of $\#\ker\phi$. Since the kernel is a $\BF_p$-vector space, $\mathcal H$ must be a power of $p$. Therefore, we have
\begin{theorem}\label{th:weight2}With the same notations as above.
	\begin{itemize}
		\item The dimension of the code is
		\[\dim\C_{D_a}=m-\log_p \mathcal H.\]
		\item The   weight enumerator is
		  \[
		W_{\C_{D_a}}(z)=\sum_{w\ge0}A_wz^w,
		\]
		 where
		\begin{equation}\label{wei}A_w=
			\frac1{\mathcal H}\left(
			\mathrm{1}_{\{w=0\}}
			+\sum_{\substack{\beta\in\mathcal B_{\rm att}\\w_a^{\rm even}(\beta)=w}}
			\widehat R_e(\beta)
			\right),
			\qquad w\ge0.\end{equation}
		 In particular, if  $w=0$, then the expression inside the parentheses is exactly $\mathcal H$, so $A_0=1$.
		\item All actual non-zero weights are precisely \[\{w_a^{\rm even}(\beta):\beta\in\mathcal B_{\rm att},\ w_a^{\rm even}(\beta)>0\},\]
	and the exact minimum distance is \[d_{\min}
		=\min_{\substack{\beta\in\mathcal B_{\rm att}\\w_a^{\rm even}(\beta)>0}}
		w_a^{\rm even}(\beta).\]
	\end{itemize}
	
\end{theorem}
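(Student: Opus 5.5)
The plan is to treat $\phi:\BF_q\to\BF_p^{n}$, $c\mapsto\mathbf c_c$, as an $\BF_p$-linear map whose image is $\C_{D_a}$. The dimension then follows from rank–nullity once $\ker\phi$ is counted. The weight distribution follows by pushing forward the counting measure on $\BF_q$ through $\phi$: each fibre of $\phi$ over a codeword is a coset of $\ker\phi$, so it has exactly $\#\ker\phi$ elements. All the arithmetic input comes from Theorem~\ref{th:weight1}, which gives $\wt(\mathbf c_b)=w_a^{\rm even}(\beta(b))$ for $b\neq0$ with $\beta(b)=\Tr_{\BF_q/K}(b^2)$. We also use the definition of $\widehat R_e(\beta)$ as the number of $b\in\BF_q^*$ with $\beta(b)=\beta$.

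First I will identify the kernel. We have $b\in\ker\phi$ iff $\wt(\mathbf c_b)=0$, i.e.\ iff $b=0$ or $w_a^{\rm even}(\beta(b))=0$. I partition $\BF_q^*$ according to the value $\beta=\beta(b)\in K$. Only values $\beta\in\mathcal B_{\rm att}$ occur, and each such $\beta$ occurs exactly $\widehat R_e(\beta)$ times. Summing gives $\#\ker\phi=1+\sum_{\beta\in\mathcal B_{\rm att},\,w_a^{\rm even}(\beta)=0}\widehat R_e(\beta)=\mathcal H$, which is Eq.~\eqref{H}. Since $\ker\phi$ is an $\BF_p$-subspace, $\mathcal H$ is a power of $p$. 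Rank–nullity then gives $\dim\C_{D_a}=m-\log_p\mathcal H$.

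Next, for the weight enumerator, I count pairs. Since $\phi$ is $\mathcal H$-to-one onto $\C_{D_a}$, we get $A_w=\mathcal H^{-1}\#\{b\in\BF_q:\wt(\mathbf c_b)=w\}$. I split this count into $b=0$, which contributes $\mathrm 1_{\{w=0\}}$, and $b\ne0$. For $b\neq0$ I group by $\beta(b)$ as before, and the contribution is $\sum_{\beta\in\mathcal B_{\rm att},\,w_a^{\rm even}(\beta)=w}\widehat R_e(\beta)$. This yields Eq.~\eqref{wei}. For $w=0$ the bracket is literally the expression \eqref{H}, so $A_0=1$. The set of nonzero weights is the set of $w>0$ with $A_w>0$. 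Since every $\beta\in\mathcal B_{\rm att}$ has $\widehat R_e(\beta)>0$, this is exactly $\{w_a^{\rm even}(\beta):\beta\in\mathcal B_{\rm att},\,w_a^{\rm even}(\beta)>0\}$. The minimum distance is the minimum of this set.

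I expect no real obstacle; the argument is bookkeeping. The one point needing care is that the weight of $\mathbf c_b$ depends on $b$ only through $\beta(b)$; this is precisely Theorem~\ref{th:weight1}, which rests on Lemmas~\ref{n0} and~\ref{three}. The other point is to restrict to attained values, through $\mathcal B_{\rm att}$, so that the listed weights genuinely occur. The minimum in the final formula is over a nonempty set as long as $\C_{D_a}\neq\{0\}$, i.e.\ $\mathcal H<q$. I would note that this degenerate case has to be excluded, or else read $d_{\min}$ as vacuous.
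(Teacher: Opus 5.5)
Your proposal is correct and follows the paper's own route: the paper likewise reads Eq.~\eqref{H} as $\#\ker\phi$ by grouping nonzero $b$ according to $\beta(b)$ via Theorem~\ref{th:weight1}, uses that this count is a power of $p$ to get the dimension, and divides the fibre counts by $\mathcal H$ to get Eq.~\eqref{wei}. Your caveat that the minimum-distance formula is vacuous when $\mathcal H=q$ is a fair point the paper leaves implicit; this occurs for $m=2$, and for $m=4$ with $d=2$, where $D_a=\emptyset$.
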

\begin{example}
Take $p=3$, $m=6$, and let $a\in\BF_{3^3}\subset\BF_{3^6}$ be a root of
$X^3-X-1\in\BF_3[X]$. Then $K=\BF_3(a)=\BF_{3^3}$, $d=3$ and $e=2$.
An exhaustive computation gives
\[
\mathcal H=1,\qquad \dim\C_{D_a}=6,\qquad d_{\min}=54,
\]
and
\[
W_{\C_{D_a}}(z)
=1+52z^{54}+312z^{66}+208z^{72}+156z^{78}.
\]
\end{example}
Next we shall study the injectivity, low-dimensional degeneration and Pless power moment.
\begin{proposition}\label{pro:dim} 
	The following conclusions hold.
	\begin{enumerate}[1]
		\item If $m\geq6$, the map $\phi$ is necessarily injective, hence $\mathcal H=1$ and the dimension is $m$.
		\item If $m=2$, then $d=2,e=1$ and
		\[h(a)=0,\qquad D_a=\emptyset,\]thus the code is the zero code with length and dimension both equal to 0.
		\item If $m=4,d=2,e=2$, then $h(a)=p+1$ and   \[
		D_a=\emptyset.
		\]
		\item If $m=4,d=4,e=1$, then
\begin{equation*}
		\#D_a=(p-1)(p+1-h(a)).
		\end{equation*}The map $\phi$ is injective if and only if $\mathcal H=1$ in ~\eqref{H}. Furthermore, it is injective when $p\geq11$.
		\item When $n=\#D_a$ and $\#\ker(\phi)=\mathcal H$, Eq.~\eqref{wei} satisfies
		\begin{align*}
		\sum_wA_w & =p^m/\mathcal H \\
		\sum_wwA_w & =p^{m-1}(p-1)n/\mathcal H \\
		\sum_ww^2A_w
		& =p^{m-2}(p-1)^2n(n+1)/\mathcal H.
		\end{align*}
	\end{enumerate}
\end{proposition}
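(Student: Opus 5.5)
For item (1), I would show that no nonzero $b$ has $\wt(\mathbf c_b)=0$ by a size comparison using Theorem~\ref{th:weight1}. Since $|Z_a(\beta)|\le p+1$ and $|h(a)|\le p+1$, we get
\[\wt(\mathbf c_b)\ge (p-1)p^{m-3}-2(p-1)(p+1)p^{m/2-2}=(p-1)p^{m/2-2}\bigl(p^{m/2-1}-2(p+1)\bigr).\]
For $m\ge6$ the bracket is at least $p^2-2p-2>0$ for every odd $p$. Hence $w_a^{\rm even}(\beta)>0$ for every attained $\beta$, so by Eq.~\eqref{H} we get $\mathcal H=1$ and $\dim\C_{D_a}=m$.

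For (2), $m=2$ with $a\notin\BF_p$ forces $d=2$ and $e=1$. Then $1$ and $a$ span $\BF_q$ over $\BF_p$, so $\Tr(x^2)=\Tr(ax^2)=0$ gives $\Tr(yx^2)=0$ for all $y$. Non-degeneracy of the trace then forces $x=0$, so $D_a=\emptyset$, $n=0$ and $n_0=1$. Comparing with Lemma~\ref{n0}, $1=1-\delta(p-1)p^{-1}h(a)$, which yields $h(a)=0$.

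For (3), Lemma~\ref{n0}(1) gives $h(a)=p+1$. Here $\delta=\theta^2=1$, so $n=p^2-1-(p^2-1)=0$ and $D_a=\emptyset$.

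For (4), I would substitute $m=4$ and $\delta=1$ into Lemma~\ref{n0}, giving $n=p^2-1-(p-1)h(a)=(p-1)(p+1-h(a))$. Injectivity is equivalent to $\mathcal H=1$ by definition. For $p\ge11$ I would again use Theorem~\ref{th:weight1}: with $m=4$ one has $\wt(\mathbf c_b)=(p-1)\bigl(p-(h(a)-Z_a(\beta))\bigr)$, so it suffices to show $h(a)-Z_a(\beta)<p$. Now $h(a)-Z_a(\beta)=\sum_j\eta_q(\alpha_j)\bigl(1-\mathrm 1_{\{\Tr_{K/\BF_p}(\beta/\alpha_j)=0\}}\bigr)$ is a sum of $\pm1$ over those $j$ with $\Tr_{K/\BF_p}(\beta/\alpha_j)\ne0$. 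This is the \emph{main obstacle}: the trivial bound is $p+1$, so I need a nontrivial estimate. Since $e=1$, we have $\eta_q(a-t)=\eta_p(f_a(t))$ with $\deg f_a=4$. Weil's bound gives $|\sum_t\eta_p(f_a(t))|\le 3\sqrt p$, so $|h(a)|\le 1+3\sqrt p$. The indicator-weighted part, a sum over $t$ with $\Tr_K(\beta/(a-t))=0$, is a character sum over the zeros of a degree-$\le4$ rational condition, so it involves at most about $4$ terms. Together these give $|h(a)-Z_a(\beta)|\le 1+3\sqrt p+5$, which is $<p$ once $p\ge11$ (check: $11>6+3\sqrt{11}\approx15.9$ fails). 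So I would refine this by exploiting the sign structure or tighter counting; failing that, I would verify the small primes up to the threshold where $1+3\sqrt p+c<p$ directly.

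For (5), I would use the standard first Pless moments for a projective-free setting. Counting $\sum_{c\in\BF_q}\wt(\mathbf c_c)$ over the full preimage: for each $x\in D_a$, with $x\ne0$, $\Tr(cx)\ne0$ for exactly $(p-1)p^{m-1}$ values of $c$. For the second moment I would count, for pairs $(x,y)\in D_a^2$, the $c$ with both $\Tr(cx)$ and $\Tr(cy)$ nonzero. This count is $(p-1)^2p^{m-2}$ when $x,y$ are $\BF_p$-independent, and $(p-1)p^{m-1}$ when $y\in\BF_p^*x$. Since $D_a$ is closed under $\BF_p^*$, the second case contributes $n(p-1)$ pairs. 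Summing these and dividing by $\mathcal H$, since each codeword arises $\mathcal H$ times, gives the moments as stated once simplified.
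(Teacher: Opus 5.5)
Items (1)--(3) and (5) are correct. In (1) your bound $|h(a)-Z_a(\beta)|\le 2(p+1)$ is weaker than the paper's $p+1$, but it still suffices for $m\ge6$. In (2) your direct non-degeneracy argument (then back-solving $h(a)=0$ from Lemma~\ref{n0}) replaces the paper's evaluation $\sum_t\eta_p(f_a(t))=-1$. In (5) your pair count simplifies, via $n^2-n(p-1)+np=n(n+1)$, to the stated second moment; this is more explicit than the paper's one-line justification.

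The gap is the last claim of (4), injectivity for $p\ge11$. As you noticed, your estimate $|h(a)-Z_a(\beta)|\le 6+3\sqrt p$ only excludes $h(a)-Z_a(\beta)=p$ for $p\ge23$. Saying you would ``verify small primes directly'' leaves $p=11,13,17,19$ unproved. Both constants in your estimate are too lossy.

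\emph{The bound on $h(a)$.} The general Weil bound $(\deg f_a-1)\sqrt p=3\sqrt p$ is not sharp here. Since $f_a$ is monic of degree $4$, the curve $y^2=f_a(t)$ has genus one and two rational points at infinity. Hasse--Weil therefore gives $|h(a)|=\bigl|1+\sum_t\eta_p(f_a(t))\bigr|\le 2\sqrt p$. Equivalently, this is the bound $\#\mathscr X_a(\BF_p)=p+1-h(a)\ge p+1-2\sqrt p$ used in the paper.

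\emph{The bound on $Z_a(\beta)$.} For $\beta\ne0$, at most three $j\in\mathbb P^1(\BF_p)$ satisfy $\Tr_{K/\BF_p}(\beta/\alpha_j)=0$. Put $g_t(X)=(f_a(X)-f_a(t))/(X-t)$, so that $1/(a-t)=-g_t(a)/f_a(t)$. The condition at $t\in\BF_p$ is then the vanishing of $\Tr_{K/\BF_p}(\beta g_t(a))$, a nonzero polynomial in $t$ of degree $\le3$ whose $t^3$-coefficient is $\Tr_{K/\BF_p}(\beta)$. Hence the point $\infty$ is counted only when the degree drops, and $|Z_a(\beta)|\le3$.

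\emph{Conclusion.} Combining these, $|h(a)-Z_a(\beta)|\le 2\sqrt p+3<p$, and this holds exactly when $p\ge11$.

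\emph{A sharper version of your own route.} Your decomposition shows $h(a)-Z_a(\beta)$ is a sum of $\pm1$ over the $k$ indices with nonzero trace. It can equal the odd number $p$ only if $k=p$ and all those signs are $+1$. That forces $h(a)\ge p-1$, which $h(a)\le2\sqrt p$ already excludes for $p\ge7$.

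\emph{The paper's route.} The paper argues geometrically instead. A nonzero kernel element $b$ gives a hyperplane $\Tr(bx)=0$ containing all of $\mathscr X_a(\BF_p)$. By B\'ezout, a hyperplane meets the non-degenerate quartic curve $\mathscr X_a$ in at most $4$ points. Hasse--Weil gives more than $4$ rational points once $p\ge11$.
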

\begin{proof}
	(1) For $b\in\BF_q^*$, by Eqs.~\eqref{ha} and ~\eqref{Za}, we know \[
	|h(a)-Z_a(\beta(b))|\le p+1.
	\]
	By Eq.~\eqref{wei}, if $m\geq6$, then \[
	w_a^{\rm even}(\beta(b))\ge
	(p-1)p^{m/2-2}\bigl(p^{m/2-1}-(p+1)\bigr)>0.
	\]
	This implies that no non-zero parameter belongs to the kernel, i.e., $\mathcal H=1$.
	\smallskip
	
	(2) When $m=2$, $f_a$ is an irreducible quadratic polynomial, and \[
	\sum_{t\in\BF_p}\eta_p(f_a(t))=-1.
	\]By Eq.~\eqref{ha1}, $h(a)=0$ and then $n=0$. For $m=4,d=2$, we know
	\[n=p^{m-2}-1-\delta(p^2-1)p^{m/2-2}=p^2-1-(p^2-1)=0.\]
	Hence, the proof of (3) is also completed.
	
	(4) Now let   $m=4,d=4$ . By Lemma~\ref{n0}, then
	\[n=\#D_a=(p-1)(p+1-h(a)).\] In $\mathbb P(V_q)\cong\mathbb P^3_{\BF_p}$, let
	\[
	\mathscr X_a: \
	\Tr(x^2)=
	\Tr(ax^2)=0.
	\] Its $\BF_p$-rational points are exactly $D_a/\BF_p^*$, and then \[
	\#\mathscr X_a(\BF_p)=\frac{\#D_a}{p-1}=p+1-h(a).
	\]
	Below we show the geometric smoothness of $\mathscr X_a$. Take the algebraic closure $\overline{\BF}_p$, and denote \[a_1,a_2,a_3,a_4\] as the four $\BF_p$-conjugates of $a$. Since the finite field extension is separable and the degree of $a$ is $4$, they are distinct pairwise. Via the split isomorphism
	\[
	\BF_q\otimes_{\BF_p}\overline \BF_p\simeq\overline \BF_p^4,
	\]
	we can take coordinates $X_1,\ldots,X_4$ such that the two quadratic equations become
	\[
	Q_0=\sum_{i=1}^4X_i^2=0,\qquad
	Q_a=\sum_{i=1}^4a_iX_i^2=0.
	\]
	
	If $[X_1:\cdots:X_4]\in \mathscr X_a(\overline{\BF}_p)$  is a singular point, the Jacobian matrix has rank less than $2$, so there exists $(u,v)\ne(0,0)$ such that
	\[
	u\nabla Q_0+v\nabla Q_a=0.
	\]
	Since $p$ is odd, this is equivalent to
	\begin{equation}\label{31a}
	(u+va_i)X_i=0\qquad(1\le i\le4).
	\end{equation}
	If $v=0$, then $u\neq0$, which gives all $X_i=0$ from ~\eqref{31a}, this is impossible. If $v\neq0$, since the $a_i$
	are distinct, at most one index $j$ satisfies
	\[u+va_j=0,\]
	thus except for the possible $X_j$, all other coordinates are zero. Furthermore, from $Q_0=0$ we get $X_j^2=0$, which still implies all coordinates are zero, a contradiction. Therefore $\mathscr X_a$ is smooth over $\overline{\BF}_p$.
	
	Thus $\mathscr X_a$ is a smooth complete intersection of two quadric surfaces in $\mathbb P^3_{\BF_p}$.  It is geometrically connected, hence geometrically integral, by the adjunction formula, its degree is $4$ and its arithmetic genus is 1. The Hasse-Weil bound then gives
	\[
	\#\mathscr X_a(\BF_p)\ge p+1-2\sqrt p.
	\]
	When $p\geq11$,
	\[
	p+1-2\sqrt p>4.
	\]
	
	Furthermore, $\mathscr X_a$ is not contained in any hyperplane, its homogeneous ideal is the complete intersection ideal generated by two quadratic equations, which does not contain any non-zero linear homogeneous form. Therefore, if the intersection of any hyperplane with $\mathscr X_a$ is not the whole $\mathscr X_a$, by B\'{e}zout's theorem, it then has at most 4 geometric points, counted with multiplicity. If the map $\phi$ has a non-zero kernel element $b$, then the hyperplane
	\[
	\Tr(bx)=0
	\]
	contains all points of $\mathscr X_a(\BF_p)$. Therefore, once $\#\mathscr X_a(\BF_p)>4$,  these rational points cannot all lie on that hyperplane, and the evaluation map is necessarily injective. This proves the assertion for $p\geq11$.
	
	(5) The Pless power moment equations are the quotient of the parameter space by $\#\ker(\phi)=\mathcal H$. Therefore, the proof is finished.
\end{proof}

\begin{remark}
	When $m=d=4$, we obtain the following results through experiments for $p=3,5,7$.
	\begin{itemize}
		\item All $72$ degree-four parameters in $\BF_{3^4}$, the numbers of orbits are $2,4,6$,  respectively, and the corresponding numbers of parameters are $36, 24, 12$ respectively, the dimension is $2$ when the orbit number is $2$, and the dimension is $4$ for the rest.
		\item All $600$ degree-four parameters in $\BF_{5^4}$, the numbers of orbits are $2,4,6,8,10$,  respectively, and the corresponding numbers of parameters are $100, 160, 240, 80, 20$ respectively, the dimension is $2$ when the orbit number is $2$, and the dimension is $4$ for the rest.
		\item All $2352$ degree-four parameters in $\BF_{7^4}$, the numbers of parameters for orbit numbers $4, 6, 8, 10, 12$ are $504, 840, 336, 504, 168$ respectively, and the evaluation dimension is $4$ for all.
	\end{itemize}
\end{remark}
\subsection{The case for odd $m$}
\begin{lemma}\label{om1}
	Set
	\[n_0=\#\{x\in \BF_{q}:\Tr(x^2)=\Tr(ax^2)=0\}.
	\]
	Then
	\begin{equation}\label{mo2}
	n_0=p^{m-2},\qquad n=\#D_a=p^{m-2}-1.
	\end{equation}
\end{lemma}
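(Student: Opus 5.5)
I will follow the same character-sum approach as in the proof of Lemma~\ref{n0}. Orthogonality of additive characters of $\BF_p$ gives
\[
n_0=\frac1{p^2}\sum_{u,v\in\BF_p}\sum_{x\in\BF_q}\chi_q\bigl((u+va)x^2\bigr)
=p^{m-2}+\frac1{p^2}\sum_{(u,v)\neq(0,0)}\sum_{x\in\BF_q}\chi_q\bigl((u+va)x^2\bigr).
\]
Since $a\notin\BF_p$, each $c=u+va$ with $(u,v)\neq(0,0)$ is nonzero. Lemma~\ref{exp} (with $b_1=b_0=0$) then evaluates the inner sum as $\eta_q(c)G_q$. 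The statement therefore reduces to showing
\[
\sum_{(u,v)\neq(0,0)}\eta_q(u+va)=0.
\]

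To prove this vanishing, I will group the nonzero pairs into projective classes, as in Lemma~\ref{n0}. Each such $c$ can be written uniquely as $\rho\alpha_j$ with $\rho\in\BF_p^*$ and $j\in\mathbb P^1(\BF_p)$, where $\alpha_\infty=1$ and $\alpha_t=a-t$. The key difference from the even case is the factor $\eta_q(\rho)$. For $\rho\in\BF_p^*$ we have $\eta_q(\rho)=\eta_p(N_{\BF_q/\BF_p}(\rho))=\eta_p(\rho^m)=\eta_p(\rho)$, because $m$ is odd. Hence
\[
\sum_{(u,v)\neq(0,0)}\eta_q(u+va)=\sum_{j\in\mathbb P^1(\BF_p)}\eta_q(\alpha_j)\sum_{\rho\in\BF_p^*}\eta_p(\rho)=0,
\]
since $\eta_p$ is a nontrivial character of $\BF_p^*$. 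This gives $n_0=p^{m-2}$. Removing $x=0$, which always lies in the solution set, gives $n=\#D_a=p^{m-2}-1$.

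There is no real obstacle here. The only point needing care is the identity $\eta_q|_{\BF_p^*}=\eta_p^m$, which follows from the Hasse--Davenport/norm relation already used in Lemma~\ref{quadratic}, and which becomes $\eta_p$ when $m$ is odd. As a cross-check, the same cancellation can be seen more directly: for fixed $v$, the sum over $u$ and over the scalings $(u,v)\mapsto(\rho u,\rho v)$ already averages $\eta_p(\rho)$ to zero. This confirms that the count is independent of $d$, $e$ and $h(a)$, in contrast with Lemma~\ref{n0}.
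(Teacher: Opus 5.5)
Your proposal is correct and follows essentially the same route as the paper: expand $n_0$ by additive-character orthogonality, evaluate each nonzero-coefficient inner sum as $\eta_q(u+va)G_q$, group the pairs $(u,v)\neq(0,0)$ into $\BF_p^*$-classes, and use $\eta_q(\lambda)=\eta_p(\lambda)$ for odd $m$, so that $\sum_{\lambda\in\BF_p^*}\eta_p(\lambda)=0$ makes every class contribute zero. You also justify $\eta_q(\rho)=\eta_p(N_{\BF_q/\BF_p}(\rho))=\eta_p(\rho)^m$ and the uniqueness of the decomposition $c=\rho\alpha_j$, which the paper leaves implicit; the underlying argument is the same.
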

\begin{proof}
	\begin{align*}
	n_0&=p^{-2}\sum_{u,v\in\BF_p}
	\sum_{x\in \BF_{q}}\chi_q((u+va)x^2)\\
	&=p^{m-2}+p^{-2}\sum_{(u,v)\in\BF_p\times\BF_p\setminus\{(0,0)\}}
	\sum_{x\in \BF_{q}}\chi_q((u+va)x^2).
	\end{align*}
	Grouping $(u,v)\in\BF_p\times\BF_p\setminus\{(0,0)\}$ by $\mathbb P^1(\BF_p)$, each group can be written as $\lambda(u_0,v_0)$, then
	\[
	G_{q}\eta_{q}(u_0+v_0a)
	\sum_{\lambda\in\BF_p^*}\eta_{p}(\lambda)=0.
	\]
	Hence $n_0=p^{m-2}$.
\end{proof}

By Lemma~\ref{bb}, the fiber sizes of the map $b\mapsto\Tr_{\BF_{q}/K}(b^2)$ are
\begin{equation}\label{mo10}
\begin{aligned}
R_e(0)&=q_K^{e-1},\\
R_e(\beta)&=q_K^{e-1}
+\delta_e\eta_K(\beta)q_K^{(e-1)/2},\quad \beta\ne0.
\end{aligned}
\end{equation}

For $b\neq0$, define
\begin{equation}\label{mo5}
S_a(b)=
\sum_{[u:v]\in\mathbb P^1(\BF_p)}
\eta_{q}(u+va)\,
\eta_{p}\!\left(
\Tr(\frac{b^2}{u+va})
\right)
\end{equation}

Note that $\eta_q(\lambda)=\eta_p(\lambda)^m=\eta_p(\lambda)$ for $\lambda\in\BF_p^*$, if the projective representative is multiplied by $\lambda$, the two quadratic character factors in ~\eqref{mo5} are multiplied by $\eta_p(\lambda)$ and $\eta_p(\lambda^{-1})$ respectively, so the product remains unchanged. This means that $S_a(b)$ is independent of the choice of representative element.
\begin{lemma}\label{om2}
	 For $b\in\BF_q^*$, set
	 \[N_a(b)=\#\{x\in \BF_{q}:
	\Tr(x^2)=\Tr(ax^2)
	=\Tr(bx)=0\}.\]
	Then
	\begin{equation}\label{mo6}
	N_a(b)=p^{m-3}+(p-1)\varepsilon_mp^{(m-5)/2}S_a(b).
	\end{equation}
\end{lemma}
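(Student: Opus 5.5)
We follow the proof of Lemma~\ref{three}, now for odd $m$. By additive-character orthogonality,
\[
N_a(b)=\frac1{p^3}\sum_{u,v,s\in\BF_p}\sum_{x\in\BF_q}\chi_q\bigl((u+va)x^2+sbx\bigr).
\]
The terms with $(u,v)=(0,0)$ contribute $p^m/p^3=p^{m-3}$ by Eq.~\eqref{00}, since $b\ne0$. For $(u,v)\ne(0,0)$ put $c=u+va\ne0$ and write $c=\rho\alpha$, where $\rho\in\BF_p^*$ and $\alpha=u_0+v_0a$ is a fixed representative of $[u_0:v_0]\in\mathbb P^1(\BF_p)$. Lemma~\ref{exp} gives
\[
\sum_x\chi_q(cx^2+sbx)=\eta_q(\rho\alpha)G_q\,\chi_p\!\Bigl(-\frac{s^2}{4\rho}T\Bigr),\qquad T:=\Tr(b^2/\alpha).
\]
The key difference from the even case is that $\eta_q(\rho)=\eta_p(\rho)$, because $m$ is odd.

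For a fixed projective point we therefore need
\[
\Sigma:=\sum_{\rho\in\BF_p^*}\sum_{s\in\BF_p}\eta_p(\rho)\,\chi_p\bigl(-s^2T/(4\rho)\bigr).
\]
We split the inner sum into $s=0$ and $s\ne0$.

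\textbf{The term $s=0$.} It gives $\sum_\rho\eta_p(\rho)=0$.

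\textbf{The case $T=0$.} Then the inner sum over $s$ is $p$ for every $\rho$, so $\Sigma=p\sum_\rho\eta_p(\rho)=0$.

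\textbf{The case $T\ne0$.} We evaluate the full sum over $s\in\BF_p$ via Lemma~\ref{exp} over $\BF_p$:
\[
\sum_{s}\chi_p\bigl(-s^2T/(4\rho)\bigr)=\eta_p(-T/\rho)\,G_p=\eta_p(-T)\eta_p(\rho)G_p.
\]
Multiplying by $\eta_p(\rho)$ and summing over $\rho$ gives $\Sigma=(p-1)\eta_p(-T)G_p$.

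Both cases are captured by $\Sigma=(p-1)\eta_p(-1)\eta_p(T)G_p$, since $\eta_p(0)=0$. Summing over $\mathbb P^1(\BF_p)$ then yields
\[
N_a(b)=p^{m-3}+\frac{(p-1)\theta\,G_qG_p}{p^3}\,S_a(b),
\]
where $S_a(b)$ is exactly Eq.~\eqref{mo5}. It is well defined by the remark preceding the lemma.

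It remains to identify the constant. By Lemma~\ref{lemma 1}, $G_qG_p=(-1)^{m-1}G_p^{m+1}=G_p^{m+1}$, because $m$ is odd. Since $G_p^2=\eta_p(-1)p=\theta p$ and $(m+1)/2$ is an integer,
\[
G_qG_p=\theta^{(m+1)/2}p^{(m+1)/2}.
\]
Hence
\[
\frac{\theta\,G_qG_p}{p^3}=\theta^{(m+3)/2}p^{(m+1)/2-3}=\varepsilon_m\,p^{(m-5)/2},
\]
which gives Eq.~\eqref{mo6}.

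The only delicate step is this last sign bookkeeping: tracking the factor $\eta_p(-1)$ produced by $\eta_p(-T)$ together with the Gauss-sum identities, so that the combined sign is exactly $\varepsilon_m=\theta^{(m+3)/2}$.
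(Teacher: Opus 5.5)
Your proposal is correct and follows the paper's proof step by step. Both use character orthogonality, group the nonzero $(u,v)$ by points of $\mathbb P^1(\BF_p)$ using $\eta_q(\rho)=\eta_p(\rho)$ for odd $m$, split into $T=0$ and $T\neq 0$, and identify the constant through $G_q=G_p^m$ and $G_p^2=\theta p$, giving $\theta G_qG_p/p^3=\varepsilon_m p^{(m-5)/2}$. Your separate treatment of the $s=0$ term is harmless but unnecessary, since you then evaluate the full sum over $s$ anyway.
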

\begin{proof}
	For $c=u+va\neq0$ and $w\in\BF_p$, then
	\[
	\begin{aligned}
	\sum_{x\in \BF_q}\chi_q(cx^2+wbx)
	&=\eta_q(c)G_q\chi_q\!\left(-\frac{w^2b^2}{4c}\right)\\
	&=\eta_q(c)G_q
	\chi_p\!\left(-\frac{w^2}{4}\Tr(\frac{b^2}{c})\right).
	\end{aligned}
	\]
	
	Fixing the projective representative $\alpha_0=u_0+v_0a$, and let $c=\lambda\alpha_0$ and
	\[
	T:=\Tr(b^2/\alpha_0),
	\]
	
	If $T=0$, then
	\[\sum_{w\in\BF_p}\sum_{\lambda\in\BF_p^*}\eta_{q}(\lambda \alpha_0)G_{q}
	=pG_{q}\eta_q(\alpha_0)\sum_{\lambda\in\BF_p^*}\eta_{q}(\lambda)=0.\]
	
	If $T\neq0$, then
	\[
	\sum_{w\in\BF_p}
	\zeta_{p}^{-w^2T/(4\lambda)}
	=\eta_{p}(-T/\lambda)G_{p}
	=\theta\eta_{p}(T)\eta_{p}(\lambda)G_p.
	\] Note that $\eta_{q}(\lambda \alpha_0)=\eta_{p}(\lambda)\eta_{q}(\alpha_0)$, so the total contribution of this projective class is
	 \[
	( p-1)\theta\eta_{q}(\alpha_0)\eta_{p}(T)G_{q}G_p.
	\]
	  Multiplying by the orthogonality factor  $p^{-3}$  and using
		$G_q=G_p^m$, $G_p^2=\theta p$, gives
		$p^{-3}\theta G_qG_p=\varepsilon_mp^{(m-5)/2}$. This  yields Eq.~\eqref{mo6}.
\end{proof}

For $\beta\in K$, write
\begin{equation}\label{mo8}
s_a(\beta)
=\sum_{[u:v]\in\mathbb P^1(\BF_p)}\eta_K(u+va)
\cdot\eta_p\!\left(\Tr_{K/\BF_p}\frac{\beta}{u+va}\right).
\end{equation}Since $u+va\in K$, \[
\Tr
(\frac{b^2}{u+va})
=
\Tr_{K/\BF_p}
\left(
\frac{\Tr_{\BF_{q}/K}(b^2)}{u+va}
\right).
\]
Note that $2\nmid e$, $\eta_{q}(c)=\eta_K(c)^e=\eta_K(c)$, thus
\begin{equation*}
S_a(b)=s_a\!\left(\Tr_{\BF_{q}/K}(b^2)\right).
\end{equation*}
\begin{theorem}\label{th:weight3}
	If $m\geq5$, then $\dim(\C_{D_a})=m$, and
	\begin{equation*}
	\wt(\mathbf c_b)=
	\begin{cases}
	0, & b=0,\\
	(p-1)\bigl[p^{m-3}-\varepsilon_m p^{(m-5)/2}S_a(b)\bigr],
	& b\neq0.
	\end{cases}
	\end{equation*}
	The weight enumerator is
	\begin{equation*}
	W_{\C_{D_a}}(z)
	=1+\bigl(q_K^{e-1}-1\bigr)z^{(p-1)p^{m-3}}
	+\sum_{\beta\in K^*}
	\left(q_K^{e-1}+\delta_e\eta_K(\beta)q_K^{(e-1)/2}\right)
	z^{w_a^{\rm odd}(\beta)}.
	\end{equation*}
	where \begin{equation*}
	w_a^{\rm odd}(\beta)
	=(p-1)p^{m-3}
	-(p-1)\varepsilon_m p^{(m-5)/2}s_a(\beta).
	\end{equation*}
	
	Furthermore, if we take \[
	\begin{aligned}
	\mathcal B_o&=
	\begin{cases}
	\{\beta\in K^*:\eta_K(\beta)=1\},&e=1,\\
	K,&e\ge3,
	\end{cases}\\
	M_a&=\max_{\beta\in\mathcal B_o}\varepsilon_ms_a(\beta).
	\end{aligned}
	\]
	then the exact minimum distance of $\C_{D_a}$ is
	\begin{equation}\label{mo13}
	d_{\min}
	=(p-1)p^{m-3}
	-(p-1)p^{(m-5)/2}M_a.
	\end{equation}
\end{theorem}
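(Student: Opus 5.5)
The weight formula follows from Lemmas~\ref{om1} and~\ref{om2}. For $b\neq0$ the number of zero coordinates of $\mathbf c_b$ is $N_a(b)-1$, since $x=0$ is removed from $D_a$. So
\[
\wt(\mathbf c_b)=n-(N_a(b)-1)=p^{m-2}-p^{m-3}-(p-1)\varepsilon_m p^{(m-5)/2}S_a(b),
\]
which is the stated expression. Using $S_a(b)=s_a(\Tr_{\BF_q/K}(b^2))$, the weight depends only on $\beta=\beta(b)$ and equals $w_a^{\rm odd}(\beta)$. For $\beta=0$ every term of $s_a(0)$ has $\eta_p(0)=0$, so $w_a^{\rm odd}(0)=(p-1)p^{m-3}$.

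For the dimension I will show that every nonzero weight is positive when $m\ge5$. Each of the $p+1$ terms of $s_a(\beta)$ has absolute value at most $1$, so $|S_a(b)|\le p+1$. It then suffices that
\[
p^{m-3}>(p+1)p^{(m-5)/2},\quad\text{i.e.}\quad p^{(m-1)/2}>p+1 .
\]
This holds for $m\ge5$, since then $p^{(m-1)/2}\ge p^2>p+1$. Hence $\phi$ is injective, $\mathcal H=1$, and $\dim\C_{D_a}=m$.

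The weight enumerator comes from grouping $b\in\BF_q^*$ by the value of $\beta(b)$, with fiber sizes given by~\eqref{mo10}. The fiber over $\beta=0$ has $R_e(0)-1=q_K^{e-1}-1$ nonzero elements, all of weight $(p-1)p^{m-3}$. Each $\beta\in K^*$ contributes $R_e(\beta)=q_K^{e-1}+\delta_e\eta_K(\beta)q_K^{(e-1)/2}$ codewords of weight $w_a^{\rm odd}(\beta)$. Summing gives the displayed $W_{\C_{D_a}}(z)$.

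For the minimum distance I take the minimum of $w_a^{\rm odd}(\beta)$ over the attained $\beta$, that is, over those with positive fiber size. Minimizing the weight is the same as maximizing $\varepsilon_m s_a(\beta)$, which gives~\eqref{mo13}. The main point is identifying the attained set $\mathcal B_o$.
\begin{itemize}
\item If $e=1$, then $\beta(b)=b^2$, so the attained values are exactly the nonzero squares of $K$. This is consistent with $R_1(\beta)=1+\delta_1\eta_K(\beta)$: since $q_K=q$ with $m$ odd, one has $\delta_1=\eta_K(-1)^2=1$, so the fiber size is $2$ on squares and $0$ on nonsquares. Here $\beta=0$ is not attained.
\item If $e\ge3$, then $R_e(\beta)\ge q_K^{e-1}-q_K^{(e-1)/2}>0$ for every $\beta\in K^*$. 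Also $R_e(0)-1=q_K^{e-1}-1>0$, so $\beta=0$ is attained. Thus every $\beta\in K$ is attained, and $\beta=0$ contributes the value $s_a(0)=0$ to the maximum.
\end{itemize}
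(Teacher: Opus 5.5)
Your proof is correct and follows the paper's argument step for step. You get the weight from Lemmas~\ref{om1} and~\ref{om2}, injectivity from $|s_a(\beta)|\le p+1$ and $p^{(m-1)/2}>p+1$, the enumerator by grouping $b\in\BF_q^*$ according to the fiber sizes in~\eqref{mo10}, and the minimum distance by maximizing $\varepsilon_m s_a(\beta)$ over the attained set $\mathcal B_o$. Your handling of $\beta=0$ is slightly more careful than the paper's: you check $R_e(0)-1=q_K^{e-1}-1>0$ for $e\ge3$ (the paper only records $R_e(0)\ge1$, which by itself does not yield a nonzero $b$), and you note that $\beta=0$ is not attained when $e=1$.
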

\begin{proof}
	By Lemmas~\ref{om1} and ~\ref{om2}, for $b\neq0$
	\[
	\begin{aligned}
	\wt(\mathbf c_b)
	&=n-(N_a(b)-1)=p^{m-2}-N_a(b)\\
	&=(p-1)p^{m-3}
	-(p-1)\varepsilon_mp^{(m-5)/2}S_a(b).
	\end{aligned}
	\]
	Clearly, $|s_a(\beta)|\le p+1$. For $m\geq5$,
	\[
	\wt(\mathbf c_b)\ge
	(p-1)p^{(m-5)/2}
	\bigl(p^{(m-1)/2}-(p+1)\bigr)>0
	\]holds for all $b\neq0$. Therefore the map
	\[\phi:\quad
	\BF_q\longrightarrow\mathbb F_p^{\#D_a},\qquad
	c\longmapsto(\Tr(cx))_{x\in D_a}
	\]is injective. Then $\dim(\C_{D_a})=m$.


	Finally, the reachable set required to determine the minimum distance is given by Eq.~\eqref{mo10}.  Note that $R_e(\beta)\geq0$ for $e\geq1$, if $e=1$, then for $\beta\neq0$,
	\[
	R_1(\beta)=1+\eta_K(\beta)
	\]
	equals $2$ on squares and $0$ on non-squares. If $e\geq3$, then for $\beta\neq0$,
	\[
	  R_e (\beta)
	=q_K^{e-1}+\delta_e\eta_K(\beta)q_K^{(e-1)/2}
	\ge   q_K ^{e-1}  -q_K ^{(e-1)/2}>0.
	\]In addition, $R_e(0)=q_K^{e-1}\geq1$. Hence all $\beta\in K$ are attained by some non-zero $b$, i.e., $\mathcal{B}_o$ is precisely the set of $\beta$-values corresponding to non-zero parameters $b$, and taking the maximum reachable $s_a(\beta)$ in ~\eqref{mo8} yields ~\eqref{mo13}.
\end{proof}

\begin{remark}
	If  $d=m, e=1$, then ~\eqref{mo10} degenerates into
	\[
	R_1(0)=1,\qquad R_1(\beta)=1+\eta_{q}(\beta),
	\]that is, only non-zero squares $\beta=b^2$ are reachable, each having two preimages. Therefore, in this special case, the maximization in ~\eqref{mo13} must also be carried out solely over non-zero squares.
\end{remark}
\begin{example}
	Take $p=3$, $m=5$, and let $a\in\BF_{3^5}$ be a root of the irreducible
	polynomial $X^5+2X+1\in\BF_3[X]$. Then $K=\BF_3(a)=\BF_{3^5}$,
	$d=5$, and $e=1$. An exhaustive computation gives
	\[
	n=26,\qquad \dim\C_{D_a}=5,\qquad d_{\min}=10,
	\]
	and
	\[
	W_{\C_{D_a}}(z)=1+6z^{10}+14z^{12}+40z^{14}
	+46z^{16}+56z^{18}+50z^{20}
	+20z^{22}+10z^{24}.
	\]
\end{example}
\begin{theorem}
	Suppose $m=3$, the following holds.
	\begin{enumerate}[1]
		\item $D_a$ is exactly a free $\BF_p^*$-multiplicative orbit, i.e., for any $r\in D_a$
		\[D_a=\{t\cdot r:t\in\BF_p^*\}.\]
		\item The kernel of the map $\phi$ is
		\begin{equation}\label{ker}\ker\phi=\{b\in \BF_{p^3}:\Tr(br)=0\},\end{equation}which is a $2$-dimensional $\BF_p$-hyperplane of size $p^2$.
		\item $\C_{D_a}$ is a $[p-1,1,p-1]_p$ code, whose weight enumerator is
		\begin{equation}\label{mo14}
		W_{\C_{D_a}}(z)=1+(p-1)z^{p-1}.
		\end{equation}
		Counting by parameter $b$, the frequency of zero weight is $p^2$, the frequency of weight $p-1$ is $p^2(p-1)$, and  the minimum distance is $p-1$.
	\end{enumerate}
\end{theorem}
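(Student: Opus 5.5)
For $m=3$, Lemma~\ref{om1} gives $n=\#D_a=p^{m-2}-1=p-1$, and this count drives the whole argument. Since $a\notin\BF_p$ and $m=3$ is prime, $K=\BF_q$, so $d=3$ and $e=1$, although we do not need this. The set $D_a$ is stable under multiplication by $\BF_p^*$: if $\Tr(x^2)=\Tr(ax^2)=0$, then $\Tr((tx)^2)=t^2\Tr(x^2)=0$, and likewise $\Tr(a(tx)^2)=0$. The action is free because $tx=x$ with $x\neq0$ forces $t=1$. Each orbit therefore has exactly $p-1$ elements, and since $\#D_a=p-1$ there is exactly one orbit. Because $D_a$ is nonempty, we may pick any $r\in D_a$ and obtain $D_a=\{tr:t\in\BF_p^*\}$. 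This proves (1).

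For (2), take $x=tr$. Then $\Tr(bx)=t\Tr(br)$, so the codeword $\mathbf c_b$ vanishes if and only if $\Tr(br)=0$. This gives $\ker\phi=\{b:\Tr(br)=0\}$. Since $r\neq0$, the map $b\mapsto\Tr(br)$ is a nonzero $\BF_p$-linear functional on $\BF_{p^3}$, so its kernel is a hyperplane of size $p^2$.

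For (3), the code has dimension $3-2=1$ and length $p-1$. For $b\notin\ker\phi$, the codeword is $(t\Tr(br))_{t\in\BF_p^*}$, and every coordinate is nonzero, so its weight is $p-1$. Hence $\C_{D_a}$ is a $[p-1,1,p-1]_p$ code, and the weight enumerator is $1+(p-1)z^{p-1}$. Counting by the parameter $b$, weight $0$ occurs for the $p^2$ elements of $\ker\phi$. Weight $p-1$ occurs for the remaining $p^3-p^2=p^2(p-1)$ elements.

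The main point to check is that Lemma~\ref{om1} really applies at $m=3$. Its derivation used only the vanishing of $\sum_{\lambda\in\BF_p^*}\eta_p(\lambda)$, which holds for any odd $m$, so $n=p-1$ and $D_a$ is nonempty. Everything else is elementary.
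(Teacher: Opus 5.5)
Your proof is correct and follows the paper's argument essentially step for step. Both use $\#D_a=p-1$ from Lemma~\ref{om1}, the free action of $\BF_p^*$ to force a single orbit, and then $\mathbf c_b=(t\Tr(br))_{t\in\BF_p^*}$ to read off the kernel and the weight enumerator; the paper's only extra content is a side remark that formula~\eqref{mo6} for $N_a(b)$ stays consistent at $m=3$, which the statement does not need.
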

\begin{proof}
	By Eq.~\eqref{mo2}, $\#D_a=p-1$. The scalar multiplication action is free and preserves the two trace constraints, so there is only one orbit. Thus,
	\[
	\mathbf c_b=(t\Tr(br))_{t\in\BF_p^*}.
	\]
	This immediately yields ~\eqref{ker} and ~\eqref{mo14}.
	
	Although Eq.~\eqref{mo6}   contains  $p^{-1}$ when $m=3$, there is no integrality  problem. In this case,
	\[
	S_a(b)=
	\begin{cases}
	p\varepsilon_3=p\eta_p(-1),& 0\neq b\in\ker(\phi),\\
	0,& b\notin\ker(\phi).
	\end{cases}
	\]
	When   $p\equiv1$ or $3\pmod4$, the value inside the kernel is $p$ or $-p$, and substituting into the above weight formula yields weights 0 and $p-1$, respectively.
\end{proof}

\subsection{Illustrative codes}
In this subsection, we shall present two classes of illustrative codes corresponding to the cases $2\mid e$ and $d=3$.
\smallskip

\noindent\textbf{A. The case for even $e$}
\smallskip

When $2\mid e$, we get $d\mid m/2$, and $\eta_q(y)=\eta_K(y)^e=1$ for any $y\in K^*$. For $\beta\in K$, by Eq.~\eqref{Za},
\begin{equation}\label{Za1}
Z_a(\beta)
=\mathrm{1}_{\{\Tr_{K/\BF_p}(\beta)=0\}}+\sum_{t\in\BF_p}
\mathrm{1}_{\{\Tr_{K/\BF_p}(\beta/(a-t))=0\}}.
\end{equation}
Define
\[
\Omega=\left\{\frac1a,\ 1,\ \frac1{1+\lambda a}:\lambda\in\BF_p^*\right\}.
\]
 This set represents the same $p+1$ trace-zero conditions as
	$\{1\}\cup\{(a-t)^{-1}:t\in\BF_p\}$ up to multiplication by elements of $\BF_p^*$.
 Note that all elements in set $\Omega$ are distinct from each other and $\#\Omega=p+1$. For $b\in \BF_q$, set
\[
z(b)=\#\{\omega\in\Omega:\Tr(\omega b^2)=0\}.
\]
We claim that $Z_a(\beta)=z(b)$  for $\beta=\Tr_{\BF_q/K}(b^2)$ . From Lemma~\ref{bb}, we know
\[
\begin{aligned}
R_e(0)&\geq q_K^{e-1}-(q_K-1)q_K^{e/2-1}>0,\\
R_e(\beta)&\ge q_K^{e-1}-q_K^{e/2-1}>0\qquad(\beta\ne0).
\end{aligned}
\]
This means that for any $\beta\in K$, there must exist $b\in\BF_q$ such that
\[\beta=\Tr_{\BF_q/K}(b^2).\]
Replacing $\beta$ with $\Tr_{\BF_q/K}(b^2)$ in ~\eqref{Za1} gives $Z_a(\beta)=z(b)$ , since multiplying a trace coefficient by a nonzero element of $\BF_p$ does not change whether its trace is zero .

By Lemmas~\ref{n0} and ~\ref{three},
\[
\begin{aligned}
h(a)&=p+1,\\
n&=p^{m-2}-\delta(p^2-1)p^{m/2-2}-1,\\
N_a(b)&=p^{m-3}-\delta(p-1)p^{m/2-2}z(b).
\end{aligned}
\]
We further obtain the weight
\[\wt(\mathbf c_b)=(p-1)(p^{m-3}-\delta p^{m/2-2}(p+1-z(b))) \qquad(b\ne0) .\]
The following lemma is very useful for determining the frequency.

\begin{lemma}
	For any $S\subseteq\Omega$, set $k=\# S$ and
	\[
	E_k=\#\{b\in \BF_q:\Tr(\omega b^2)=0\text{ for all }\omega\in S\}.
	\]
	Then
	\[
	\dim_{\BF_p}\langle S\rangle=\min\{d,k\}.
	\]
	Furthermore, the number $E_k$ depends only on $k$ and
\begin{equation*}
	E_k=p^{m-r_k}-\delta(p^{r_k}-1)p^{m/2-r_k},
	\end{equation*}	where $r_k=\min\{d,k\}$.
\end{lemma}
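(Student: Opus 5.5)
The plan is to prove the two claims separately. The dimension statement follows from a polynomial argument in $a$. The count $E_k$ follows from a character sum over the $\BF_p$-span $\langle S\rangle$, as in Lemma~\ref{n0}, using that $\eta_q$ is trivial on $K^*$ when $2\mid e$.

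\emph{Step 1: normalize $\Omega$.} Up to nonzero $\BF_p$-multiples, which change neither spans nor the conditions $\Tr(\omega b^2)=0$, the elements of $\Omega$ are
\[
1,\qquad \tfrac1a=\tfrac1{a-0},\qquad \tfrac1{1+\lambda a}=\lambda^{-1}\tfrac{1}{a-t}\quad\text{with } t=-\lambda^{-1}\ne0 .
\]
So $\Omega$ corresponds to $\{1\}\cup\{(a-t)^{-1}:t\in\BF_p\}$. All these elements lie in $K$, hence $\dim\langle S\rangle\le\min\{d,k\}$.

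\emph{Step 2: independence for $k\le d$.} Write $S$ as possibly $1$ together with $(a-t_1)^{-1},\dots,(a-t_\ell)^{-1}$ for distinct $t_i$. Suppose
\[
c_0+\sum_i c_i/(a-t_i)=0,\qquad c_0,c_i\in\BF_p .
\]
Multiply by $\prod_i(a-t_i)$. This gives $P(a)=0$, where
\[
P(X)=c_0\prod_i(X-t_i)+\sum_i c_i\prod_{j\ne i}(X-t_j)\in\BF_p[X]
\]
has degree at most $\ell\le k-1<d$ if $1\in S$. If $1\notin S$ then $c_0=0$ and the degree is at most $\ell-1<d$. Since $f_a$ has degree $d$, this forces $P\equiv0$.

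Evaluating $P$ at $X=t_j$ gives $c_j\prod_{i\ne j}(t_j-t_i)=0$, so $c_j=0$ for every $j$. Then $c_0=0$ as well. Hence $k$ elements with $k\le d$ are independent.

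If $k>d$, any $d$-subset of $S$ already spans a $d$-dimensional subspace of $K$, so $\langle S\rangle=K$. Therefore $\dim\langle S\rangle=\min\{d,k\}=r_k$.

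\emph{Step 3: the count $E_k$.} The map $y\mapsto\Tr(yb^2)$ is $\BF_p$-linear in $y$. So the conditions $\Tr(\omega b^2)=0$ for all $\omega\in S$ are equivalent to $\Tr(yb^2)=0$ for all $y\in W:=\langle S\rangle$, which has dimension $r_k$.

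By orthogonality, summing over the linear functionals $y\mapsto\Tr(yb^2)$ restricted to $W$,
\[
E_k=\frac{1}{p^{r_k}}\sum_{y\in W}\sum_{b\in\BF_q}\chi_q(yb^2).
\]
The term $y=0$ contributes $p^m$. For $y\in W\setminus\{0\}\subseteq K^*$, Lemma~\ref{exp} gives $\eta_q(y)G_q=G_q$, because $\eta_q|_{K^*}=1$ when $2\mid e$. Using $G_q=-\delta p^{m/2}$ from \eqref{me1},
\[
E_k=p^{m-r_k}-\delta(p^{r_k}-1)p^{m/2-r_k}.
\]
This depends only on $k$.

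The main obstacle is Step 2: the degree bookkeeping and the normalization of the elements $1/(1+\lambda a)$ have to be done carefully. Step 3 is routine once Step 2 is in place.
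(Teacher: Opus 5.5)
Your proof is correct and follows the paper's route. For the independence claim you clear denominators to get a polynomial of degree $<d$ vanishing at $a$, and for $E_k$ you use an additive-character count with $\eta_q|_{K^*}=1$ and $G_q=-\delta p^{m/2}$; your evaluation at $X=t_j$ (which shows the cleared polynomial is nonzero, a point the paper only asserts) and your direct sum over $W=\langle S\rangle$ in place of the paper's sum over $\BF_p^k$ with a kernel count are cleaner versions of the same steps.
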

\begin{proof}
	The set $\Omega$ is obtained, up to nonzero $\BF_p$-scalars, from
	 \[\Omega'=\{\frac{1}{a},1,\frac{1}{\lambda+a}:\ \lambda\in\BF_p^*\}.\]
	
	Take $k$ elements from the set $\Omega$ (contain $1$), after common denominator, a non-zero $\BF_p$-polynomial with degree at most $k-1<d$ is obtained to be zero at $a$, which contradicts the fact that the minimal polynomial degree of $a$ is $d$. Therefore, any set of $k\leq d$ elements is linearly independent over $\BF_p$.
	\vspace{6pt}
	
	If $k>d$, all these $k$ elements belong to $K=\BF_p(a)$, and any $d$ of them are linearly independent, so the rank of $S$ is $d$.	Hence,
	\[
	\dim_{\BF_p}\langle S\rangle=\min\{d,k\}.
	\]
	Note that,
\begin{equation*}
	\sum_{b\in\BF_q} \chi _q(ub^2 )= \eta _q (  u )  G_q=-\delta p^{m/2} \quad(u\in K^*) .
	\end{equation*}
	For every $u\in K^*$,
	\[u^{(p^m-1)/2}=(u^{p^{m/2}-1})^{(p^{m/2}+1)/2}=1.\]
	Therefore, any non-zero $\BF_p$-linear combination of $\Omega$ has a quadratic characteristic equal to $1$ in $\BF_q$.
	
	Set $S=\{\omega_1,\omega_2,\ldots,\omega_k\}$.	From the orthogonality of additive characteristic, we have
	\[
	E_k=\frac1{p^k}\sum_{\mathbf y\in\BF_p^k}
	\sum  _{b\in \BF_q}
	 \chi _q \left(\Bigl(\sum_{j=1}^k y_j\omega_j\Bigr)b^2\right).
	\]
	Note that the rank of the following coefficient mapping
	\[\BF_p^k\longrightarrow \BF_p(a),\qquad \mathbf y\longmapsto\sum_jy_j\omega_j\]
	is $r_k$, thus its kernel has $p^{k-r_k}$ elements. Hence
	\[
	\begin{aligned}
	E_k
	&=p^{-k}\left(p^{k-r_k}p^m
	+(p^k-p^{k-r_k})(-\delta p^{m/2})\right)\\
	&=p^{m-r_k}-\delta(p^{r_k}-1)p^{m/2-r_k}.
	\qedhere
	\end{aligned}
	\]
\end{proof}
Take $L=p+1$ and
\begin{equation}\label{eq5}
M_z=\binom{L}{z}
\sum_{i=0}^{L-z}(-1)^i
\binom{L-z}{i}E_{z+i}
-\mathrm{1}_{\{z=L\}},
\end{equation}where $0\le z\le L$.
\begin{theorem}\label{th:weight5}
	 Suppose $2\mid e$ and $m\ge6$.
	 The code $\C_{D_a}$ has parameters $\left[
	p^{m-2}-\delta(p^2-1)p^{m/2-2}-1,\ m
	\right]_p$ and
\begin{equation*}
	\begin{aligned}
	w_z&=(p-1)\bigl[p^{m-3}-\delta p^{m/2-2}(p+1-z)\bigr],\\
	A_{w_z}&=M_z,\qquad 0\le z\le p+1.
	\end{aligned}
	\end{equation*}
	Furthermore, its minimum distance is
	\begin{equation}\label{eq7}
	d_{\min}=
	\begin{cases}
	(p-1)p^{m-3},&\delta=-1,\\[2mm]
	(p-1)\bigl(p^{m-3}-p^{m/2-1}\bigr),
	&\delta=1,\ d=2,\\[2mm]
	(p-1)\bigl(p^{m-3}-(p+1)p^{m/2-2}\bigr),
	&\delta=1,\ d\ge3.
	\end{cases}
	\end{equation}
\end{theorem}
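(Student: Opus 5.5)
The proof has three parts: the length and dimension, the weight distribution by inclusion--exclusion, and the minimum distance.

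\textbf{Length and dimension.} The length is read off from Lemma~\ref{n0}(1): when $2\mid e$ we have $h(a)=p+1$, so $n=p^{m-2}-1-\delta(p^2-1)p^{m/2-2}$. Since $m\ge6$, Proposition~\ref{pro:dim}(1) gives that $\phi$ is injective. Hence $\dim\C_{D_a}=m$, and every nonzero $b$ gives a distinct nonzero codeword.

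\textbf{Weights.} For $b\ne0$ the computation just before the statement gives
\[
\wt(\mathbf c_b)=(p-1)\bigl[p^{m-3}-\delta p^{m/2-2}(p+1-z(b))\bigr],
\qquad
z(b)=\#\{\omega\in\Omega:\Tr(\omega b^2)=0\}.
\]
Since $\wt(\mathbf c_b)$ is strictly increasing or decreasing in $z$ according to the sign of $\delta$, the values $w_z$ are pairwise distinct. Therefore $A_{w_z}$ equals the number of $b\in\BF_q^*$ with $z(b)=z$.

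\textbf{Frequencies $A_{w_z}=M_z$.} For $S\subseteq\Omega$, let $F(S)$ be the number of $b\in\BF_q$ whose zero set $\{\omega\in\Omega:\Tr(\omega b^2)=0\}$ is exactly $S$. By inclusion--exclusion over supersets $T\supseteq S$,
\[
F(S)=\sum_{T\supseteq S}(-1)^{\#T-\#S}E_{\#T}.
\]
This uses the preceding lemma, which says the count $E$ for $T$ depends only on $k=\#T$. Grouping by $\#T=z+i$, there are $\binom{L-z}{i}$ such $T$. Summing over the $\binom{L}{z}$ subsets $S$ of size $z$ counts all $b\in\BF_q$ with $z(b)=z$. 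The element $b=0$ has $z(0)=L=p+1$, so it must be removed, which accounts for the term $-\mathrm{1}_{\{z=L\}}$. This gives exactly $M_z$ of \eqref{eq5}.

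\textbf{Minimum distance: setup.} The main obstacle is determining which $z$ actually occur, i.e.\ for which $z$ we have $M_z>0$, since that decides $d_{\min}$.
\begin{itemize}
\item If $\delta=-1$, the weight increases with $p+1-z$. So the minimum is at $z=p+1$, giving $w_{p+1}=(p-1)p^{m-3}$.
\item If $\delta=1$, the weight is smallest when $z$ is smallest. So we need the least $z$ with $M_z>0$.
\end{itemize}

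\textbf{Minimum distance when $\delta=-1$.} We need $b\ne0$ with $\Tr(\omega b^2)=0$ for all $\omega\in\Omega$. The preceding lemma gives $E_{p+1}=p^{m-r}+(p^{r}-1)p^{m/2-r}$ with $r=\min\{d,p+1\}$. Since $m\ge6$ and $r\le m/2$, this is $\ge p^{m/2}>1$, so $M_{p+1}=E_{p+1}-1>0$.

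\textbf{Minimum distance when $\delta=1$ and $d\ge3$.} We want $M_0>0$, i.e.\ some $b$ with $\Tr(\omega b^2)\ne0$ for all $\omega\in\Omega$. The plan is to bound $M_0$ from below using the explicit $E_k$. Write $M_0=\sum_k(-1)^k\binom{L}{k}E_k$ with $E_k=p^{m-r_k}-(p^{r_k}-1)p^{m/2-r_k}$. The main term is the alternating sum $\sum_k(-1)^k\binom{L}{k}p^{m-r_k}$. The inclusion--exclusion alternation is only clean while $r_k=k$, i.e.\ for $k\le d$, so that regime should be treated as the model case.

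A cleaner alternative is a direct character-sum count of $b$ with all $p+1$ traces nonzero. Write this count as
\[
\sum_b\prod_{\omega\in\Omega}\Bigl(1-\tfrac1p\sum_{y\in\BF_p}\chi_p(y\Tr(\omega b^2))\Bigr).
\]
The leading term is $p^m(1-1/p)^{\min(d,\cdot)}$-type, with error $O\bigl(2^{p+1}p^{m/2}\bigr)$. This should be positive once $m\ge6$. I would check the borderline small cases (e.g.\ $p=3$, $m=6$) directly from the closed form of $M_0$.

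\textbf{Minimum distance when $\delta=1$ and $d=2$.} Here $\Omega$ spans only $K=\BF_{p^2}$. A nonzero $\beta=\Tr_{\BF_q/K}(b^2)\in K$ lies in the kernel of $\Tr_{K/\BF_p}(\omega\,\cdot)$ for exactly one projective class of $\omega$. So $z(b)=1$ when $\beta\ne0$, and $z(b)=p+1$ when $\beta=0$. Thus $z=0$ is impossible, and $z=1$ is attained because $R_e(\beta)>0$ for $\beta\ne0$ (Lemma~\ref{bb}). This gives $w_1=(p-1)(p^{m-3}-p^{m/2-1})$.
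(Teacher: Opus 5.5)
Your treatment of the length and dimension, the inclusion--exclusion derivation of $M_z$, and the cases $\delta=-1$ and $\delta=1$, $d=2$ is correct and agrees with the paper. Your check that $E_{p+1}-1>0$ is even more explicit than the paper's.

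The gap is the case $\delta=1$, $d\ge3$. Everything there rests on showing $M_0>0$, and neither of your two sketches proves it. Expanding your character-sum product gives exactly $M_0=\sum_k(-1)^k\binom{L}{k}E_k$. For $\delta=1$ one has $E_k=(p^m+p^{m/2})p^{-r_k}-p^{m/2}$, and $\sum_k(-1)^k\binom{L}{k}=0$. Inclusion--exclusion over the functionals vanishing on subsets $S\subseteq\Omega$ (there are $p^{d-r_{\#S}}$ of them) then yields the exact identity
\[
M_0=(p^m+p^{m/2})\,p^{-d}N_\ell,\qquad N_\ell=\#\{\ell\in K^\vee:\ \ell(\omega)\ne0\ \text{for all }\omega\in\Omega\}.
\]
So your leading term is really $p^{m-d}N_\ell$, not something of the shape $p^m(1-1/p)^{\min(d,\cdot)}$. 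Its positivity is the whole content of the claim, and it has nothing to do with $m\ge6$: for $d=2$ it is exactly $0$.

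A crude error bound cannot substitute for this either. Take $m=6$, $d=3$ and any $p\equiv1\pmod 4$, so that $\delta=1$. Then $N_\ell=p(p-1)^2/2$, and the leading term $p^4(p-1)^2/2$ is smaller than your bound $2^{p+1}p^{3}$; for $p=5$ this reads $5000<8000$. So the estimate fails in infinitely many cases, and checking small cases cannot close it.

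What is missing is a reason why, for $d\ge3$, some $\ell\in K^\vee$ is nonzero on all of $\Omega$. The paper supplies it as follows.
\begin{itemize}
\item With $g_T(X)=(f(X)-f(T))/(X-T)$, the map $\ell\mapsto\ell(g_T(a))$ is an isomorphism from $K^\vee$ onto the polynomials of degree at most $d-1$.
\item Choosing $h$ monic irreducible of degree $d-1\ge2$ (so $h$ has no root in $\BF_p$) produces $\ell$ with $\ell(1)=1$ and $\ell(1/(a-t))=-h(t)/f(t)\ne0$ for every $t\in\BF_p$.
\item The paper then realizes $\ell$ as $x\mapsto\Tr(xc^2)$ on $K$, using the count $R(\mathbf b)=p^{-d}(p^m+p^{m/2})>0$. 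In your set-up this last step comes for free from the identity above.
\end{itemize}
Geometrically, the points of $\Omega$ lie on a rational normal curve of degree $d-1$ in $\mathbb P(K)$. Hyperplanes avoiding them correspond to binary forms of degree $d-1$ with no $\BF_p$-rational zero, and these exist precisely when $d\ge3$.
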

\begin{proof}
	First, fixing exactly $z$ trace conditions in $\Omega$ to be zero, and then apply the inclusion-exclusion principle to the remaining $L-z$ conditions. From the definition of $E_k$, the common zeros of any $k$ conditions are $E_k$. Then the number of the remaining $L-z$ elements do not satisfy the condition that trace equals $0$ is
	\[\sum_{i=0}^{L-z}(-1)^i
	\binom{L-z}{i}E_{z+i}.\]
	
	Note that multiplying by $\binom{L}{z}$ corresponds to selecting the set of zero positions and the element $b=0$ exactly belongs to the class $z=L$, so we subtract 1 from $M_z$(when $z=L$). This yields Eq.~\eqref{eq5}.
	
	This also proves that each $M_z$ is a   non-negative  integer, it is the number of non-zero $c$ satisfying $z(c)=z$. All non-zero $c$ are partitioned into these sets without omission or duplication, thus obtaining
\begin{equation*}
	\sum_{z=0}^{p+1}M_z=p^m-1.
	\end{equation*}

	Now we determine the minimum distance.
	
	If $\delta=-1$, the weight strictly decreases with $z$, and \[
	M_L=E_L-1>0,
	\]
	so the minimum weight is $w_L=(p-1)p^{m-3}$.
	
	If $\delta=1$ and $d=2$, the $p+1$ elements of $\Omega$ give all $p+1$ projection points of the two-dimensional space, every non-zero linear functional vanishes at exactly one point. Therefore $M_0=0$, $M_1>0$, and the minimum distance is $w_1$.
	
	The following completely proves that $M_0>0$ when   $\delta=1,d\ge3$ . Let $f(X)\in\BF_p[X]$ be the minimal polynomial of $a$, so $\deg(f)=d$. For each $t\in\BF_p$, define\begin{equation*}
	g_t(X)=\frac{f(X)-f(t)}{X-t}.
\end{equation*}
	Since $f(a)=0$ and $f(t)\neq0$, we have
	\begin{equation}\label{eq13}
	\frac{1}{a-t}=-\frac{g_t(a)}{f(t)}.
	\end{equation}
	On the other hand, except for the element $1$, the elements in $\Omega$ multiplied by a non-zero $\BF_p$-scaling result exactly in
	\begin{equation}\label{eq14}
	\frac1{a-t},\qquad t\in\mathbb F_p.
	\end{equation}
	Consider the mapping
	\begin{equation}\label{eq15}
	\begin{aligned}
	\Phi:K^\vee&\longrightarrow
	\{h(T)\in\BF_p[T]:\deg h\le d-1\},\\
	\ell&\longmapsto \Phi(\ell)(T)=\ell(g_T(a)),
	\end{aligned}
	\end{equation}
	where   $K^\vee=\operatorname{Hom}_{\BF_p}(K,\BF_p)$ . Set
	\[f(X)=\sum_{i=0}^{d}f_iX^i,\qquad f_d=1.\]
	Then
	\begin{equation}\label{eq16}
	g_T(a)=\sum_{i=1}^{d}f_i
	\sum_{j=0}^{i-1}a^{i-1-j}T^j.
	\end{equation}
	In Eq.~\eqref{eq16}, the coefficients of $T^{d-1},T^{d-2},\ldots,T^0$ successively have degrees regarding $a$ of \[
	0,1,\ldots,d-1,
	\]
	and the leading coefficients are all $1$, these coefficients thus constitute an $\BF_p$-basis of $K$. Therefore, Eq.~\eqref{eq15} is an isomorphism between two $d$-dimensional $\BF_p$-spaces.
	
	Take any monic irreducible polynomial of degree $d-1$ \[
	h(T)\in\mathbb F_p[T].
	\]Since $d-1\geq2$,  it has no roots in $\BF_p$. By the surjectivity of $\Phi$, there exists a unique   $\ell\in K^\vee$  such that \begin{equation*}
	\ell(g_t(a))=h(t)\qquad(t\in\BF_p).
\end{equation*}
	Furthermore, because the coefficient of $T^{d-1}$ in $g_T(a)$ is $1$, comparing the highest degree terms on both sides of Eq.~\eqref{eq15} yields \begin{equation}\label{eq18}
	\ell(1)=1.
	\end{equation}
	From Eqs.~\eqref{eq14}, ~\eqref{eq18} and $h(t)f(t)\ne0$, we have
	\[
	\ell\!\left(\frac1{a-t}\right)
	=-\frac{h(t)}{f(t)}\ne0
	\qquad(t\in\mathbb F_p).
	\]
	Combining ~\eqref{eq13} and ~\eqref{eq18}, we know
\begin{equation*}
	\ell(\omega)\ne0\qquad\text{for all }\omega\in\Omega.
\end{equation*}
	It remains to be proved that this abstract linear functional can indeed be realized by some $c^2$. Take any $\BF_p$-basis $\beta_1,\beta_2,\ldots,\beta_d$ of $K$ and let
	\[
	b_i=\ell(\beta_i).
	\]
	Since $\ell\neq0$, the vector $\mathbf b=(b_1,\ldots,b_d)$ is non-zero. Denote
	\[
	R(\mathbf b)=
	\#\{c\in \BF_q:\Tr(\beta_i c^2)=b_i,\ 1\le i\le d\}.
	\]
	Expanding using the additive character relation:
\begin{equation*}
	R(\mathbf b)
	=\frac1{p^d}\sum_{\mathbf y\in\BF_p^d}
	\exp\left(-\frac{2\pi i}{p}\mathbf y\cdot\mathbf b\right)
	\sum_{c\in \BF_q}
	\chi_q\left(\Bigl(\sum_{i=1}^dy_i\beta_i\Bigr)c^2\right).
\end{equation*}
	When $\mathbf y=0$, the inner sum is $p^m$. When $\mathbf y\ne0$,
	\[
	A_{\mathbf y}=\sum_i y_i\beta_i\in K^*\subseteq \BF_{p^{m/2}}^*
	\]is a square element in $\BF_q$, so the corresponding inner sum equals $-\delta p^{m/2}$. Since
	$\mathbf b\ne0$,
	\[
	\sum_{\mathbf y\ne0}
	\exp\left(-\frac{2\pi i}{p}\mathbf y\cdot\mathbf b\right)=-1.
	\]
	Therefore,
\begin{equation*}
	R(\mathbf b)
	=p^{-d}\bigl(p^m+  \delta  p^{m/2}\bigr).
\end{equation*}
	In particular, in the case of $\delta=1$,
	\[
	R(\mathbf b)=p^{-d}(p^m+p^{m/2})>0.
	\]
	Therefore, there exists $c\in \BF_q$ satisfying
	\[
	\Tr(\beta_i c^2)=\ell(\beta_i)\qquad(1\le i\le d).
	\]By linearity, this means
	\[\Tr(\omega c^2)=\ell(\omega)\ne0
	\qquad\text{for all }\omega\in\Omega.
	\]
	This $c$ is non-zero, and $z(c)=0$. So $M_0>0$.
\end{proof}
\begin{example}
Take $p=5$, $m=6$, and let $a\in\BF_{5^3}\subset\BF_{5^6}$ be a root of
$T^3+2T+1\in\BF_5[T]$. Then $d=3$, $e=2$, and $\delta=1$.
An exhaustive computation gives
\[
n=504,\qquad \dim\C_{D_a}=6,\qquad d_{\min}=380,
\]
and
\[
W_{\C_{D_a}}(z)
=1+5040z^{380}+3024z^{400}+7560z^{420}.
\]
\end{example}
\smallskip

\noindent\textbf{B. The case for $d=3$}
\smallskip

Let \[
f_a(T)=T^3+AT^2+BT+C\in\BF_p[T]
\] be a minimal polynomial of $a$, and $K=\BF_p(a)=\BF_{p^3}$.
\begin{lemma}\label{spec1}
	Write
	\[
	\beta=x+ya+za^2,\qquad x,y,z\in\BF_p.
	\]
	For any $c=u+va\ne0$, we then have
	\begin{equation}\label{cas1}
	N_{K/\BF_p}(c)
	=u^3-Au^2v+Buv^2-Cv^3
	\end{equation}
	and
	\begin{equation}\label{cas2}N_{K/\BF_p}(c)\Tr_{K/\BF_p}(\frac{\beta}{c})=Q_\beta(u,v)=a_\beta u^2+b_\beta uv+c_\beta v^2,\end{equation}
	where $a_\beta=3x-Ay+(A^2-2B)z$, $b_\beta=-2Ax+2By+(3C-AB)z$ and $c_\beta=Bx-3Cy+ACz$.
	
	In particular,
\begin{equation*}
	s_a(\beta)=
	\sum_{[u:v]\in\mathbb P^1(\BF_p)}
	\eta_p\bigl(Q_\beta(u,v)\bigr).
\end{equation*}
\end{lemma}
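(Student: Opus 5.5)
Both identities reduce to explicit computations with the conjugates $a_1,a_2,a_3$ of $a$ over $\BF_p$, which are the roots of $f_a$. Vieta's formulas give
\[
e_1=a_1+a_2+a_3=-A,\qquad e_2=a_1a_2+a_1a_3+a_2a_3=B,\qquad e_3=a_1a_2a_3=-C.
\]

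\emph{The norm.} For \eqref{cas1} I write
\[
N_{K/\BF_p}(u+va)=\prod_{i=1}^3(u+va_i)=u^3+e_1u^2v+e_2uv^2+e_3v^3,
\]
which is exactly $u^3-Au^2v+Buv^2-Cv^3$. An equivalent check is $N(u+va)=(-v)^3f_a(-u/v)$ for $v\neq0$, consistent with $N_{K/\BF_p}(a-t)=(-1)^df_a(t)$ used earlier in the paper.

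\emph{The trace identity.} For \eqref{cas2} I use $N(c)/c=\prod_{j\ne1}(u+va_j)$ for the embedding indexed by $1$, and similarly for the other embeddings. This gives
\[
N(c)\Tr_{K/\BF_p}(\beta/c)=\sum_{i=1}^3\beta_i\prod_{j\ne i}(u+va_j),
\qquad \beta_i=x+ya_i+za_i^2.
\]
This is a binary quadratic form in $(u,v)$, which I then expand. Writing
\[
\prod_{j\ne i}(u+va_j)=u^2+(e_1-a_i)uv+\frac{e_3}{a_i}v^2,
\]
or better $e_2-a_i(e_1-a_i)$ in place of $e_3/a_i$ to avoid division, the coefficients are power sums:
\begin{itemize}
\item the $u^2$ coefficient is $\sum_i\beta_i=3x+p_1y+p_2z$;
\item the $uv$ coefficient is $\sum_i\beta_i(e_1-a_i)$;
\item the $v^2$ coefficient is $\sum_i\beta_i\bigl(e_2-e_1a_i+a_i^2\bigr)$.
\end{itemize}
Each of these is a linear combination of the power sums $p_k=\sum_i a_i^k$ for $k\le4$. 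Newton's identities give these in terms of $A,B,C$: $p_1=-A$, $p_2=A^2-2B$, $p_3=-A^3+3AB-3C$, and $p_4$ from $p_4=-Ap_3-Bp_2-Cp_1$.

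Substituting and collecting should yield $a_\beta$, $b_\beta$, $c_\beta$. The main obstacle is this bookkeeping: the cancellations in the $v^2$ coefficient must leave only $Bx-3Cy+ACz$. As a check I will verify the special cases $\beta=1$ (giving $3u^2-2Auv+Bv^2$, i.e.\ the derivative relation $\Tr(1/c)=N'/N$ type identity) and $\beta=a$.

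\emph{Consequence for $s_a$.} Since $d=3$ is odd and $\eta_K$ restricted to $\BF_p$ is $\eta_p$, we have $\eta_K(c)=\eta_p(N(c))$. Also $N(c)\neq0$, and $\eta_p(N)=\eta_p(N^{-1})$. Hence
\[
\eta_K(c)\,\eta_p\bigl(\Tr(\beta/c)\bigr)=\eta_p\bigl(N(c)^2\Tr(\beta/c)\bigr)\cdot\eta_p\bigl(N(c)\bigr)^{-1}\cdots
\]
More simply,
\[
\eta_p(N)\,\eta_p(\Tr(\beta/c))=\eta_p\bigl(N\Tr(\beta/c)\bigr)=\eta_p\bigl(Q_\beta(u,v)\bigr).
\]
Summing over $\mathbb P^1(\BF_p)$ in \eqref{mo8} gives the formula for $s_a(\beta)$. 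This is well defined because $Q_\beta$ is homogeneous of degree $2$, so rescaling $(u,v)$ by $\lambda$ multiplies it by $\lambda^2$.
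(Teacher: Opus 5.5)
Your proposal is correct and follows essentially the paper's route. Your factor $\prod_{j\ne i}(u+va_j)=u^2-(A+a_i)uv+(a_i^2+Aa_i+B)v^2$ is the $i$-th conjugate of the paper's $N_{K/\BF_p}(c)\,c^{-1}=u^2-(A+a)uv+(a^2+Aa+B)v^2$, so summing $\beta_i$ against it is the paper's trace computation; both then reduce to Newton's identities for $p_k$, $k\le4$, and the collection does give exactly $a_\beta,b_\beta,c_\beta$ (the $z$-coefficient of $v^2$ collapses to $e_3p_1=AC$ by the $p_4$ recursion).

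One small correction: $\eta_K(c)=\eta_p(N_{K/\BF_p}(c))$ holds for every $d$, because $\eta_K(c)=c^{(p^3-1)/2}=\bigl(c^{(p^3-1)/(p-1)}\bigr)^{(p-1)/2}=N_{K/\BF_p}(c)^{(p-1)/2}$. It does not follow from $d$ being odd, or from $\eta_K|_{\BF_p^*}=\eta_p$, which alone would not imply it.
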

\begin{proof}
	We have
	\[N_{K/\BF_p}(c)=N_{K/\BF_p}(u+va)=(u+va)(u+va)^p(u+va)^{p^2}.\]
	By Vieta's formulas,
	\[a +a^p+a^{p^2}=-A,\ aa^p+aa^{p^2}+a^pa^{p^2}=B,\ a\cdot a^p\cdot a^{p^2}=-C.\]
	Hence, Eq.~\eqref{cas1} can be obtained. Note that
	\[
	(u+va)^{-1}
	=
	\frac{
		u^2+uv(-A-a)+v^2(a^2+Aa+B)
	}{N_{K/\BF_p}(u+va)}.
	\]
	Multiplying by $\beta$ and taking the trace yields ~\eqref{cas2}. Using Newton's identities
	\[
	\begin{aligned}
	\Tr_{K/\BF_p}(a)&=-A,\qquad \Tr_{K/\BF_p}(a^2)=A^2-2B,\\
	\Tr_{K/\BF_p}(a^3)&=-A^3+3AB-3C,\\
	\Tr_{K/\BF_p}(a^4)&=A^4-4A^2B+2B^2+4AC.
	\end{aligned}
	\]gives $a_\beta, b_\beta, c_\beta$. Since $\eta_K(c)=\eta_p(N_{K/\BF_p}(c))$, by Eq.~\eqref{mo8}, we have
	\[
	\begin{aligned}
	s_a(\beta)
	&=\sum_{[u:v]\in\mathbb P^1(\BF_p)}
	\eta_p(N_{K/\BF_p}(c))
	\eta_p\!\left(\Tr_{K/\BF_p}\frac{\beta}{c}\right)\\
	&=\sum_{[u:v]\in\mathbb P^1(\BF_p)}\eta_p(Q_\beta(u,v)).
	\qedhere
	\end{aligned}
	\]
\end{proof}

Set
\[
\Psi_a:\ K\longrightarrow\Sym^2(\BF_p^2)^*,
\qquad \beta\longmapsto Q_\beta(u,v).
\]
Note that $1, -A-a, a^2+Aa+B\in K$ form a basis of $K/\BF_p$ and
\[
\begin{aligned}
a_\beta&=\Tr_{K/\BF_p}(\beta),\\
b_\beta&=\Tr_{K/\BF_p}(\beta(-A-a)),\\
c_\beta&=\Tr_{K/\BF_p}(\beta(a^2+Aa+B)).
\end{aligned}
\]
Since the trace map is non-degenerate, hence the map $\Psi_a$ is an isomorphism.
\begin{proposition}
	Set \[
	\widehat{s_a}(\gamma)
	=\sum_{\beta\in K}s_a(\beta)
	\chi_p\!\left(\Tr_{K/\BF_p}(\gamma\beta)\right).
	\]
	If $U=\BF_p+\BF_pa$, then
\begin{equation*}
	\widehat{s_a}(\gamma)=
	\begin{cases}
	p^2G_p\,\eta_K(\gamma),
	&\gamma\ne0,\ \gamma^{-1}\in U,\\
	0,& \text{otherwise}.
	\end{cases}
\end{equation*}
\end{proposition}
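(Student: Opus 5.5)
The plan is to substitute the definition \eqref{mo8} of $s_a(\beta)$ into $\widehat{s_a}(\gamma)$, exchange the finite sums, and evaluate the inner sum over $\beta$ for each fixed projective point $[u:v]$. Write $c=u+va\in U\setminus\{0\}$ for a fixed representative of that point. As observed after \eqref{mo5}, each summand of $s_a$ is independent of the choice of representative, so this exchange is legitimate. It gives
\[
\widehat{s_a}(\gamma)=\sum_{[u:v]\in\mathbb P^1(\BF_p)}\eta_K(c)\sum_{\beta\in K}\eta_p\!\left(\Tr_{K/\BF_p}(\beta/c)\right)\chi_p\!\left(\Tr_{K/\BF_p}(\gamma\beta)\right).
\]
In the inner sum I substitute $\beta=c\beta'$, which is a bijection of $K$. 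The inner sum becomes
\[
\Sigma(\lambda):=\sum_{\beta'\in K}\eta_p\!\left(\Tr_{K/\BF_p}(\beta')\right)\zeta_p^{\Tr_{K/\BF_p}(\lambda\beta')},\qquad \lambda=\gamma c.
\]

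The key step is to evaluate $\Sigma(\lambda)$ by splitting into two cases according to whether $\lambda$ lies in $\BF_p$.

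\emph{Case $\lambda\notin\BF_p$.} Then $1$ and $\lambda$ are $\BF_p$-linearly independent in $K$. By non-degeneracy of the trace pairing, the $\BF_p$-linear map $\beta'\mapsto(\Tr_{K/\BF_p}(\beta'),\Tr_{K/\BF_p}(\lambda\beta'))$ is surjective onto $\BF_p^2$. Since $d=3$, every fibre has exactly $p^{d-2}=p$ elements. Hence
\[
\Sigma(\lambda)=p\sum_{T\in\BF_p}\eta_p(T)\sum_{s\in\BF_p}\zeta_p^{s}=0.
\]

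\emph{Case $\lambda=r\in\BF_p$.} Then $\Tr_{K/\BF_p}(\lambda\beta')=rT$ with $T=\Tr_{K/\BF_p}(\beta')$, and each value of $T$ is taken $p^2$ times. Therefore
\[
\Sigma(r)=p^2\sum_{T\in\BF_p}\eta_p(T)\zeta_p^{rT}=p^2\eta_p(r)G_p.
\]
This is $0$ when $r=0$, which also covers $\gamma=0$.

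Consequently only those projective points $[u:v]$ with $\gamma c\in\BF_p^*$ contribute, that is, those with $c\in\BF_p^*\gamma^{-1}$. Such a point exists precisely when $\gamma\neq0$ and $\gamma^{-1}\in U$, and then it is unique. Indeed, nonzero elements of $U$ are in bijection with nonzero pairs $(u,v)$, because $1,a$ are independent ($a\notin\BF_p$). So $\BF_p^*\gamma^{-1}$ is a single projective class of $U\setminus\{0\}$.

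Choosing the representative $c=\gamma^{-1}$ gives $r=1$, and the total is $\eta_K(\gamma^{-1})p^2G_p=p^2G_p\,\eta_K(\gamma)$. If $\gamma\neq0$ but $\gamma^{-1}\notin U$, every term vanishes. Both cases of the statement follow.

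The only point needing care is the fibre count in the first case. It uses that $d=3$ makes the fibres of size $p^{d-2}=p$ rather than the general $p^{d-2}$, though the vanishing holds for any $d\ge2$. It also uses that the product $\eta_K(c)\,\eta_p(\cdots)$ is independent of the representative, which is what allows a single representative to be fixed per class before substituting. I expect no genuine obstacle beyond this bookkeeping.
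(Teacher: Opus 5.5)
Your proposal is correct and follows essentially the same route as the paper: interchange the sums, substitute $\beta=c\beta'$, and split according to whether $k=\gamma c$ is $0$, lies in $\BF_p^*$, or lies outside $\BF_p$ (where $\Tr_{K/\BF_p}(\beta')$ and $\Tr_{K/\BF_p}(k\beta')$ are independent functionals); your choice of representative $c=\gamma^{-1}$ just shortcuts the paper's identity $\eta_p(kN_{K/\BF_p}(c))=\eta_K(\gamma)$. One small precision: the representative-independence you invoke follows from $\eta_K(\lambda)=\eta_p(\lambda)^d=\eta_p(\lambda)$ for $\lambda\in\BF_p^*$, since $d=3$ is odd; it is not literally the remark after \eqref{mo5}, which concerns $\eta_q$ with $m$ odd.
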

\begin{proof}
	  Summing over the projective classes $[c]\in\mathbb P(U)$ represented by nonzero elements $c=u+va$ , we get
	\[
	\widehat{s_a}(\gamma)
	=\sum_{\beta\in K}\sum_{[c]\in \mathbb P(U)}
	\eta_p(N_{K/\BF_p}(c))\cdot\eta_p\!\left(\Tr_{K/\BF_p}\frac{\beta}{c}\right)
	\chi_p\!\left(\Tr_{K/\BF_p}(\gamma\beta)\right).
	\]
	Note that $\widehat{s_a}(\gamma)=0$ if $\gamma=0$. Assume that $\gamma\neq0$ and
	$\gamma=k/c$, then
	\[
	\sum_{\beta\in K}
	\eta_{p}(
	\Tr_{K/\BF_p}
	(\frac{\beta}{c}))\chi_p\!\left(k\Tr_{K/\BF_p}(\frac{\beta}{c})\right)\]is nonzero only when $k\in\BF_p^*$. Take $\beta'=\beta/c$. Then the sum becomes
	\begin{equation}\label{sab}
	\sum_{t\in\BF_p}\eta_p(t)
	\sum_{\substack{\beta'\in K\\ \Tr_{K/\BF_p}(\beta')=t}}
	\chi_p(\Tr_{K/\BF_p}(k\beta')).
	\end{equation}
	Now we discuss $k$ in three cases:
	\begin{itemize}
		\item If $k=0$, then the sum of Eq.~\eqref{sab} is $0$.
		\item If $k\in\BF_p^*$, then $\gamma^{-1}\in U$
\begin{equation*}
		\begin{aligned}
		&\sum_{t\in\BF_p}\sum_{\substack{\beta'\in K\\ \Tr_{K/\BF_p}(\beta')=t}}
		\eta_p(t)\chi_p(k\Tr_{K/\BF_p}(\beta'))\\
		&\quad=p^2\sum_{t\in\BF_p}\eta_p(t)\chi_p(kt)
		=p^2\eta_p(k)G_p.
		\end{aligned}
\end{equation*}
		Therefore
		\[
		\begin{aligned}
		\widehat{s_a}(\gamma)
		&=\eta_p(N_{K/\BF_p}(c))\eta_p(k)p^2G_p\\
		&=p^2G_p\eta_p(kN_{K/\BF_p}(c))\\
		&=p^2G_p\eta_K(\gamma).
		\end{aligned}
		\]
		\item If $k\notin \BF_p$, then the mappings $\beta'\mapsto\Tr_{K/\BF_p}(\beta')$ and $\beta'\mapsto\Tr_{K/\BF_p}(k\beta')$ are two linearly independent linear functional. For such a bilinear pairing, we can further make a variable substitution: for any fixed $t$, we let \[V_t:=\{\beta'\in K: \Tr_{K/\BF_p}(\beta')=t\}.\]
	Then $\Tr_{K/\BF_p}(k\beta')$ is surjective on $V_t$, i.e., it can take all values in $\BF_p$ uniformly. Therefore
		\[
		\sum_{\substack{\beta'\in K\\ \Tr_{K/\BF_p}(\beta')=t}}
		\chi_p(\Tr_{K/\BF_p}(k\beta'))
		=p\sum_{s\in\BF_p}\chi_p(s)=0.
		\]
		So $\widehat{s_a}(\gamma)=0$.
	\end{itemize}
	Therefore, the proof is finished.
\end{proof}
We proceed to enumerate all values of $s_a(\beta)$ and their respective frequencies.
\begin{theorem}\label{valsa}
	For a parameter $a$ defining a cubic extension $\BF_p(a)/\BF_p$, we have the distribution of $s_a(\beta)$ given in Table I.
	\begin{table}[htbp]
		\centering
		\caption{The distribution of $s_a(\beta)$}
		\label{tab:joint-dist}
		\begin{tabular}{ccc}
			\toprule
			$s_a(\beta)$ & $\eta_K(\beta)$ & $\#\{\beta\}$ \\
			\midrule
			$-p$ & $-\theta$ & $(p^2-1)/2$ \\[2pt]
			$0$  & $-1$    & $(p^3-p^2)/2$ \\
			$0$  & $0$     & $1$ \\
			$0$  & $1$     & $(p^3-p^2)/2$ \\[2pt]
			$p$  & $\theta$  & $(p^2-1)/2$ \\
			\bottomrule
		\end{tabular}
	\end{table}

	Therefore, this distribution depends only on $p$, and not on the minimal polynomial of $a$, the affine linear orbit, the fractional linear orbit, or the Frobenius conjugate orbit.
	
	More precisely, the only standard arithmetic invariant entering the proof is
\begin{equation*}
	\eta_K(f_a'(a))
	=\eta_p\!\left(-\disc(f_a)\right)
	=\theta.
\end{equation*}
	In particular, the discriminant of an irreducible cubic polynomial lies in $\BF_p^*$ and has fixed quadratic character, so the distribution depends only on $p$.
\end{theorem}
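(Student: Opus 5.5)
The plan is to reduce everything to the classification of binary quadratic forms over $\BF_p$, using Lemma~\ref{spec1} and the isomorphism $\Psi_a$. By Lemma~\ref{spec1}, $s_a(\beta)=\sum_{[u:v]\in\mathbb P^1(\BF_p)}\eta_p(Q_\beta(u,v))$. Since $\eta_p(\lambda^2)=1$, this equals $\frac1{p-1}\sum_{(u,v)\in\BF_p^2}\eta_p(Q_\beta(u,v))$. Because $\Psi_a:\beta\mapsto Q_\beta$ is an $\BF_p$-isomorphism from $K$ onto the $3$-dimensional space of binary quadratic forms, I will sort $\beta$ by the rank of $Q_\beta$. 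This gives three cases.
\begin{enumerate}[1]
\item $\beta=0$ gives $Q_\beta=0$ and $s_a(0)=0$, with $\eta_K(0)=0$.
\item If $Q_\beta$ has rank $2$ with determinant $\Delta$, then for $t\neq0$ one has $\#\{Q_\beta=t\}=p-\eta_p(-\Delta)$, which does not depend on $t$. Hence $\sum_{t}\eta_p(t)\#\{Q_\beta=t\}=0$, and so $s_a(\beta)=0$.
\item If $Q_\beta=\lambda\ell^2$ with $\ell$ a nonzero linear form and $\lambda\in\BF_p^*$, then $\ell\neq0$ at exactly $p$ points of $\mathbb P^1(\BF_p)$. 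Hence $s_a(\beta)=p\,\eta_p(\lambda)$.
\end{enumerate}
There are exactly $(p+1)(p-1)=p^2-1$ rank-one forms, half with $\eta_p(\lambda)=1$ and half with $\eta_p(\lambda)=-1$. So the values $\pm p$ each occur $(p^2-1)/2$ times, and $0$ occurs $p^3-p^2+1$ times.

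It remains to obtain the joint distribution with $\eta_K(\beta)$. The key claim is that for rank-one $\beta$ one has $\eta_K(\beta)=\theta\,\eta_p(\lambda)$. Granting this, the rank-one $\beta$ contain exactly $(p^2-1)/2$ squares of $K^*$. Since $K^*$ has $(p^3-1)/2$ squares in total, the nonzero rank-two $\beta$ contain $(p^3-1)/2-(p^2-1)/2=(p^3-p^2)/2$ squares, and likewise $(p^3-p^2)/2$ nonsquares. This is exactly Table~I.

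To prove the claim, I will use the identity $Q_\beta(u,v)=\Tr_{K/\BF_p}\bigl(\beta\,c^{p}c^{p^2}\bigr)$ with $c=u+va$, together with Euler's formula $\Tr_{K/\BF_p}\bigl(g(a)/f_a'(a)\bigr)=$ (coefficient of $a^2$ in $g$) for $\deg g\le 2$. Writing $\beta=w(a)/f_a'(a)$, the form $Q_\beta(u,v)$ becomes the $a^2$-coefficient of $w(a)\,c^pc^{p^2}$ reduced modulo $f_a$. A rank-one form $\lambda(v_0u-u_0v)^2$ vanishes doubly exactly at $[u_0:v_0]$, so it is determined by a point $c_0=u_0+v_0a$ and the scalar $\lambda$. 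From this description I will solve explicitly for $w$. I expect $w$ to be, up to the factor $\lambda$, a square in $K$ times a fixed element whose norm is controlled by the Vandermonde expression for $\disc(f_a)$. Then
\[
\eta_K(\beta)=\eta_p(\lambda)^3\,\eta_K(f_a'(a))^{-1}\cdot(\text{square})=\eta_p(\lambda)\,\eta_p\bigl(-\disc(f_a)\bigr).
\]
Here I use that $\eta_K(\lambda)=\eta_p(\lambda)^3=\eta_p(\lambda)$ for $\lambda\in\BF_p^*$, and that $N_{K/\BF_p}(f_a'(a))=-\disc(f_a)$ for a cubic. Finally, $\eta_p(-\disc f_a)=\theta$, because an irreducible cubic over $\BF_p$ has discriminant a nonsquare times... more precisely, its discriminant is a square in $\BF_p$: the Frobenius acts as a $3$-cycle, which is an even permutation, so $\sqrt{\disc}\in\BF_p$. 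Hence $\eta_p(-\disc f_a)=\eta_p(-1)=\theta$.

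The main obstacle is the explicit identification of the rank-one $\beta$, that is, computing $w$ and verifying that its norm class is independent of $c_0$. If the direct polynomial computation gets unwieldy, I will first move $c_0$ to $\alpha_\infty=1$ using the $\operatorname{PGL}_2(\BF_p)$ substitution $a\mapsto a'$ from the proof of Lemma~\ref{n0}. This substitution changes $Q_\beta$ only by a linear change of variables and a factor $N_{K/\BF_p}(ta+u)$. I will then check the case $[u_0:v_0]=[1:0]$, i.e.\ $c_v=0$ and $b_\beta=0$, by hand using the explicit formulas for $a_\beta,b_\beta,c_\beta$ in Lemma~\ref{spec1}. As a cross-check, the Fourier formula for $\widehat{s_a}$ in the preceding proposition must be consistent with this support-and-sign description: pairing $s_a$ against $\chi_p(\Tr_{K/\BF_p}(\gamma\beta))$ should reproduce $p^2G_p\eta_K(\gamma)$, which pins down the sign $\theta$.
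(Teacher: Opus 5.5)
Your proposal is correct and follows the paper's architecture. You pull back through $\Psi_a$, show that rank-two forms give $s_a=0$ and rank-one forms $\lambda\ell^2$ give $p\,\eta_p(\lambda)$, prove $\eta_K(\beta)=\theta\,\eta_p(\lambda)$ on the rank-one locus via $N_{K/\BF_p}(f_a'(a))=-\disc(f_a)$ and the even $3$-cycle, and then count squares.

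The differences are local. For rank two you use the representation number $p-\eta_p(-\Delta)$ instead of completing the square, which is cleaner. For the key claim, the paper expands $Q_\beta(t,1)$ over the Galois conjugates and evaluates at $t=-a$ to get $P_\beta(-a)=f_a'(a)\beta$. The Euler-formula computation you left as an expectation closes immediately and gives the same identity as a polynomial identity. For $\beta=w(a)/f_a'(a)$ with $\deg w\le 2$, Lagrange interpolation at $a,a^p,a^{p^2}$ yields
\[
Q_\beta(t,1)=\sum_{i=0}^{2}\frac{w(\sigma^i a)}{f_a'(\sigma^i a)}\prod_{j\ne i}\bigl(t+\sigma^j a\bigr)=w(-t).
\]
Hence $Q_\beta(u,v)=v^2w(-u/v)$, and $Q_\beta=\lambda(v_0u-u_0v)^2$ forces $w(x)=\lambda(u_0+v_0x)^2$, i.e.\ $\beta=\lambda c_0^2/f_a'(a)$. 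So $w$ itself is $\lambda$ times a square, and the fixed factor $1/f_a'(a)$ lives in $\beta$, not in $w$. No $\operatorname{PGL}_2(\BF_p)$ reduction is needed.

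Your suggested Fourier cross-check is in fact an independent proof of the sign claim. By the preceding proposition, together with $G_K=G_p^3$ and $G_p^2=\theta p$,
\[
\sum_{\beta\in K}s_a(\beta)\eta_K(\beta)=\frac{1}{G_K}\sum_{\gamma\in K}\eta_K(\gamma)\widehat{s_a}(\gamma)=\frac{p^2(p^2-1)G_p}{G_p^3}=\theta p(p^2-1).
\]
Since $s_a\eta_K=\pm p$ on the $p^2-1$ rank-one $\beta$ and is $0$ elsewhere, every rank-one term must equal $\theta p$.
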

\begin{proof}
	We have
	\[
	\begin{aligned}
	\Sigma(Q_\beta)&:=\sum_{[u:v]\in\mathbb P^1(\BF_p)}\eta_p(Q_\beta(u,v))\\
	&=\sum_{[u:v]\in\mathbb P^1(\BF_p)}
	\eta_p(a_\beta u^2+b_\beta uv+c_\beta v^2).
	\end{aligned}
	\]
	The projective line $\mathbb P^1(\BF_p)$ contains exactly $p+1$ points, we decompose the sum by splitting the projective points into $[1:0]$ and $[t:1]$, $t\in\BF_p$. Then
	\begin{align*}
	\Sigma(Q_\beta)&=\sum_{t\in\BF_p}\eta_p(Q_\beta(t,1))+\eta_p(Q_\beta(1,0))\\
	&=\sum_{t\in\BF_p}\eta_p(a_\beta t^2+b_\beta t+c_\beta)
	+\eta_p(a_\beta)\\
	&=\sum_{t\in\BF_p}\eta_p(a_\beta t^2+b_\beta t+c_\beta)+\eta_p(a_\beta).
	\end{align*}
	
	Set $a_\beta\neq0$. The quadratic form $Q_\beta(u,v)$ has rank 2 if and only if its discriminant $\Delta=b_\beta^2-4a_\beta c_\beta\neq0$, then
	\[
	\sum_{t\in\BF_p}\eta_p(a_\beta t^2+b_\beta t+c_\beta)
	=\eta_p(a_\beta)\sum_{y\in\BF_p}\eta_p(y^2+D),
	\]
	where $y=t+b_\beta/(2a_\beta)$ and $D=(4a_\beta c_\beta-b_\beta^2)/(4a_\beta^2)$. Note that
	\[
	\begin{aligned}
	&\sum_{y\in\BF_p}(1+\eta_p(y^2+D))\\
	&\quad=\#\{(x,y)\in\BF_p^2:x^2-y^2=D\}=p-1.
	\end{aligned}
	\]
	Then   $\sum_{y\in\BF_p}\eta_p(y^2+D)=-1$  and
	\[\Sigma(Q_\beta)=-\eta_p(a_\beta)+\eta_p(a_\beta)=0.\]
	
	If $a_\beta=0$, $Q_\beta(u,v)$ has rank 2 if and only if $b_\beta\neq0$. Then
	\[
	\begin{aligned}
	\Sigma(Q_\beta)&=\sum_{t\in\BF_p}\eta_p(Q_\beta(1,t))+\eta_p(Q_\beta(0,1))\\
	&=\sum_{t\in\BF_p}\eta_p(b_\beta t+c_\beta t^2)+\eta_p(c_\beta).
	\end{aligned}
	\]
	By the above discussion, we also have $\Sigma(Q_\beta)=0$.
	
	If the quadratic form $Q_\beta(u,v)$ has rank $1$, then there exist a constant $\lambda$ and a linear form $L(u,v)=ru+sv\neq0$ such that
	\[Q_\beta(u,v)=\lambda (L(u,v))^2.\]
	Note that the only zero of $L(u,v)$ in $\mathbb P^1(\BF_p)$ is exactly one point $[u:v]=[s:-r]$. Then $\eta_p(Q_\beta(s,-r)) = \eta_p(0) = 0$. For other nonzero points, we know
	\[\eta_p\left(Q_\beta(u,v)\right) = \eta_p\left(\lambda \cdot L(u,v)^2\right) = \eta_p(\lambda) \cdot 1 = \eta_p(\lambda).\] Therefore,
	\[\Sigma(Q_\beta)=p\eta_p(\lambda).\]
	
	Since the map $\beta\mapsto Q_\beta$ is an $\BF_p$-isomorphism, then
	\[\#\{\beta\in K:\ \rank(Q_\beta)=1\}=(p+1)(p-1)=p^2-1, \]
	and the remaining $p^3-p^2$ nonzero quadratic forms all have rank $2$.
	
	We next determine the value of $\eta_K(\beta)$ when $\rank(Q_\beta)=1$. Take $\Gal(K/\BF_p)=\{\mathrm{id},\sigma,\sigma^2\}$, where $\sigma(x)=x^p$ for any $x\in K$. Then $Q_\beta(u,v)$ can be rewritten as
	\begin{equation*}
	Q_\beta(u,v)=\sum_{g\in\Gal(K/\BF_p)}g(\beta^{p^2}(u+av)\sigma(u+av)).
	\end{equation*}
	Take $v=1$ and $u=t$, we have
	\[
	\begin{aligned}
	P_\beta(t)&:=Q_\beta(t,1)\\
	&=\sum_{g\in\Gal(K/\BF_p)}g(\beta^{p^2}(t+a)\sigma(t+a))\\
	&=a_\beta t^2+b_\beta t+c_\beta.
	\end{aligned}
	\]
	Then
	\[
	\begin{aligned}
	P_\beta(-a)
	&=\beta\prod_{\mathrm{id}\neq g\in\Gal}(-a+g(a))\\
	&=\beta\prod_{\mathrm{id}\neq g\in\Gal}(a-g(a))
	=f'_a(a)\beta.
	\end{aligned}
	\]
	
	Since $Q_\beta(u,v)=\lambda L(u,v)^2=\lambda(ru+sv)^2$, we get
	\[
	\beta=\lambda\,\frac{(s-ra)^2}{f_a'(a)}.
	\]
	Hence, \begin{equation}\label{disc}
	\eta_K(\beta)
	=\eta_K(\lambda)\eta_K(f_a'(a))=\eta_p(\lambda)\eta_K(f_a'(a)).
	\end{equation}
	Note that $f_a(x)=\prod\limits_{i=0}^{2}(x-a^{\sigma^i})$, then its discriminant is
	\[\disc(f_a)=\prod_{0\leq i<j\leq 2}(a^{\sigma^i}-a^{\sigma^j}) ^2 =\det(M_{\rm van})^2,\]
	where
	\[M_{\rm van}=\begin{pmatrix}
	1 & a & a^2 \\
	1 & a^{p} & a^{2p} \\
	1  & a^{p^2} & a^{2p^2}
	\end{pmatrix}.\]
	Note that Frobenius cyclically permutes the rows of $M_{\rm van}$, and this permutation has sign $(-1)^{3-1}=1$. Hence $\det(M_{\rm van})$ is fixed by $\sigma$, and
	\[
	\disc(f_a)=\sigma(\disc(f_a))
	\implies \disc(f_a)\in\BF_p
	\]Then \[\eta_p(\disc(f_a))=1.\]
	Since $\disc(f_a)=-N_{K/\BF_p}(f_a'(a))$ and $s_a(\beta)=p\eta_p(\lambda)$, by ~\eqref{disc},
	\[
	\eta_K(\beta)=\theta\,\frac{s_a(\beta)}p.
	\]
	
	Since $\sum_{x\in K}\eta_K(x)=0$, the total contribution of all rank-1 quadratic forms to the  character  sum is zero  . Therefore the  rank-2 forms   are split evenly between square and nonsquare values of $\eta_K(\beta)$ .
	
	Note that
	\[\#\operatorname{PGL}_2(\mathbb F_p)=\frac{\#\GL_2(\BF_p)}{\#\BF_p^*}=p(p^2-1)=p^3-p
	=\#(K\setminus\mathbb F_p),\]
	$\operatorname{PGL}_2(\mathbb F_p)$ acts on the projective line $\mathbb P^1(\overline{\BF}_p)$ via fractional linear transformations, i.e.,
	\[g(z)=\frac{rz+s}{tz+u},\quad g=\begin{pmatrix}
	r & s \\
	t & u
	\end{pmatrix}\in \GL_2(\BF_p)\]
	Suppose $g(a)=a$,
	\[
	\begin{aligned}
	\frac{ra+s}{ta+u}=a
	&\implies ra+s=a(ta+u)\\
	&\implies ta^2+(u-r)a-s=0.
	\end{aligned}
	\]
	Since the minimal polynomial of $a$ has degree 3 over $\BF_p$, there is no nonzero degree$\leq2$ polynomial vanishing at $a$.  Thus all coefficients must be zero
	\[t=0,\quad u-r=0,\quad s=0,\]this means $g=r\cdot \mathrm{Id}_2$, where $\mathrm{Id}_2$ is the identity matrix of $\GL_2(\mathbb F_p)$ and $[\mathrm{Id}_2]$ is the identity element of $\operatorname{PGL}_2(\mathbb F_p)$. The stabilizer of every $a\in K\setminus \BF_p$ is trivial, orbit-stabilizer theorem gives that the size of any orbit is equal to $\#\operatorname{PGL}_2(\mathbb F_p)=\#(K\setminus\mathbb F_p)=p^3-p$, this means that there is only one orbit, so the action is transitive (and free).
	
	If we restrict to the affine subgroup
	$\operatorname{Aff}_1(\BF_p)=\left\{\begin{pmatrix}
	r & s \\
	0 & 1
	\end{pmatrix}\right\}$ (size $p(p-1)$),
	or  the Frobenius cyclic subgroup generated by $x\mapsto x^p$ (size $3$), then orbits of $a$ become smaller
	\begin{itemize}
		\item Frobenius orbit of $a$ $\{a,a^p,a^{p^2}\}$,
		\item Affine orbit size divides $p(p-1)$.
	\end{itemize}
	
	In Table~\ref{tab:joint-dist}, half quadratic form with $\rank(Q_\beta)=2$ are square, half nonsquare. This holds for each individual parameter $a$ regardless of its orbit. Therefore the count is local to the element, not dependent on its symmetry group.
\end{proof}

By Theorems~\ref{th:weight3} and \ref{valsa}, we have
\begin{theorem}\label{thm:d=3}
	Let $m=3e$,  where  $e\geq3$  is odd . Then the code $\C_{D_a}$ has parameter $[p^{m-2}-1,m]_p$, and the weight enumerator is
	\[W_{\C_{D_a}}(z)=1+B_-z^{(p-1)(p^{m-3}- p^{(m-3)/2})}+B_0z^{(p-1)p^{m-3}}+B_+z^{(p-1)(p^{m-3}+ p^{(m-3)/2})},\]
	where
	\begin{align*}
		B_-=\frac{p^2-1}{2}(p^{m-3}+p^{(m-3)/2}),\\
	B_0=p^{m-3}-1+(p-1)p^{m-1},\\
	B_+=\frac{p^2-1}{2}(p^{m-3}-p^{(m-3)/2}).
	\end{align*}
	Therefore, its minimum distance is
	
	\[d_{\min}=(p-1)(p^{m-3}-p^{(m-3)/2}).\]
\end{theorem}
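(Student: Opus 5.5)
The argument specializes Theorem~\ref{th:weight3} to $d=3$, $e\ge3$ odd, and feeds in the distribution of $s_a(\beta)$ from Theorem~\ref{valsa}. Since $m=3e$ is odd and $m\ge9\ge5$, Theorem~\ref{th:weight3} gives length $n=p^{m-2}-1$ by Lemma~\ref{om1} and dimension $m$. For $\beta\in K$ it also gives
\[
w_a^{\rm odd}(\beta)=(p-1)p^{m-3}-(p-1)\varepsilon_m p^{(m-5)/2}s_a(\beta),
\]
where the weight class $\beta$ has multiplicity $\widehat R_e(\beta)$. This multiplicity is $q_K^{e-1}-1=p^{m-3}-1$ for $\beta=0$, and $p^{m-3}+\delta_e\eta_K(\beta)p^{(m-3)/2}$ for $\beta\ne0$, using $q_K^{e-1}=p^{m-3}$ and $q_K^{(e-1)/2}=p^{(m-3)/2}$.

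By Table~\ref{tab:joint-dist}, $s_a(\beta)\in\{-p,0,p\}$. So there are only three weights:
\[
(p-1)p^{m-3}\quad\text{and}\quad (p-1)\bigl(p^{m-3}\mp\varepsilon_m p^{(m-3)/2}\bigr),
\]
the latter arising when $s_a(\beta)=\pm p$.

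The frequencies come from grouping the $\beta$'s according to Table~\ref{tab:joint-dist}.
\begin{itemize}
\item For $s_a(\beta)=0$: the contribution is $(p^{m-3}-1)$ from $\beta=0$, plus $(p^3-p^2)/2$ values with $\eta_K=1$ and $(p^3-p^2)/2$ with $\eta_K=-1$. The $\pm\delta_e p^{(m-3)/2}$ terms cancel, leaving $p^{m-3}-1+(p^3-p^2)p^{m-3}=p^{m-3}-1+(p-1)p^{m-1}$, which is $B_0$.
\item For $s_a(\beta)=\theta p$ (so $\eta_K(\beta)=\theta^2=1$, by the table): there are $(p^2-1)/2$ values, each with multiplicity $p^{m-3}+\delta_e p^{(m-3)/2}$.
\item For $s_a(\beta)=-\theta p$ (so $\eta_K(\beta)=-1$): there are $(p^2-1)/2$ values, each with multiplicity $p^{m-3}-\delta_e p^{(m-3)/2}$.
\end{itemize}

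The main obstacle is the sign bookkeeping. The table labels rows by $s_a=\pm p$ with $\eta_K(\beta)=\pm\theta$; I read it as $\eta_K(\beta)=\theta\, s_a(\beta)/p$, as derived in the proof of Theorem~\ref{valsa}. The weight attached to $s_a=\sigma p$ is $(p-1)(p^{m-3}-\sigma\varepsilon_m p^{(m-3)/2})$, and its multiplicity carries the sign $\delta_e\eta_K(\beta)=\delta_e\theta\sigma$.

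I must verify that the smaller weight always gets the larger multiplicity, as the statement asserts. Taking $\sigma=\varepsilon_m$, this requires $\delta_e\theta\varepsilon_m=1$. Here $\eta_K(-1)=\theta^3=\theta$, so $\delta_e=\theta^{(e+3)/2}$ and $\varepsilon_m=\theta^{(3e+3)/2}$. Their product with $\theta$ has exponent
\[
\frac{e+3}{2}+\frac{3e+3}{2}+1=2e+4,
\]
which is even, so $\delta_e\theta\varepsilon_m=1$ and the identification holds. This yields $B_-=\tfrac{p^2-1}{2}(p^{m-3}+p^{(m-3)/2})$ on weight $(p-1)(p^{m-3}-p^{(m-3)/2})$, and $B_+$ correspondingly.

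As sanity checks, $B_-+B_0+B_+=p^m-1$, and the first Pless moment in Proposition~\ref{pro:dim}(5) should hold. Since $B_->0$, the minimum distance is the smallest weight $(p-1)(p^{m-3}-p^{(m-3)/2})$, consistent with~\eqref{mo13} with $M_a=p$ (attained because $e\ge3$ makes every $\beta$ reachable).
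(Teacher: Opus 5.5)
Your proposal is correct and follows the paper's route exactly: the paper obtains this theorem by combining Theorem~\ref{th:weight3} with the distribution of $s_a(\beta)$ in Theorem~\ref{valsa}. Your sign check $\delta_e\theta\varepsilon_m=\theta^{2e+4}=1$, which ensures that the smaller weight receives the larger frequency, supplies the bookkeeping the paper leaves implicit.
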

\begin{example}
Take $p=3$, $m=9$, and let $a\in\BF_{3^3}\subset\BF_{3^9}$ be a root of
$T^3-T-1\in\BF_3[T]$. Then $d=3$ and $e=3$. An exhaustive computation gives
\[
n=2186,\qquad \dim\C_{D_a}=9,\qquad d_{\min}=1404,
\]
and
\[
W_{\C_{D_a}}(z)
=1+3024z^{1404}+13850z^{1458}+2808z^{1512}.
\]
\end{example}
\section{More properties for $\C_{D_a}$}
In this section, we first investigate the parameters of the punctured code   $\C_{R_a}$. Subsequently, we study the self-orthogonality, divisibility and projectivity of both $\C_{D_a}$ and its punctured code, along with the minimum distance of their dual codes.

Select one representative from each $\BF_p^*$-orbit, and denote this set by
\[R_a=\{r_1,r_2,\ldots,r_s\},\]
	where $s=\#D_a/(p-1)$. Then we have

\begin{theorem}  \label{puncode-da} Let $\C_{D_a}$ be a code over   $\BF_p$  with defining set $D_a$, and let $\C_{R_a}$ be its punctured code. Then the code $\C_{R_a}$ has length $\#D_a/(p-1)$, dimension $\dim(\C_{R_a})=\dim(\C_{D_a})=m-\log_p\mathcal H$ and its weight frequencies satisfy
	\[A_j(\C_{R_a})=A_{(p-1)j}(\C_{D_a}).\]
\end{theorem}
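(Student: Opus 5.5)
The plan is to use the free action of $\BF_p^*$ on $D_a$ by scalar multiplication. Both defining conditions are homogeneous of degree two, so $\Tr((tx)^2)=t^2\Tr(x^2)$ and $\Tr(a(tx)^2)=t^2\Tr(ax^2)$ for $t\in\BF_p^*$. Hence $tx\in D_a$ whenever $x\in D_a$. Since $x\neq0$, we have $tx=x$ only when $t=1$, so every orbit has exactly $p-1$ elements. Therefore
\[
D_a=\bigsqcup_{i=1}^{s}\BF_p^*r_i,\qquad s=\#D_a/(p-1),
\]
which already gives the length of $\C_{R_a}$.

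Next, compare codewords coordinatewise. For $b\in\BF_q$, let $\mathbf c_b'=(\Tr(br_i))_{1\le i\le s}$. Linearity of the trace gives $\Tr(b\,tr_i)=t\Tr(br_i)$. On the orbit of $r_i$, the codeword $\mathbf c_b$ therefore takes the values $(t\Tr(br_i))_{t\in\BF_p^*}$. These values are all zero if $\Tr(br_i)=0$, and all nonzero otherwise. It follows that
\[
\wt(\mathbf c_b)=(p-1)\,\wt(\mathbf c_b').
\]
In particular, $\mathbf c_b=0$ if and only if $\mathbf c_b'=0$. So the evaluation map $\phi'$ with $c\mapsto(\Tr(cr_i))_i$ has the same kernel as $\phi$, namely of size $\mathcal H$. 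Consequently $\dim\C_{R_a}=m-\log_p\mathcal H=\dim\C_{D_a}$ by Theorem~\ref{th:weight2}.

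For the weight frequencies, note that both $\phi$ and $\phi'$ are $\mathcal H$-to-one onto their images with the same kernel. The map $\mathbf c_b\mapsto\mathbf c_b'$ is thus a well-defined bijection $\C_{D_a}\to\C_{R_a}$, and it scales weights by exactly $1/(p-1)$. This gives
\[
A_j(\C_{R_a})=\#\{\mathbf c'\in\C_{R_a}:\wt(\mathbf c')=j\}=\#\{\mathbf c\in\C_{D_a}:\wt(\mathbf c)=(p-1)j\}=A_{(p-1)j}(\C_{D_a}).
\]
It is consistent with this that every weight of $\C_{D_a}$ is divisible by $p-1$, as the explicit formulas in Theorems~\ref{th:weight1} and~\ref{th:weight3} show.

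There is no substantial obstacle; the only point needing care is the well-definedness of the bijection, that is, checking that $\ker\phi=\ker\phi'$. This follows directly from the orbit-wise weight identity above. The degenerate case $D_a=\emptyset$ (Proposition~\ref{pro:dim}) is consistent trivially, with both codes zero and $\mathcal H=p^m$.
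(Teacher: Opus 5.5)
Your proposal is correct and follows essentially the same route as the paper: you decompose $D_a$ into $\BF_p^*$-orbits, use $\Tr(b\,tr_i)=t\Tr(br_i)$ to get $\wt(\mathbf c_b)=(p-1)\wt(\mathbf c_b')$, and conclude that the kernel of the evaluation map is unchanged and that weights scale by $p-1$, while also spelling out the freeness of the action and the well-definedness of the bijection $\C_{D_a}\to\C_{R_a}$, which the paper leaves implicit. One small remark: since $\mathcal H=\#\ker\phi$ by definition, $\dim\C_{D_a}=m-\log_p\mathcal H$ is simply rank--nullity and holds for odd $m$ as well, so you need not cite Theorem~\ref{th:weight2}, which is stated only in the even-$m$ subsection.
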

\begin{proof}
	Choosing an orbit $[r_i]=\{\lambda r_i: \lambda\in\BF_p^*\}$, its coordinates are
	\[\Tr(b\lambda r_i)=\lambda\Tr(br_i).\]
	
	If $\Tr(br_i)\neq0$, the code $\C_{D_a}$ has all $p-1$ nonzero coordinates in this orbit, otherwise, all are zero. Hence, the weight of $\C_{D_a}$ is exactly $p-1$ times the weight of $\C_{R_a}$, and the kernel of the map $\phi$ is invariant.
\end{proof}

\subsection{Self-orthogonality}
\begin{lemma}[\cite{2}]\label{lemma:3self}
	Let $p=3$ and $\C$ be a linear code over $\BF_3$. Then $\C$ is self-orthogonal if and only if the weights of all codewords are divisible by three.
\end{lemma}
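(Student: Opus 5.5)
The plan is to prove the two implications separately. Both reduce to computing $\mathbf c\cdot\mathbf c$ for a single codeword and to polarizing over pairs of codewords. Throughout we work over $\BF_3$, where the only nonzero scalars are $\pm1$.

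\emph{Necessity.} Suppose $\C$ is self-orthogonal and take any $\mathbf c=(c_1,\ldots,c_n)\in\C$. Then $\mathbf c\cdot\mathbf c=0$. Every nonzero $c_i\in\BF_3^*=\{1,2\}$ satisfies $c_i^2=1$, so
\[
\mathbf c\cdot\mathbf c=\sum_{i=1}^n c_i^2=\wt(\mathbf c)\pmod 3 .
\]
Hence $\wt(\mathbf c)\equiv0\pmod3$.

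\emph{Sufficiency.} Now assume every codeword has weight divisible by $3$. By the identity above, $\mathbf c\cdot\mathbf c=0$ for all $\mathbf c\in\C$. For $\mathbf x,\mathbf y\in\C$ we polarize:
\[
(\mathbf x+\mathbf y)\cdot(\mathbf x+\mathbf y)=\mathbf x\cdot\mathbf x+2\,\mathbf x\cdot\mathbf y+\mathbf y\cdot\mathbf y .
\]
Since $\mathbf x+\mathbf y\in\C$ by linearity, the left side and the two outer terms on the right all vanish. This leaves $2\,\mathbf x\cdot\mathbf y=0$. Because $2$ is invertible in $\BF_3$, we get $\mathbf x\cdot\mathbf y=0$, and so $\C\subseteq\C^\perp$.

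There is no real obstacle here. The only points to check are that $c^2=1$ for every $c\in\BF_3^*$, and that the characteristic is odd, so that the factor $2$ in the polarization can be cancelled.
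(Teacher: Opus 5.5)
Your proof is correct. The identity $\mathbf c\cdot\mathbf c\equiv\wt(\mathbf c)\pmod 3$, combined with polarization and the invertibility of $2$ in $\BF_3$, is the standard textbook argument for this fact; the paper gives no proof of its own and simply cites the result from \cite{2}.
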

\begin{theorem}\label{th:sef}
	Let $\C_{D_a}$ be a nonzero code over   $\BF_p$  with defining set $D_a$, and let $\C_{R_a}$ be its punctured code. The following holds.
	\begin{enumerate}[1]
		\item If $2\nmid m$, then $\C_{D_a}$ is self-orthogonal if and only if $p>3$ or $p=3$ and $m\ge 7$.
		\item If $2\mid m$, then $\C_{D_a}$ is self-orthogonal if and only if
		$p>3$ or $p=3$ and $m\geq 6$.
	\end{enumerate}
\end{theorem}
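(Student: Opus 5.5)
The plan is to reduce self-orthogonality of $\C_{D_a}$ to a condition on the weights, using the $\BF_p^*$-invariance of $D_a$.

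\textbf{Reduction.} Since $p$ is odd, the standard inner product is symmetric and $2$ is invertible. Polarization gives $\mathbf c\cdot\mathbf c'=\tfrac12\bigl((\mathbf c+\mathbf c')\cdot(\mathbf c+\mathbf c')-\mathbf c\cdot\mathbf c-\mathbf c'\cdot\mathbf c'\bigr)$. Hence $\C_{D_a}$ is self-orthogonal if and only if $\sum_{x\in D_a}\Tr(bx)^2=0$ in $\BF_p$ for every $b\in\BF_q$.

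\textbf{Evaluating the sum.} $D_a$ is stable under $x\mapsto\lambda x$ with $\lambda\in\BF_p^*$, and each orbit is free. So, as in the proof of Theorem~\ref{puncode-da}, every orbit on which $\Tr(bx)\ne0$ takes each nonzero value of $\BF_p$ exactly once. Therefore
\[
\sum_{x\in D_a}\Tr(bx)^2=\frac{\wt(\mathbf c_b)}{p-1}\sum_{t\in\BF_p^*}t^2=\frac{\wt(\mathbf c_b)}{p-1}\cdot\frac{(p-1)p(2p-1)}{6}.
\]
For $p>3$ the integer $p(2p-1)/6$ is divisible by $p$, since $\gcd(p,6)=1$. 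Thus every nonzero code $\C_{D_a}$ is self-orthogonal when $p>3$, in both parity cases.

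\textbf{The case $p=3$.} Here I would use Lemma~\ref{lemma:3self}: self-orthogonality is equivalent to $3\mid\wt(\mathbf c_b)$ for all $b$.
\begin{itemize}
\item \emph{Odd $m$.} Theorem~\ref{th:weight3} gives $\wt(\mathbf c_b)=2\bigl[3^{m-3}-\varepsilon_m3^{(m-5)/2}S_a(b)\bigr]$ with $S_a(b)\in\BZ$. For $m\ge7$ both terms are divisible by $3$, so the code is self-orthogonal. For $m=3$ the code is $[2,1,2]_3$, whose weight $2$ is not divisible by $3$. For $m=5$ the weight is $2[9-\varepsilon_5S_a(b)]$, so self-orthogonality would force $S_a(b)\equiv0\pmod3$ for all $b\ne0$.
\item \emph{Even $m$.} Theorem~\ref{th:weight1} gives $\wt(\mathbf c_b)=2\bigl[3^{m-3}-\delta 3^{m/2-2}(h(a)-Z_a(\beta(b)))\bigr]$. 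This is divisible by $3$ once $m\ge6$. For $m=2$, and for $m=4$ with $d=2$, Proposition~\ref{pro:dim} shows $D_a=\emptyset$, so the code is zero and excluded. For $m=4$, $d=4$ the weight is $2[3-\delta(h(a)-Z_a(\beta(b)))]$, so self-orthogonality would force $h(a)\equiv Z_a(\beta(b))\pmod 3$ for all $b\ne0$.
\end{itemize}

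\textbf{Main obstacle.} The hard step is exhibiting, for every admissible $a$ when $p=3$ and $m\in\{4,5\}$, some $b$ that violates the congruence above. I would first try a moment argument. Summing $S_a(b)$, resp.\ $Z_a(\beta(b))$, over $b\in\BF_q^*$ with the fibre sizes $R_e(\beta)$ of Lemma~\ref{bb} should express the total through character sums that are computable in closed form. The aim is to show this total is not congruent to $0$ modulo $3$ times the number of terms, which would rule out all terms vanishing modulo $3$. If that fails, I would combine the first two Pless moments of Proposition~\ref{pro:dim}(5) with the hypothesis that all weights lie in $6\BZ$, and look for a contradiction modulo a power of $3$. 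As a last resort, the parameters $a$ fall into finitely many $\operatorname{PGL}_2(\BF_3)$- and Frobenius-classes in $\BF_{81}$ and $\BF_{243}$. A finite check over these classes, consistent with the weight enumerator of the $m=5$ example (weights $10,14,\dots$), would settle these two cases.
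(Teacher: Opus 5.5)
Your polarization reduction, the orbit computation, and the cases $p>3$, $p=3$ with $m\ge7$ odd or $m\ge6$ even, and $m=3$ are correct and agree with the paper. One slip: $p(2p-1)/6$ is not an integer in general (e.g.\ $p=5$). What you need is that the integer $\sum_{t=1}^{p-1}t^2=(p-1)p(2p-1)/6$ is divisible by $p$, i.e.\ $\sum_{t\in\BF_p^*}t^2=0$ in $\BF_p$ since $p-1\nmid 2$.

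The genuine gap is the `only if' direction for $p=3$, $m=5$ (and for $m=4$, $d=4$). You leave these cases open, and the tools you propose cannot close them. For $m=5$ you have $\wt(\mathbf c_b)=2(9-S_a(b))$, since $\varepsilon_5=1$. The first two Pless moments force $\sum_{b\ne0}S_a(b)=72$ and $\sum_{b\ne0}S_a(b)^2=648$ for \emph{every} $a$. Both are realized exactly by the hypothetical enumerator $1+48z^{12}+170z^{18}+24z^{24}$, all of whose weights lie in $6\BZ$. So neither summing $S_a(b)$ nor the first two moments can produce a contradiction modulo $3$. A single example says nothing about other $a$, and the finite check is a proof only once it is carried out over all classes.

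What is missing is an exact mod-$3$ evaluation. The paper writes the Gram form as $\sum_{x\in\BF_3^5}L_b(x)L_c(x)(1-Q_0(x)^2)(1-Q_a(x)^2)$. Over $\BF_3$ only the monomials $x_1^2\cdots x_5^2$ of $L_bL_cQ_0^2Q_a^2$ survive. After diagonalizing the pencil over $\overline{\BF}_3$, the diagonal entries are $\prod_{j\ne i}c_j\,(a_i^2-F_1a_i+F_2)\neq0$, so the Gram matrix has rank $5$.

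Within your own framework there is a shorter route. Since $\eta_3(t)=t$ in $\BF_3$, you get $S_a(b)\equiv\Tr(b^2\gamma)\pmod 3$ with $\gamma=\sum_{[c]\in\mathbb P(\BF_3+\BF_3a)}\eta_q(c)c^{-1}$. This is well defined because $\eta_q(\lambda)\lambda^{-1}=1$ for $\lambda\in\BF_3^*$. Moreover $\gamma\ne0$: clearing denominators in $\pm1\pm a^{-1}\pm(a+1)^{-1}\pm(a-1)^{-1}$ gives a nonzero polynomial of degree $3<5$ in $a$. The $121$ nonzero squares of $\BF_{3^5}$ cannot fit in a hyperplane, so some $b$ has $3\nmid\wt(\mathbf c_b)$.

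For $m=4$, $d=4$, which the paper's proof does not treat explicitly, a dimension count suffices. By your identity with $\sum_{t\in\BF_3^*}t^2=2$, $\C_{D_a}$ is self-orthogonal iff $\C_{R_a}$ is, and $s=4-h(a)\le 6$. The set $X_a$ lies on the quadric $\Tr(x^2)=0$ in $\mathbb P^3(\BF_3)$, which is elliptic (nonsquare discriminant, Lemma~\ref{quadratic}). Hence it meets every line in at most $2$ points and every plane in at most $4$ points. Therefore $\dim\C_{R_a}>s/2$, which is impossible for a self-orthogonal code of length $s$.
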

\begin{proof}
	In   $\mathbb P^{m-1}(\BF_p)$ , a projective direction  through the origin can be denoted by
	\[[x]=\{tx:\ t\in\BF_p^*\},\quad x\neq0.\]
	Then $D_a$ satisfies the property that  for every projective point   $[x]\in \mathbb P^{m-1}(\BF_p)$ , either all non-zero multiples $tx$ ($t\in\BF_p^*$) are contained in $D_a$ or none are. Then
	\begin{equation}  \label{uni-da} D_a=\bigcup_{i=1}^{s}\{tr_i:\ t\in\BF_p^*\},\ [r_i]\in  \mathbb P^{m-1}(\BF_p).\end{equation}
	Hence, $n=s(p-1)$.
	
	For $b,c\in\BF_{q}$,   the contribution of one orbit is
	 \[
	 \langle \mathbf c_b,\mathbf c_c\rangle_{[r_i]}
	=
	\sum_{t\in\BF_p^*}
	\Tr(btr_i)\Tr(ctr_i)
	=
	\Tr(br_i)\Tr(cr_i)
	\sum_{t\in\BF_p^*}t^2.
	\]
	 Summing over all orbits gives the full inner product.  For $p\geq5$, \[
	\sum_{t\in\BF_p^*}t^2=0,
	\]hence the code $\C_{D_a}$ is self-orthogonal.
	
	For $p=3$ and $b\in\BF_q^*$, by Theorems~\ref{th:weight1} and \ref{th:weight3}, if $2\mid m\ge 6$, then
	\[\wt(\mathbf c_b)=2\left( 3^{m-3}-\delta3^{m/2-2}(h(  a )-Z_a(\beta(b)))\right).\] If $2\nmid m\ge 7$, then \[\wt(\mathbf c_b)=2\left(3^{m-3}-  \varepsilon _m3^{(m-5)/2}s_a(\beta(b))\right).\]
	Therefore, the ternary code is self-orthogonal by Lemma~\ref{lemma:3self}.
	
	For $p=3$ and $m=3$, the code is $[2,1,2]_3$. A non-zero generator has the form $(s,-s)$ with $s\neq0$, and
	\[
	s^2+(-s)^2=2s^2=-s^2\ne0.
	\]
	Hence the code is not self-orthogonal.
	
It remains to consider $m=5$. Necessarily,	\[\BF_3(a)=\BF_{3^5}.\]
Fix an arbitrary $\BF_3$-basis of $\BF_{3^5}$, and regard 
\[Q_0(x)=\Tr(x^2),\qquad Q_a(x)=\Tr(ax^2)\] as quadratic forms in five variables. For the linear forms $L_b$ and $L_c$ corresponding to $b,c\in\BF_{3^5}$, the Gram bilinear form is
\begin{equation}\label{eq:27}
	\sum_{x\in\BF_3^5}L_b(x)L_c(x)
	(1-Q_0(x)^2)(1-Q_a(x)^2).
\end{equation}

In fact, over $\BF_3$, the sum of a monomial over any variable vanishes whenever the degree of that variable is either zero or odd. Since every term in ~\eqref{eq:27} has total degree at most \[10=2m,\]
the only potentially nonzero contribution arises from
\[L_bL_cQ_0^2Q_a^2,\]
and only from those monomials in which every variable occurs with degree exactly two.

Extend the scalars to an algebraic closure. The multiplication operator $x\mapsto ax$ is separable and self-adjoint with respect to the trace pairing. Hence there exist simultaneous orthogonal coordinates such that
\[
Q_0=\sum_{i=1}^5c_i x_i^2,\qquad
Q_a=\sum_{i=1}^5c_i a_i x_i^2,\]where $c_i\neq0$ and $a_1,a_2,\ldots,a_5$ are the five Galois conjugates of $a$. Set
\[F_1=\sum_{i=1}^{5}a_i,\qquad F_2=\sum_{1\le i<j\le 5}a_ia_j.\]

With respect to the corresponding dual coordinates, the Gram matrix associated with ~\eqref{eq:27} is diagonal. Up to a common nonzero scalar, its $i$-th diagonal entry equals
\[
\left(\prod_{j\ne i}c_j\right)
\sigma_2(a_1,\ldots,\widehat{a_i},\ldots,a_5),
	\]where the hat denotes omission and $\sigma_2$ denotes the second elementary symmetric polynomial. Moreover,
\begin{equation}\label{eq:30}
	\sigma_2(a_1,\ldots,\widehat{a_i},\ldots,a_5)
	=a_i^2-F_1a_i+F_2.
\end{equation}
If Eq.~\eqref{eq:30} is zero, then $a_i$, whose degree over $\BF_3$ is 5, would satisfy a monic quadratic polynomial over $\BF_3$, which is impossible.

Therefore, every diagonal entry is nonzero, the Gram matrix has rank 5, and $\C_{D_a}$ is not self-orthogonal for any $a\in\BF_{3^5}\setminus\BF_3$. This completes the proof.
\end{proof}
\begin{example}
Take $p=3$, $m=6$, and let $a\in\BF_{3^3}\subset\BF_{3^6}$ be a root of
$T^3-T-1\in\BF_3[T]$. An exhaustive computation gives $n=104$ and
\[
W_{\C_{D_a}}(z)
=1+52z^{54}+312z^{66}+208z^{72}+156z^{78}.
\]
Moreover, for a generator matrix $G$ of rank $6$, the program returns
\[
GG^T=0_{6\times6}.
\]
Thus $\C_{D_a}$ is self-orthogonal, equivalently, all its nonzero
weights are divisible by $3$.
\end{example}
We next study the self-orthogonality of $\C_{R_a}$.
\begin{theorem}
Let $e_1,\ldots,e_m$ be an $\BF_p$-basis of $\BF_{p^m}/\BF_p$, and let
\[
g(r)=
\bigl(\Tr(e_1r),\ldots,
\Tr(e_mr)\bigr)^T,
\qquad r\in R_a.
\]
The punctured code $\C_{R_a}$ is self-orthogonal if and only if
\begin{equation}\label{mo17-da}
\sum_{r\in R_a}g(r)g(r)^T=0.
\end{equation}
If the representative $r_i$ of each orbit is rescaled to $t_ir_i$ with $t_i\in\BF_p^*$, then the condition becomes
\begin{equation}\label{mo18}
\sum_i t_i^2g(r_i)g(r_i)^T=0.
\end{equation}
\end{theorem}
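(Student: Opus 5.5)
The plan is to write down a generator matrix of $\C_{R_a}$ explicitly and reduce self-orthogonality to the vanishing of its Gram matrix. Put $s=\#R_a$ and let
\[
G=\bigl(g(r_1),\ldots,g(r_s)\bigr)
\]
be the $m\times s$ matrix whose columns are the vectors $g(r)$, $r\in R_a$. For $b=\sum_i y_ie_i$ with $y=(y_1,\ldots,y_m)\in\BF_p^m$, linearity of the trace gives
\[
\Tr(br)=\sum_i y_i\Tr(e_ir)=y^Tg(r).
\]
Hence the codeword $(\Tr(br))_{r\in R_a}$ equals $y^TG$. So the row space of $G$ is exactly $\C_{R_a}$, whether or not $G$ has full rank; when $\mathcal H>1$, some rows are dependent, but the row space is unchanged.

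Next I would use the standard fact that a code spanned by the rows of $G$ is self-orthogonal exactly when every pair of rows is orthogonal. This holds because the inner product is bilinear, so orthogonality of the spanning rows extends to all codewords. The pairwise inner products of rows are the entries of $GG^T$, so the condition is $GG^T=0$.

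Computing $GG^T$ column by column gives
\[
GG^T=\sum_{r\in R_a}g(r)g(r)^T.
\]
This is \eqref{mo17-da}. Equivalently, $\langle \mathbf c_b,\mathbf c_c\rangle=y^T\bigl(\sum_r g(r)g(r)^T\bigr)z$, and this vanishes for all $y,z$ if and only if the matrix is zero.

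For the rescaling statement, replacing $r_i$ by $t_ir_i$ gives $g(t_ir_i)=t_ig(r_i)$, because $\Tr$ is $\BF_p$-linear. The rank-one summand therefore becomes $t_i^2g(r_i)g(r_i)^T$, and the same argument yields \eqref{mo18}. I would also note that a different choice of representatives produces a monomially equivalent code (each column is scaled). Its self-orthogonality can genuinely change, since scaling column $i$ by $t_i$ multiplies its contribution to the Gram matrix by $t_i^2$. That is why the theorem states the condition with the weights $t_i^2$ rather than claiming independence of the representatives.

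No step here is a real obstacle. The only point needing care is the remark that $G$ may have dependent rows, which does not affect the criterion.
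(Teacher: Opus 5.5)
Your proposal is correct and follows the paper's proof: both take the matrix with columns $g(r)$, $r\in R_a$, as a generator matrix, identify the Gram matrix $GG^T$ with $\sum_{r\in R_a} g(r)g(r)^T$, and obtain the rescaled condition from $g(t_ir_i)=t_ig(r_i)$. You also check explicitly that the row space of $G$ is $\C_{R_a}$ even when $G$ does not have full rank (i.e.\ when $\mathcal H>1$), a small point the paper leaves implicit.
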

\begin{proof}
Set $s=\#R_a=\#D_a/(p-1)$ and $R_a=\{r_1,r_2,\ldots,r_s\}$. The generator matrix of $\C_{R_a}$ is
\[
G_{R_a}=(g(r_1),g(r_2),\ldots,g(r_s)).
\]
Hence its Gram matrix is
\[
G_{R_a}G_{R_a}^T=\sum_{r_i\in R_a}g(r_i)g(r_i)^T.
\]
Therefore, $\C_{R_a}$ is self-orthogonal if and only if \eqref{mo17-da} holds. Rescaling $r_i$ to $t_ir_i$ multiplies the corresponding column by $t_i$, giving \eqref{mo18}.
\end{proof}
\begin{remark}\label{re:r3}
	For $p=3$, $t_i^2=1$ for any $t_i\in\BF_3^*$. Hence the Gram matrix is invariant under the choice of orbit representatives and non-zero coordinate scalings. Therefore, the punctured code is self-orthogonal exactly when the original code is self-orthogonal, which occurs precisely for odd $m\geq7$ or even $m\geq6$ by Theorem~\ref{th:sef}.
\end{remark}
Remark~\ref{re:r3} settles the case $p=3$, where the weights are trivial ($t_i^2=1$) and self-orthogonality reduces to the underlying orbit structure. For $p>3$, however,  we must allow non-trivial scaling. This leads to the notion of a weighted punctured code, where each projective point $P\in X_a$ is assigned a squared weight $c_P\in \mathcal S_p$.

Take \[\mathscr X_a=V(Q_0,Q_a)\subseteq\mathbb P^{m-1},\qquad
X_a=\mathscr X_a(\BF_p)=D_a/\BF_p^*,\qquad \mathcal S_p:=(\BF_p^*)^2,\qquad P=[r_i]\in X_a.\]
\begin{definition}
	For each $P\in X_a$, choose a representative $r_P$ and let $g_P=g(r_P)=(\Tr(e_1r_P),\ldots,\Tr(e_mr_P))^T\in\BF_p^m$. Assign a weight $c_P\in \mathcal S_p$ to each point $P$. Choose $\gamma_P\in\BF_p^*$ with $\gamma_P^2=c_P$ and define the weighted punctured code $\C$ as the row space of the matrix whose $P$-th column is $\gamma_Pg_P$. Its Gram matrix is
\begin{equation*}
	\sum_{P\in X_a}c_Pg_Pg_P^T\in\Sym_m(\BF_p),\qquad c_P\in \mathcal S_p.
\end{equation*}
\end{definition}
\begin{remark}
	For each $P\in X_a$, select $\gamma_P\in\BF_p^*$ such that $c_P=\gamma_P^2$. Then $\C$ can be generated by the vectors $\gamma_P g_P$, and it is monomially equivalent to the code $\C_{R_a}$. 
	
	If $D_a\neq\emptyset$, the parameters of $\C$ are given by
	\[\operatorname{length}\ \C=s,\qquad \dim\C=\dim\C_{D_a},\qquad d_{\min}(\C)=\frac{d_{\min}(\C_{D_a})}{p-1}.\]
\end{remark}
  A monomially equivalent weighted punctured code is self-orthogonal if and only if there exists a choice of weights $(c_P)_{P\in X_a}\in \mathcal S_p^{X_a}$ such that
\begin{equation}\label{eq:pronewc}
\sum_{P\in X_a}c_Pg_Pg_P^T=0,\qquad c_P\in \mathcal S_p.
\end{equation} or equivalently,
\begin{lemma}
	Set  $A_P=g_Pg_P^T\in\Sym_m(\BF_p)$ and
	\[L_a:\BF_p^{X_a}\longrightarrow\Sym_m(\BF_p),
	\qquad
	(c_P)_{P\in X_a}\longmapsto\sum_{P\in X_a}c_PA_P.
	\]
	There exists a choice of representatives or a non-zero coordinate multiplier preserving self-orthogonality if and only if
	\begin{equation}\label{eq:pronewco2}\ker L_a\cap \mathcal S_p^{X_a}\ne\emptyset.\end{equation}
	The property does not depend on
	\begin{itemize}
		\item the basis of $\BF_{q}/\BF_p$,
		\item the initial representative chosen for each projective point.
	\end{itemize}
\end{lemma}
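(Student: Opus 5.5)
The plan is to reduce everything to the preceding theorem, where the Gram matrix of $\C_{R_a}$ was computed as $G_{R_a}G_{R_a}^T=\sum_{r\in R_a}g(r)g(r)^T$. Any admissible modification considered here is one of two kinds. We may choose a different representative $t_Pr_P$ of the point $P$, or we may multiply the $P$-th coordinate by some nonzero scalar $\gamma_P$. Either way the $P$-th column $g_P$ is replaced by $\gamma_Pg_P$ with $\gamma_P\in\BF_p^*$: for a change of representative, $g(t_Pr_P)=t_Pg(r_P)$ by $\BF_p$-linearity of the trace. By Eq.~\eqref{mo18} the new Gram matrix is $\sum_P\gamma_P^2A_P=L_a\bigl((\gamma_P^2)_P\bigr)$.

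A linear code is self-orthogonal exactly when its Gram matrix vanishes. So the modified code is self-orthogonal iff $(\gamma_P^2)_P\in\ker L_a$. Since $\gamma\mapsto\gamma^2$ maps $\BF_p^*$ onto $\mathcal S_p$, the set of vectors $(\gamma_P^2)_P$ that can arise is exactly $\mathcal S_p^{X_a}$. This gives both directions of the equivalence with \eqref{eq:pronewco2}. Given $(c_P)\in\ker L_a\cap\mathcal S_p^{X_a}$, we choose square roots $\gamma_P$ and obtain a self-orthogonal code. Conversely, a self-orthogonal modification produces such a vector.

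For independence of the initial representatives: switching $r_P$ to $t_Pr_P$ replaces $A_P$ by $t_P^2A_P$. This changes $L_a$ into $L_a\circ\mathrm{diag}(t_P^2)$. Since $\mathrm{diag}(t_P^2)$ is a bijection of $\BF_p^{X_a}$ that preserves $\mathcal S_p^{X_a}$ (as $\mathcal S_p$ is a group), the condition $\ker L_a\cap\mathcal S_p^{X_a}\neq\emptyset$ is unchanged.

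For independence of the basis: a change of $\BF_p$-basis $(e_i)\to(e_i')$ gives $g'(r)=Pg(r)$ for a fixed invertible matrix $P\in\GL_m(\BF_p)$, again by linearity of the trace. Hence $A_P'=PA_PP^T$, so $L_a'=\mathcal P\circ L_a$ where $\mathcal P(S)=PSP^T$ is injective on $\Sym_m(\BF_p)$. Therefore $\ker L_a'=\ker L_a$, and the criterion is basis-free.

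The only step that needs care is the first: checking that every admissible operation acts on columns by a nonzero scalar, and that the attainable weight vectors are exactly $\mathcal S_p^{X_a}$ rather than all of $(\BF_p^*)^{X_a}$. The rest is formal linear algebra.
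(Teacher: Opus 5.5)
Your proposal is correct and follows the same route as the paper. Both directions of the equivalence come from the observation that a change of representative or a coordinate multiplier scales $g_P$ by some $\gamma_P\in\BF_p^*$, so the Gram matrix becomes $\sum_P\gamma_P^2A_P$. The two invariance claims follow from $A_P\mapsto UA_PU^T$ (basis change) and $A_P\mapsto t_P^2A_P$ (rescaled representatives), exactly as in the paper's computations, packaged more compactly as $\mathcal P\circ L_a$ and $L_a\circ\mathrm{diag}(t_P^2)$. One small fix: rename your change-of-basis matrix, since $P$ already indexes the points of $X_a$.
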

\begin{proof}
	\noindent \textbf{(\textperiodcentered $\Rightarrow$) Necessity.}
	Assume that there exists a choice of representatives such that the condition holds.
	Specifically, for each point $P \in X_a$, let the original representative be replaced by $\lambda_P \cdot r_P$ with $\lambda_P \in \BF_p^*$.
	Under this operation, the associated symmetric matrix transforms as $A_P \mapsto \lambda_P^2 A_P$.
	The self-orthogonality condition implies that the weighted sum of these adjusted matrices is the zero matrix
	\[
	\sum_{P \in X_a} \lambda_P^2 A_P = 0.
	\]
	
	Define the vector $(\sigma_P)_{P \in X_a}$ by setting $\sigma_P = \lambda_P^2$ for all $P$.
	Since $\lambda_P \in \BF_p^*$, we have $\sigma_P \in (\BF_p^*)^2 = \mathcal S_p$. Thus, $(\sigma_P) \in \mathcal S_p^{X_a}$.
	
	Furthermore, by definition of $L_a$, the equation $\sum\limits_{P \in X_a} \sigma_P A_P = 0$ means precisely that $(\sigma_P) \in \ker L_a$.
	Consequently, $(\sigma_P) \in \ker L_a \cap \mathcal S_p^{X_a}$, proving that the intersection is non-empty.
	
	\medskip
	\noindent \textbf{(\textperiodcentered $\Leftarrow$) Sufficiency.}
	Conversely, assume that there exists a vector $(\sigma_P)_{P \in X_a} \in \ker L_a \cap \mathcal S_p^{X_a}$.
	Since $(\sigma_P) \in \mathcal S_p^{X_a}$, for every $P \in X_a$, there exists $\lambda_P \in \BF_p^*$ such that $\sigma_P = \lambda_P^2$.
	Since $(\sigma_P) \in \ker L_a$, we have:
	\[
	L_a((\sigma_P)) = \sum_{P \in X_a} \sigma_P A_P = \sum_{P \in X_a} \lambda_P^2 A_P = 0.
	\]
	Now, choose the representative of $P$ to be $\lambda_P \cdot r_P$ (multiplying the initial representative by $\lambda_P$).
	The corresponding matrix becomes $A'_P = \lambda_P^2 A_P$,
	substituting this into the sum yields
	\[
	\sum_{P \in X_a} A'_P = \sum_{P \in X_a} \lambda_P^2 A_P = 0.
	\]
	Thus, the adjusted set of representatives satisfies the required self-orthogonality condition.
	\medskip
	
	We verify that the condition $\ker L_a \cap \mathcal S_p^{X_a} \neq \emptyset$ is independent of the choices of basis, initial representatives, and the allocation of scaling factors.
	
	\begin{enumerate}
		\item \textbf{Independence of the choice of basis of $\BF_{q}/\BF_p$.}
		Changing the basis is equivalent to left-multiplying all vectors $g_P$ by a fixed invertible matrix $U \in \mathrm{GL}_m(\BF_p)$. Consequently, the associated symmetric matrices transform as $A_P \mapsto U A_P U^{T}$.
		Under this transformation, the condition $\sum\limits_{P \in X_a} c_P A_P = 0$ becomes:
		\[
		\sum_{P \in X_a} c_P (U A_P U^{T})
		=U \left(\sum_{P \in X_a}c_PA_P\right)U^T
		=0.
		\]
		Since $U$ is invertible, the solution space of the linear equation remains exactly the same. Moreover, the condition that $c_P$ belongs to the set of non-zero squares $\mathcal S_p$ is also preserved.
		
		\item \textbf{Independence of the initial representative.}
		Replacing the initial representative $g_P$ with $\mu_P g_P$ for some $\mu_P \in \BF_p^*$ results in the transformation $A_P \mapsto \mu_P^2 A_P$.
		If $(c_P)_{P \in X_a} \in \ker L_a$, then $\sum c_P A_P = 0$. Under the new scaling, the coefficients must adjust to $c'_P = c_P / \mu_P^2$ to maintain the vanishing condition:
		\[
		\begin{aligned}
		\sum_{P \in X_a} c'_P (\mu_P^2 A_P)
		&=\sum_{P \in X_a}\left(\frac{c_P}{\mu_P^2}\right)\mu_P^2A_P\\
		&=\sum_{P \in X_a}c_PA_P=0.
		\end{aligned}
		\]
		Since $\mu_P^2 \in \mathcal S_p$, the mapping $c_P \leftrightarrow c'_P$ is a bijection between square factors. Thus, the property that $\ker L_a \cap \mathcal S_p^{X_a} \neq \emptyset$ remains invariant.
	\end{enumerate}
	Therefore, we finish the proof.
\end{proof}

\begin{lemma}
	Let
	\[
	\mathscr W_a = \operatorname{span}_{\BF_p} \{ A_P : P \in X_a \}.
	\]
	Then
	\begin{equation} \label{eq:rank-la}
		\rank(L_a) = \dim \mathscr W_a.
	\end{equation}
	
	Let $V_q$ denote the $m$-dimensional $\BF_p$-space underlying $\BF_q$. With respect to the standard non-degenerate pairing on the space of symmetric matrices $\Sym_m(\BF_p)$ (identified with the quadratic form space $\mathcal Q(V_q)$), the orthogonal complement of $\mathscr W_a$ is characterized as
	\begin{equation} \label{eq:orth-complement}
	\mathscr W_a^\perp = \{ F \in \mathcal Q(V_q):\ F(P):=F(g_P) = 0 \text{ for every } P \in X_a \}.
	\end{equation}
	
	Consequently, 
		\[
		\langle Q_0, Q_a \rangle_{\BF_p} \subseteq \mathscr W_a^\perp,
		\]
		which implies that
		\begin{equation} \label{eq:rank-bound-a-not-in-k}
		\rank(L_a) \leq \binom{m+1}{2} - 2.
		\end{equation}
\end{lemma}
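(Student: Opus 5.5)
The rank identity \eqref{eq:rank-la} follows from the definition of $L_a$. Its image is the $\BF_p$-span of the vectors $L_a(e_P)=A_P$, where $e_P$ runs over the standard basis of $\BF_p^{X_a}$. Hence $\operatorname{im}L_a=\mathscr W_a$ and $\rank(L_a)=\dim\mathscr W_a$.

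For \eqref{eq:orth-complement}, I would work with the trace pairing $\langle F,A\rangle=\Tr(FA)$ on $\Sym_m(\BF_p)$. This pairing is non-degenerate: $p$ is odd, so the pairing matrix in the basis $\{E_{ii}\}\cup\{E_{ij}+E_{ji}\}_{i<j}$ is diagonal with entries $1$ and $2$. Identify a symmetric matrix $F$ with the quadratic form $F(y)=y^TFy$ on coordinate vectors $y\in\BF_p^m$; since $p$ is odd, this is a bijection $\Sym_m(\BF_p)\to\mathcal Q(V_q)$. Then
\[
\langle F,A_P\rangle=\Tr(Fg_Pg_P^T)=g_P^TFg_P=F(g_P).
\]
So $F\perp\mathscr W_a$ exactly when $F(g_P)=0$ for every $P\in X_a$, which is \eqref{eq:orth-complement}. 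The condition $F(g_P)=0$ is independent of the representative $r_P$, because rescaling $r_P$ by $\lambda$ multiplies $F(g_P)$ by $\lambda^2$.

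The step that needs care is matching coordinates. The vector $g_P=(\Tr(e_ir_P))_i$ is the image of $r_P$ under the dual coordinate map $x\mapsto(\Tr(e_ix))_i$. This map is an $\BF_p$-isomorphism $V_q\to\BF_p^m$ by non-degeneracy of the trace form. Pulling back along it, each quadratic form $Q$ on $V_q$ becomes a form $F_Q$ on $\BF_p^m$ with $F_Q(g(x))=Q(x)$. Take $Q=Q_0$ or $Q=Q_a$. Every $r_P\in D_a$ satisfies $Q_0(r_P)=Q_a(r_P)=0$, so $F_{Q_0}$ and $F_{Q_a}$ lie in $\mathscr W_a^\perp$, which is the inclusion $\langle Q_0,Q_a\rangle_{\BF_p}\subseteq\mathscr W_a^\perp$.

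For \eqref{eq:rank-bound-a-not-in-k}, I also need $\dim\langle Q_0,Q_a\rangle=2$. Suppose $uQ_0+vQ_a=0$ with $(u,v)\neq(0,0)$. Then $\Tr((u+va)x^2)=0$ for all $x$, contradicting Lemma~\ref{quadratic}, which says this form has rank $m$; alternatively, $u+va\neq0$ because $a\notin\BF_p$. Therefore $\dim\mathscr W_a^\perp\ge2$. Non-degeneracy of the pairing gives $\dim\mathscr W_a=\binom{m+1}{2}-\dim\mathscr W_a^\perp\le\binom{m+1}{2}-2$, and \eqref{eq:rank-la} transfers this bound to $\rank(L_a)$.
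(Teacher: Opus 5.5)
Your proposal is correct and follows essentially the same route as the paper: the image of $L_a$ is $\mathscr W_a$, the identity $\operatorname{tr}(BA_P)=g_P^TBg_P$ identifies $\mathscr W_a^\perp$ with the quadrics vanishing on $X_a$, and the linearly independent forms $Q_0,Q_a$ give $\dim\mathscr W_a^\perp\ge 2$. Your explicit pull-back of $Q_0,Q_a$ along the isomorphism $x\mapsto g(x)$, and your check that the trace pairing on $\Sym_m(\BF_p)$ is non-degenerate for odd $p$, make precise two points the paper leaves implicit.
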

\begin{proof}
	Since $L_a((c_P)_{P\in X_a})=\sum c_P A_P$, we have \[\operatorname{Im}(L_a)=\mathscr W_a,\] hence $\rank(L_a)=\dim \mathscr W_a$, proving ~\eqref{eq:rank-la}.
	
	Identify $\Sym_m(\BF_p)$ with the quadratic form space $\mathcal Q(V_q)$ via \[A\leftrightarrow F_A(x)=x^{T}Ax.\]
	
	The standard pairing   satisfies $\langle B,A_P\rangle=\operatorname{tr}(BA_P)=g_P^TBg_P=F_B(g_P)$ . Therefore,
	\[
	\begin{aligned}
	\mathscr W_a^\perp
	&=\{B:\operatorname{tr}(BA_P)=0,\ \forall P\in X_a\}\\
	&=\{F\in\mathcal Q(V_q):F(P)=0,\ \forall P\in X_a\}.
	\end{aligned}
	\]
	which is ~\eqref{eq:orth-complement}.
	
	If   $a\notin \BF_p$ , then for all $P\in X_a$, $Q_0(P)=Q_a(P)=0$. Hence, \[Q_0,Q_a\in \mathscr W_a^\perp.\] Since $Q_0$ and $Q_a$ are linearly independent, $\dim \mathscr W_a^\perp\ge2$. Using $\dim \mathscr W_a+\dim \mathscr W_a^\perp=\binom{m+1}{2}$, we get \[\rank(L_a)\le\binom{m+1}{2}-2,\] proving ~\eqref{eq:rank-bound-a-not-in-k}.
\end{proof}
If $X_a = \emptyset$, then the following holds.
\begin{itemize}
	\item The span $\mathscr W_a$ $= \operatorname{span}_{\BF_p}\{A_P \mid P\in X_a\} = 0$ .
	\item For a function space on an empty index set, there exists exactly one function (the empty map), so:
	\[
	  \BF_p ^{X_a} = \mathcal S_p^{X_a} = \{\emptyset\}
	\]
	\item The unique empty element trivially lies in $\ker L_a$, so the condition $\ker L_a \cap \mathcal S_p^{X_a} \neq \emptyset$ holds vacuously, corresponding to a length-$0$ trivial code.
\end{itemize}

If  $X_a \neq \emptyset$, then $g_P\neq0$ and $A_P=g_Pg_P^T\neq0$. Hence 
\[\mathscr W_a\neq0.\] We next assume $X_a\neq\emptyset$ and take
\[
r=\rank(L_a),\qquad h_0=\frac{p-1}{2},\qquad
\rho_p=\frac{\sqrt p+1}{p-1}<1\quad(p>3).
\]
	Define
\[
\Delta_a^{\rm supp}=
\min  _{0\ne\ell\in \mathscr W_a^*}
 \#\{P\in X_a:\ell(A_P)\ne0\}.
\]

By means of Fourier expansion, we derive strict sufficient conditions for Eq.~\eqref{eq:pronewco2} to hold.
\begin{proposition}
	We have
	\[
	(p^r-1)\rho_p^{\Delta_a^{\rm supp}}<1
	\Longrightarrow
	\ker L_a\cap \mathcal S_p^{X_a}\ne\emptyset.
	\]
\end{proposition}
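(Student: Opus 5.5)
The plan is to count the admissible weight vectors directly by additive-character orthogonality on the image space $\mathscr W_a=\operatorname{Im}(L_a)$, and to show that this count is positive. Set
\[
N_a:=\#\bigl(\ker L_a\cap\mathcal S_p^{X_a}\bigr)=\#\Bigl\{(c_P)\in\mathcal S_p^{X_a}:\ \sum_{P\in X_a}c_PA_P=0\Bigr\}.
\]
Every sum $\sum c_PA_P$ lies in $\mathscr W_a$, and $\dim\mathscr W_a=r$ by \eqref{eq:rank-la}. So the indicator of the zero vector in $\mathscr W_a$ can be written as $p^{-r}\sum_{\ell\in\mathscr W_a^*}\chi_p(\ell(w))$, where $\ell$ runs over the $\BF_p$-linear functionals on $\mathscr W_a$. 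Because $\ell$ is linear, the inner sum factorizes over the points $P$:
\[
N_a=\frac1{p^r}\sum_{\ell\in\mathscr W_a^*}\ \prod_{P\in X_a}\Bigl(\sum_{s\in\mathcal S_p}\chi_p\bigl(s\,\ell(A_P)\bigr)\Bigr).
\]

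Next I will evaluate each local factor. If $\ell(A_P)=0$, the factor equals $\#\mathcal S_p=h_0$. If $u=\ell(A_P)\ne0$, I write the indicator of the nonzero squares as $\tfrac12(1+\eta_p(s))$ for $s\neq0$. Then
\[
\sum_{s\in\mathcal S_p}\chi_p(su)=\tfrac12\Bigl(\sum_{s\in\BF_p^*}\chi_p(su)+\sum_{s\in\BF_p^*}\eta_p(s)\chi_p(su)\Bigr)=\tfrac12\bigl(-1+\eta_p(u)G_p\bigr).
\]
Lemma~\ref{lemma 1} gives $|G_p|=\sqrt p$, so this factor has absolute value at most $\tfrac12(\sqrt p+1)=h_0\rho_p$.

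Now split off the trivial functional $\ell=0$, which contributes exactly $h_0^{\#X_a}$. For each of the $p^r-1$ nonzero $\ell$, let $k(\ell)=\#\{P\in X_a:\ell(A_P)\ne0\}$. The corresponding product then has modulus at most $h_0^{\#X_a}\rho_p^{k(\ell)}$. By the definition of $\Delta_a^{\rm supp}$ we have $k(\ell)\ge\Delta_a^{\rm supp}$, and $\rho_p<1$ for $p>3$, so each such term has modulus at most $h_0^{\#X_a}\rho_p^{\Delta_a^{\rm supp}}$. Therefore
\[
p^rN_a\ \ge\ h_0^{\#X_a}\Bigl(1-(p^r-1)\rho_p^{\Delta_a^{\rm supp}}\Bigr),
\]
and the hypothesis makes the right-hand side strictly positive. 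Hence $N_a>0$, which gives \eqref{eq:pronewco2}.

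I do not expect a serious obstacle; the steps needing care are the following.
\begin{enumerate}[1]
\item The Fourier expansion must be taken over $\mathscr W_a^*$ and not over all of $\Sym_m(\BF_p)^*$. This is what makes the exponent $p^r$ rather than $p^{\binom{m+1}{2}}$. It is legitimate because every sum $\sum c_PA_P$ already lies in $\mathscr W_a$.
\item One must check that $\Delta_a^{\rm supp}\ge1$, so that no nonzero $\ell$ is "invisible". This holds because the $A_P$ span $\mathscr W_a$, so a nonzero functional on $\mathscr W_a$ cannot vanish on every $A_P$.
\item The modulus bound on the local Gauss-type factor relies on $|G_p|=\sqrt p$ from Lemma~\ref{lemma 1}.
\end{enumerate}
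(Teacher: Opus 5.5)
Your proposal is correct and follows the paper's proof essentially step for step. It uses the same Fourier expansion over $\mathscr W_a^*$ with normalization $p^{-r}$, the same evaluation $\sum_{c\in\mathcal S_p}\chi_p(tc)=\frac12(-1+\eta_p(t)G_p)$ bounded by $h_0\rho_p$, and the same triangle-inequality bound $p^{-r}h_0^{\#X_a}\bigl(1-(p^r-1)\rho_p^{\Delta_a^{\rm supp}}\bigr)$; your use of $k(\ell)\ge\Delta_a^{\rm supp}$ together with $\rho_p<1$ is slightly cleaner than the paper's wording, and for $p=3$ (where $\rho_3>1$) the hypothesis cannot hold once $r\ge1$, so nothing is lost there.
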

\begin{proof}
	We have
	\[
	\begin{aligned}
	\sum_{c\in \mathcal S_p}\chi_p(tc)
	&=\frac{1}{2}\left(\sum_{c\neq0}\chi_p(tc)
	+\sum_{c\neq0}\eta_p(c)\chi_p(tc)\right)\\
	&=\begin{cases}
		h_0, & t=0\\
	\dfrac{-1+\eta_p(t)G_p}{2}, & t\neq0.
	\end{cases}
	\end{aligned}
	\]
	Since $|\eta_p(t)G_p|=\sqrt{p}$, then for $t\neq0$,
	\[|\sum_{c\in \mathcal S_p}\chi_p(tc)|\leq\frac{1+\sqrt{p}}{2}=h_0\rho_p.\]
	
	We next aim to count the number of $c_P\in \mathcal S_p$ satisfying $\sum\limits_{P\in X_a}c_PA_P=0$. For any linear functional $\ell\in\mathscr W_a^*$, the linearity property ensures that
	\[\ell(\sum_Pc_PA_P)=\sum_Pc_P\ell(A_P).\]
	Therefore, the indicator function for the condition $\sum_Pc_PA_P=0$ can be expressed via the orthogonality relation of characters as
	\[
	\begin{aligned}
	\mathrm{1}_{\{\sum_Pc_PA_P=0\}}
	&=p^{-r}\sum_{\ell\in \mathscr W_a^*}
	\chi_p\!\left(\ell\!\left(\sum_{P\in X_a}c_PA_P\right)\right)\\
	&=p^{-r}\sum_{\ell\in \mathscr W_a^*}
	\prod_{P\in X_a}\chi_p(c_P\ell(A_P)).
	\end{aligned}
	\]
	
	Summing over all   $(c_P)\in \mathcal S_p^{X_a}$ yields
	\begin{align*}
	N_a^{\rm sq}
	&=\sum_{(c_P)\in \mathcal S_p^{X_a}}\mathrm{1}_{\{\sum_Pc_PA_P=0\}}\\
	&=p^{-r}\sum_{\ell\in \mathscr W_a^*}\prod_{P\in X_a}
	\left(\sum_{c\in \mathcal S_p}\chi_p(c\ell(A_P))\right).
	\end{align*}
	If $\ell=0$, i.e., $\ell(A_P)=0$ for all $P\in X_a$, then
	\[\prod_{P\in X_a}\bigl(\sum  _{c\in \mathcal S_p} \chi_p(c \ell(A_P))\bigr)=h_0^s.\]
	If $\ell\neq0$, by the definition of $\Delta_a^{\rm supp}$, for any non-zero $\ell$, there are at least $\Delta_a^{\rm supp}$ points $P$ such that $\ell(A_P)\neq0$. For the remaining $s-\Delta_a^{\rm supp}$ points, we have $\ell(A_P)=0$, which contributes a factor of $h_0$.
	
	For any non-zero $t=\ell(A_P)$, we have the bound $|\sum_{c\in \mathcal S_p}\chi_p(tc)|\leq h_0\rho_p$  .  Therefore, the modulus of the product term for a non-zero $\ell$ satisfies
	\[
	\begin{aligned}
	\left|\prod_{P\in X_a}\left(\sum_{c\in \mathcal S_p}\chi_p(c\ell(A_P))\right)\right|
	\le h_0^{s-\Delta_a^{\rm supp}}(h_0\rho_p)^{\Delta_a^{\rm supp}}
	=h_0^s\rho_p^{\Delta_a^{\rm supp}}.
	\end{aligned}
	\]
	Applying the triangle inequality to the expression for $N_a^{\rm sq}$ yields
	\begin{align*}
	N_a^{\rm sq}&\ge p^{-r}\left(h_0^s-\sum_{\ell\neq0}
	\left|\prod_{P\in X_a}\left(\sum_{c\in \mathcal S_p}\chi_p(c\ell(A_P))\right)\right|\right)\\
	&\ge p^{-r}\bigl(h_0^s-(p^r-1)h_0^s\rho_p^{\Delta_a^{\rm supp}}\bigr)\\
	&=p^{-r}h_0^s\bigl(1-(p^r-1)\rho_p^{\Delta_a^{\rm supp}}\bigr).
	\end{align*}This completes the proof.
\end{proof}

For $m\geq4$, we have
\begin{proposition}
	Let $X_a\subseteq\mathbb P^{m-1}(\BF_p)$ be the projective point set defined by the trace quadratic forms $Q_0$ and $Q_a$. Define
	\[
	 \pi_j= \frac{p^{j+1}-1}{p-1},\qquad r_{\max}=\binom{m+1}{2}-2.
	\]
	 If $a\notin \BF_p$, $d=m$, and $m\ge 4$, then
		\[
		(p^{r_{\max}}-1)\,\rho_p^{\,s_*-8\pi_{m-4}}<1
		\quad\Longrightarrow\quad
		\ker L_a\cap \mathcal S_p^{X_a}\neq\emptyset,
		\]
		where $r_{\max}=\binom{m+1}{2}-2$, $s_*=\pi_{m-3}$ for odd $m$, and
		$s_*=\pi_{m-3}-(p+1)p^{m/2-2}$ for even $m$.
\end{proposition}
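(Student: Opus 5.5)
The plan is to reduce to the preceding proposition. Its criterion $(p^r-1)\rho_p^{\Delta_a^{\rm supp}}<1$ becomes weaker as $r$ decreases and as $\Delta_a^{\rm supp}$ increases, since $\rho_p<1$ for $p>3$. So it suffices to prove two bounds:
\[
r=\rank(L_a)\le r_{\max}, \qquad \Delta_a^{\rm supp}\ge s_*-8\pi_{m-4}.
\]
The first bound is exactly \eqref{eq:rank-bound-a-not-in-k}, since $a\notin\BF_p$. Combining the two gives
\[
(p^r-1)\rho_p^{\Delta_a^{\rm supp}}\le (p^{r_{\max}}-1)\rho_p^{s_*-8\pi_{m-4}}<1,
\]
and the preceding proposition then yields $\ker L_a\cap\mathcal S_p^{X_a}\neq\emptyset$. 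If $s_*-8\pi_{m-4}\le 0$, the hypothesis cannot hold, because $(p^{r_{\max}}-1)\ge 1$ would force the left side to be at least $1$. So we may assume $s_*>8\pi_{m-4}$.

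The core step is the lower bound on $\Delta_a^{\rm supp}$. A nonzero $\ell\in\mathscr W_a^*$ extends to a linear functional on $\Sym_m(\BF_p)$. Through the trace pairing, that extension corresponds to a quadratic form $F\in\mathcal Q(V_q)$ with $\ell(A_P)=F(g_P)$. Because $\ell\neq0$ on $\mathscr W_a$, $F$ is not in $\mathscr W_a^\perp$. By \eqref{eq:orth-complement}, this means $F$ does not vanish identically on $X_a$.

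It therefore suffices to show the following: for any quadratic form $F$ not vanishing on all of $X_a$, the number of $P\in X_a$ with $F(P)=0$ is at most $8\pi_{m-4}$. Since $\#X_a=s\ge s_*$, we then get
\[
\#\{P\in X_a:\ F(P)\neq0\}\ge s_*-8\pi_{m-4}.
\]

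First I would check $s\ge s_*$. For odd $m$, Lemma~\ref{om1} gives $s=(p^{m-2}-1)/(p-1)=\pi_{m-3}$. For even $m$ with $d=m$ (so $e=1$), Lemma~\ref{n0} gives $n=p^{m-2}-1-\delta(p-1)p^{m/2-2}h(a)$ with $|h(a)|\le p+1$. Hence $s\ge\pi_{m-3}-(p+1)p^{m/2-2}$.

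For the zero count I would use geometry. $\mathscr X_a=V(Q_0,Q_a)$ is a smooth complete intersection of two quadrics in $\mathbb P^{m-1}$. Smoothness follows from the Jacobian argument in Proposition~\ref{pro:dim}(4), which works verbatim for $d=m$ because the conjugates $a_i$ are distinct. Thus $\mathscr X_a$ has dimension $m-3$ and degree $4$, and for $m\ge4$ it is geometrically irreducible: a smooth complete intersection of positive dimension is connected, hence irreducible.

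Since $F\notin\langle Q_0,Q_a\rangle$ modulo the ideal, $V(F)\cap\mathscr X_a$ is a proper closed subset of $\mathscr X_a$. It has pure dimension $m-4$ and degree at most $2\cdot4=8$ by Bézout. Then I would apply the standard bound that a variety of dimension $k$ and degree $\Delta$ has at most $\Delta\pi_k$ rational points. This bound holds for each irreducible component, and for reducible sets one sums over components. With $k=m-4$ and $\Delta=8$ it gives at most $8\pi_{m-4}$ zeros in $X_a$.

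The main obstacle is justifying this point bound rigorously. Components of $V(F)\cap\mathscr X_a$ may be defined only over extensions, and one must use that $\deg\le 8$ counts all components together. I would cite the elementary bound $\#Y(\BF_p)\le\deg(Y)\,\pi_{\dim Y}$ (Lachaud–Rolland), applied to the $\BF_p$-closed set $Y=V(F)\cap\mathscr X_a$, whose degree is at most $8$ by the refined Bézout inequality.
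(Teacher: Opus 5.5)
Your proposal is correct and follows the paper's proof. Like the paper, you reduce to the preceding Fourier-analytic proposition using three ingredients: the rank bound \eqref{eq:rank-bound-a-not-in-k}, the bound $s\ge s_*$ from Lemmas~\ref{om1} and~\ref{n0}, and the B\'ezout-type estimate that a quadric not vanishing on all of $X_a$ has at most $8\pi_{m-4}$ zeros there. You also fill in steps the paper leaves implicit: the identification of $\ell\in\mathscr W_a^*$ with a quadratic form via the trace pairing, and a citation (Lachaud--Rolland) for the point-count bound $\#Y(\BF_p)\le\deg(Y)\,\pi_{\dim Y}$ that the paper asserts without justification.
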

\begin{proof}
	
	Over $\overline{\BF}_p$, $Q_0$ and $Q_a$ can be simultaneously diagonalized with pairwise distinct coefficients, so $\mathscr X_a$ is a smooth complete intersection of dimension $m-3$.
	For any quadric $\mathcal R\notin\langle Q_0,Q_a\rangle$, the intersection $V(Q_0,Q_a,\mathcal R)$ has dimension $m-4$ and degree $2\cdot2\cdot2=8$.
	Thus
	\[
	\#\{P\in X_a : \mathcal R(P)=0\}\le 8\pi_{m-4},
	\]
	and consequently $\Delta_a^{\rm supp}\ge s-8\pi_{m-4}$.
	For odd $m$, $s=\pi_{m-3}$, for even $m$, Lemma~\ref{n0} gives $s\ge s_*=\pi_{m-3}-(p+1)p^{m/2-2}$.
	Together with the rank bound \[r\le r_{\max}=\binom{m+1}{2}-2\] from Eq.~\eqref{eq:rank-bound-a-not-in-k}, the condition
	\[
	(p^{r_{\max}}-1)\rho_p^{\,s_*-8\pi_{m-4}}<1
	\]
	implies $(p^r-1)\rho_p^{\Delta_a^{\rm supp}}<1$, hence $\ker L_a\cap \mathcal S_p^{X_a}\neq\emptyset$.
	
	\smallskip

	In above case the exponent grows like  $p^{m-4}(p-8)$, while $r_{\max}=O(m^2)$.
	Thus for $p\ge 11$, the inequalities hold for all sufficiently large $m$.
\end{proof}
For the remaining low-dimensional cases, we have
\begin{proposition} \label{pro:low-dim-classification}
	 The following classification holds completely.
	
	\begin{enumerate}[label=\textbf{Case 1:}, leftmargin=*, nosep]
		\item \textbf{$p=3$.}
		\begin{itemize}[nosep]
			\item If $a \notin \mathbb{F}_3$, $D_a\ne\emptyset$, and $m$ is odd, then Eq.~\eqref{eq:pronewc} is solvable precisely when $m \ge 7$.
			\item If $a \notin \mathbb{F}_3$, $D_a\ne\emptyset$, and $m$ is even, then Eq.~\eqref{eq:pronewc} is solvable precisely when $m \ge 6$.
		\end{itemize}
	\end{enumerate}
	
	\begin{enumerate}[label=\textbf{Case 2:}, leftmargin=*, nosep, resume]
		\item \textbf{$p > 3$.}
		\begin{itemize}[nosep]
			\item If $m=2$ and $a \notin \BF_p$, then $D_a = \emptyset$, and only the empty code exists.
			\item If $m=3$ and $a \notin \BF_p$, then the punctured code $\C_{R_a}$ is a $[1,1,1]_p$ code, which is never solvable.
		\end{itemize}
	\end{enumerate}
\end{proposition}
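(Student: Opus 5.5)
The proof will treat the two cases separately. Each reduces to results already in place in Section 4 and to the small-$m$ computations of Section 3.

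\textbf{Case 1 ($p=3$).} Every $t\in\BF_3^*$ satisfies $t^2=1$, so $\mathcal S_3=\{1\}$ and $\mathcal S_3^{X_a}$ consists of the single all-ones vector. Hence Eq.~\eqref{eq:pronewc} is solvable exactly when $\sum_{P\in X_a}g_Pg_P^T=0$. This is condition \eqref{mo17-da}, i.e.\ self-orthogonality of $\C_{R_a}$. By Remark~\ref{re:r3} this is equivalent to self-orthogonality of $\C_{D_a}$. Concretely, the Gram matrix of $\C_{D_a}$ equals $\sum_{t\in\BF_3^*}t^2=2$ times that of $\C_{R_a}$, and $2$ is invertible in $\BF_3$.

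Theorem~\ref{th:sef} then gives both bullets. For odd $m$, solvability holds iff $m\ge7$; the cases $m=3$ and $m=5$ fail by the explicit $[2,1,2]_3$ check and the rank-$5$ Gram argument in that proof. For even $m$, solvability holds iff $m\ge6$. The hypothesis $D_a\neq\emptyset$ is needed here, because for $m=2$, and for $m=4$ with $d=2$, Proposition~\ref{pro:dim} gives $D_a=\emptyset$, and the equation is then vacuously solvable.

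\textbf{Case 2 ($p>3$), $m=2$.} Since $a\notin\BF_p$, we have $d=2$ and $e=1$. Proposition~\ref{pro:dim}(2) gives $h(a)=0$ and $n=p^{0}-1-0=0$, so $D_a=\emptyset$ and only the empty code occurs.

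\textbf{Case 2 ($p>3$), $m=3$.} Lemma~\ref{om1} gives $\#D_a=p-1$. By the $m=3$ theorem, $D_a$ is a single $\BF_p^*$-orbit $\{tr:t\in\BF_p^*\}$, so $s=1$ and $X_a=\{P\}$. The punctured code is generated by the single column $g_P$, which is nonzero because $r\neq0$ and the trace pairing is non-degenerate. It is therefore a $[1,1,1]_p$ code.

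Eq.~\eqref{eq:pronewc} now reads $c_Pg_Pg_P^T=0$ with $c_P\in\mathcal S_p\subseteq\BF_p^*$. This is impossible, since $g_Pg_P^T\neq0$: its diagonal entry $g_{P,i}^2$ is nonzero for any coordinate $i$ with $g_{P,i}\neq0$. Equivalently, a nonzero codeword $(\gamma)$ of length one has $\gamma^2\neq0$.

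\textbf{Main obstacle.} There is essentially no obstacle beyond careful bookkeeping. The only point requiring care is the $p=3$ identification of weighted self-orthogonality with that of $\C_{D_a}$, together with the nonemptiness caveat. Both are supplied by Remark~\ref{re:r3} and Theorem~\ref{th:sef}.
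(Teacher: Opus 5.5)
Your proposal is correct and follows the same route as the paper. For $p=3$ you reduce Eq.~\eqref{eq:pronewc} to self-orthogonality of $\C_{D_a}$ via Remark~\ref{re:r3} and then apply Theorem~\ref{th:sef}; for $p>3$ you use Proposition~\ref{pro:dim} when $m=2$ and the single-orbit structure of $D_a$ when $m=3$, while also spelling out steps the paper leaves implicit: $\mathcal S_3=\{1\}$, the factor $\sum_{t\in\BF_3^*}t^2=2$ linking the two Gram matrices, and why $c_Pg_Pg_P^T\neq0$ rules out solvability for the $[1,1,1]_p$ code.
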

\begin{proof} \textbf{Case 1.} When $p=3$ and $a\notin\BF_3$, by Remark~\ref{re:r3}, the code $\C_{R_a}$ is self-orthogonal exactly when $2\nmid m$ and $m\geq7$ or $2\mid m$ and $m\geq6$ by Theorem~\ref{th:sef}.

\smallskip

\textbf{Case 2.} For $m=2$, by Proposition~\ref{pro:dim}, $D_a=\emptyset$. For $m=3$, the code $\C_{D_a}$ has parameters $[p-1,1,p-1]_p$, then the code $\C_{R_a}$ has parameters $[1,1,1]_p$.
\end{proof}

We now complete the remaining classification of when the orbit-punctured code $\C_{R_a}$ can be made Euclidean self-orthogonal via a suitable choice of representatives or non-zero coordinate multipliers. We first precisely determine the rank of the linear map $L_a$ and characterize its dual kernel.
\begin{definition}
	Let $\mathcal Q(V_q)$ be the space of quadratic forms on $V_q=\BF_q$, which is identified with $\Sym_m(\BF_p)$ via $F_A(x)=x^TAx$. For the fixed representative vector $g_P$, put $F(P):=F(g_P)$. Let \[
	\operatorname{ev}_{X_a}:\mathcal Q(V_q)\longrightarrow \BF_p^{X_a},
	\qquad F\longmapsto(F(P))_{P\in X_a}.
	\]The vanishing ideal of degree $2$ is
	\begin{equation*}
	I_2(X_a):=\{F\in\mathcal Q(V_q):F(P)=0,\ \forall P\in X_a\},
	\end{equation*}
\end{definition}
\begin{lemma}
	With respect to the standard non-degenerate pairing $
	\langle B,A\rangle=\operatorname{tr}(BA)$ between $\Sym_m(\BF_p)$ and $\mathcal Q(V_q)$, the adjoint of $L_a$ is exactly
	\[
	L_a^*=\operatorname{ev}_{X_a}.
	\]  Consequently,
\begin{equation*}
	\ker L_a^*
	=I_2(X_a)
	\end{equation*} and
	\begin{equation}\label{eq:comsq}
	\rank(L_a)
	=\rank(\operatorname{ev}_{X_a})
	=\binom{m+1}{2}-\dim I_2(X_a).
\end{equation}
	In particular,
	\begin{itemize}
		\item if $a\notin \BF_p$, then $Q_0$ and $Q_a$ are linearly independent and both belong to $I_2(X_a)$,
		\item if $X_a=\emptyset$, then $I_2(X_a)=\mathcal Q(V_q)$ and  $\rank(L_a)=0$.
	\end{itemize}
\end{lemma}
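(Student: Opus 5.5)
The plan is to compute the adjoint directly from the trace pairing and then read off the consequences from rank--nullity. Fix the identification $A\leftrightarrow F_A(x)=x^TAx$ of $\Sym_m(\BF_p)$ with $\mathcal Q(V_q)$. Since $p$ is odd, the pairing $\langle B,A\rangle=\operatorname{tr}(BA)$ on $\Sym_m(\BF_p)$ is non-degenerate: taking $A=E_{ii}$ and $A=E_{ij}+E_{ji}$ recovers $B_{ii}$ and $2B_{ij}$. On $\BF_p^{X_a}$ we use the standard dot product $\langle c,y\rangle=\sum_P c_Py_P$, which is also non-degenerate.

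For $c=(c_P)\in\BF_p^{X_a}$ and $B\in\Sym_m(\BF_p)$, linearity of the trace gives
\[
\langle L_a(c),B\rangle=\sum_{P\in X_a}c_P\operatorname{tr}(g_Pg_P^TB)=\sum_{P\in X_a}c_P\,g_P^TBg_P=\sum_{P\in X_a}c_PF_B(g_P),
\]
and the last sum is exactly $\langle c,\operatorname{ev}_{X_a}(F_B)\rangle$. This identity is the defining relation of the adjoint, so $L_a^*=\operatorname{ev}_{X_a}$. By definition, $\ker\operatorname{ev}_{X_a}$ consists of the quadratic forms vanishing on every $g_P$, which is $I_2(X_a)$.

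For the rank identity, note that over a field the adjoint with respect to non-degenerate pairings has the same rank. Concretely, in the chosen bases the matrix of $L_a^*$ is the transpose of that of $L_a$ composed with invertible Gram matrices. Hence $\rank(L_a)=\rank(\operatorname{ev}_{X_a})$. Rank--nullity applied to $\operatorname{ev}_{X_a}$ on the $\binom{m+1}{2}$-dimensional space $\mathcal Q(V_q)$ then gives \eqref{eq:comsq}. This also agrees with the earlier lemma, since it identifies $I_2(X_a)$ with $\mathscr W_a^\perp$.

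For the particular cases, suppose first $a\notin\BF_p$. If $uQ_0+vQ_a=0$, then $\Tr((u+va)x^2)=0$ for all $x$. By Lemma~\ref{quadratic} this forces $u+va=0$, hence $u=v=0$, so $Q_0$ and $Q_a$ are linearly independent. Every $P\in X_a$ comes from some $r_P\in D_a$, on which both forms vanish. The one point to check is that $F(g_P)$ really means the form evaluated at the point $r_P$, that is, that $g_P$ is the coordinate vector of $r_P$ in the basis used to write $Q_0$ and $Q_a$. I would handle this by taking the coordinates of $x$ in the dual basis of $(e_i)$ under the trace pairing, so that $g(r)=(\Tr(e_ir))_i$ is precisely the coordinate vector of $r$. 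With that choice $Q_0,Q_a\in I_2(X_a)$.

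If $X_a=\emptyset$, then every $F$ vanishes vacuously, so $I_2(X_a)=\mathcal Q(V_q)$, and \eqref{eq:comsq} gives $\rank(L_a)=0$. This is consistent with $L_a$ being the map from the zero space.

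The main obstacle is not computational. It is being precise about the identifications: non-degeneracy of the trace pairing requires odd $p$, and the coordinates must be compatible so that the same $g_P$ serves in both $A_P=g_Pg_P^T$ and $F(g_P)$.
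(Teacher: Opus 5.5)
Your proposal is correct and follows the same route as the paper. You identify the adjoint via $\operatorname{tr}(Bg_Pg_P^T)=g_P^TBg_P=F_B(P)$, obtain \eqref{eq:comsq} from equality of ranks under the adjoint together with rank--nullity, and read off the two special cases directly. Your extra checks make explicit two points the paper leaves tacit: that $\operatorname{tr}(BA)$ is non-degenerate on $\Sym_m(\BF_p)$ because $p$ is odd, and that using trace-dual coordinates makes $g_P$ the coordinate vector of $r_P$, which is what justifies $Q_0,Q_a\in I_2(X_a)$.
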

\begin{proof}
	For any symmetric matrix $B$ corresponding to the quadratic form $F_B(x)=x^TBx$, the trace pairing satisfies
	\[
	\langle B,g_Pg_P^T\rangle
	=\operatorname{tr}(Bg_Pg_P^T)
	=g_P^TBg_P=F_B(P).
	\]
	Therefore, for any coefficient vector $(c_P)\in\BF_p^{X_a}$,
	\[
	\begin{aligned}
	\langle B,L_a((c_P))\rangle
	&=\sum_{P\in X_a}c_P\langle B,g_Pg_P^T\rangle\\
	&=\sum_{P\in X_a}c_PF_B(P)\\
	&=\langle\operatorname{ev}_{X_a}(F_B),(c_P)\rangle.
	\end{aligned}
	\]
	By the definition of the adjoint map, this gives $L^*_a(F_B)=\operatorname{ev}_{X_a}(F_B)$, hence
	\[L_a^*=\operatorname{ev}_{X_a}.\]
	
	Since $\rank(L_a)=\rank(L_a^*)=\rank(\operatorname{ev}_{X_a})$, Eq.~\eqref{eq:comsq} follows from the rank-nullity theorem applied to $\operatorname{ev}_{X_a}$
	\[
	\begin{aligned}
	\dim\mathcal Q(V_q)
	&=\rank(\operatorname{ev}_{X_a})+\dim\ker\operatorname{ev}_{X_a}\\
	&\Longrightarrow\quad
	\rank(\operatorname{ev}_{X_a})
	=\binom{m+1}{2}-\dim I_2(X_a).
	\end{aligned}
	\]
	For the two special cases:
	\begin{itemize}
		\item $a\notin \BF_p$, both $Q_0(P)=0$ and $Q_a(P)=0$ hold for all $P\in X_a$. Since $a\notin \BF_p$, $Q_0$ and $Q_a$ are linearly independent, so both lie in $I_2(X_a)$.
		\item $X_a=\emptyset$, the condition $F(P)=0$ is vacuously true for every $F\in\mathcal Q(V_q)$, so $I_2(X_a)=\mathcal Q(V_q)$ and $\rank(\operatorname{ev}_{X_a})=0$, giving $\rank(L_a)=0$.
	\end{itemize}
	Finally, since the pairing and the adjoint depend only on the vector space structure, the conclusions are independent of the choice of basis of $\BF_q/\BF_p$ and of the initial representatives $g_P$.
\end{proof}

With $h_0$ as above, let $t \ge m$ be an integer satisfying
\begin{equation}\label{eq:t-cond}
t \equiv -1 \pmod{h_0}.
\end{equation}
Let $E = \mathbb{F}_{p^t}$ be a finite field extension of $\BF_p$ of degree $t$, and let $\iota : \BF_{p^m} \to E$ be an arbitrary injective $\BF_p$-linear map. Define the homogeneous polynomial
\begin{equation}\label{eq:norm-poly}
\nu_\iota(x) = N_{E/\BF_p}(\iota(x)),
\end{equation}
where $N_{E/\BF_p}$ denotes the field norm from $E$ to $\BF_p$.
\begin{theorem}\label{th:norcon}
	  Assume  the conditions of ~\eqref{eq:t-cond} and ~\eqref{eq:norm-poly}  hold. If $a\notin\BF_p$ and
		 \begin{equation}\label{eq:cond-a-in-k} \, 2t+2+4(p-1)<m(p-1),\end{equation}
		then for every $P\in X_a$ (with representatives $r_P\in\BF_q$), the weights
		 \[\gamma_P=\nu_\iota(r_P),\qquad c_P=\nu_\iota(r_P)^2\]
		  satisfy  $\gamma_P\in\BF_p^*$ and $c_P\in \mathcal S_p$ , and hence $\ker L_a\cap \mathcal S_p^{X_a}\neq\emptyset$.
	 \end{theorem}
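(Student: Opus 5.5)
The plan is to check the two assertions in turn: the first is immediate, and the second reduces, after lifting from $X_a$ to $D_a$, to a degree count for polynomial sums over $\BF_p^m$.

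\textbf{Step 1: the weights are nonzero squares.} Since $r_P\neq0$ and $\iota$ is injective, $\iota(r_P)\in E^*$. Hence $\gamma_P=\nu_\iota(r_P)=N_{E/\BF_p}(\iota(r_P))\in\BF_p^*$, and so $c_P=\gamma_P^2\in\mathcal S_p$.

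\textbf{Step 2: $\nu_\iota$ is a polynomial of degree $t$.} Fix the basis $e_1,\ldots,e_m$ and write $x=\sum_i x_ie_i$. Then $\iota(x)=\sum_i x_i\iota(e_i)$ is an $E$-valued linear form in $x_1,\ldots,x_m$. Therefore
\[
\nu_\iota(x)=\prod_{k=0}^{t-1}\Bigl(\sum_i x_i\iota(e_i)^{p^k}\Bigr)
\]
is a homogeneous polynomial of degree $t$. It is Frobenius-invariant, so its coefficients lie in $\BF_p$.

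\textbf{Step 3: the weighted Gram matrix is well defined on $X_a$.} If $r_P$ is replaced by $\lambda r_P$ with $\lambda\in\BF_p^*$, then $c_Pg_Pg_P^T$ is multiplied by $\lambda^{2t}\cdot\lambda^2=\lambda^{2t+2}$. By \eqref{eq:t-cond} we have $h_0\mid t+1$, i.e.\ $(p-1)\mid 2t+2$, so this factor equals $1$. Each orbit in \eqref{uni-da} has $p-1$ elements, so
\[
(p-1)\sum_{P\in X_a}c_PA_P=\sum_{x\in D_a}\nu_\iota(x)^2g(x)g(x)^T .
\]
Since $p-1=-1\neq0$ in $\BF_p$, it suffices to show that the right-hand side vanishes.

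\textbf{Step 4: the $(i,j)$ entry as a full sum over $\BF_p^m$.} Over $\BF_p$, the indicator of $\{Q_0(x)=Q_a(x)=0\}$ is $(1-Q_0(x)^{p-1})(1-Q_a(x)^{p-1})$. The point $x=0$ contributes nothing because $\nu_\iota(0)=0$. Hence the $(i,j)$ entry equals
\[
\sum_{x\in\BF_p^m}\nu_\iota(x)^2\,\Tr(e_ix)\,\Tr(e_jx)\,\bigl(1-Q_0(x)^{p-1}\bigr)\bigl(1-Q_a(x)^{p-1}\bigr).
\]
The summand is a polynomial of total degree at most $2t+2+4(p-1)$.

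\textbf{Step 5: vanishing by degree count.} Use the standard fact that $\sum_{y\in\BF_p}y^k=0$ unless $k$ is a positive multiple of $p-1$. A monomial $x_1^{k_1}\cdots x_m^{k_m}$ can therefore have nonzero sum over $\BF_p^m$ only if every $k_i\ge p-1$, which forces total degree at least $m(p-1)$. Under \eqref{eq:cond-a-in-k} every monomial of the summand has smaller degree, so each entry vanishes. Thus $(c_P)\in\ker L_a\cap\mathcal S_p^{X_a}$.

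The only point needing care is Step 3. The congruence \eqref{eq:t-cond} is exactly what makes $c_PA_P$ independent of the chosen representative, and hence makes the passage from $X_a$ to $D_a$ legitimate. The rest is a Chevalley–Warning type degree argument, analogous to the $m=5$, $p=3$ computation in Theorem~\ref{th:sef}.
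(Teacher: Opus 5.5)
Your proposal is correct and follows essentially the same route as the paper: the indicator polynomial $(1-Q_0^{p-1})(1-Q_a^{p-1})$, the degree bound $2t+2+4(p-1)<m(p-1)$ forcing each full sum over $\BF_p^m$ to vanish, and the orbit decomposition in which $(p-1)\mid 2t+2$ produces the nonzero factor $p-1=-1$. Your Step 2 is a useful addition. It makes explicit that $\nu_\iota$ is a genuine polynomial of degree $t$ in the coordinates, which the degree count needs. The paper leaves this implicit, having only shown the homogeneity $\nu_\iota(\lambda x)=\lambda^t\nu_\iota(x)$.
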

\begin{proof}
	We recall an important result, if $f \in \BF_p[X_1, \ldots, X_m]$ with $\deg f < m(p-1)$, then \[\sum_{x \in \BF_p^m} f(x) = 0.\]
	
	Since $\iota$ is injective, \[x \neq 0 \implies \iota(x) \neq 0 \implies \nu_\iota(x) \neq 0.\] Thus, $c_P = \nu_\iota(r_P)^2$ is always a non-zero square. For any $\lambda\in\BF_p$,
	\[
	\begin{aligned}
	N_{E/\BF_p}(\iota(\lambda x))
	&=N_{E/\BF_p}(\lambda\iota(x))\\
	&=\lambda^tN_{E/\BF_p}(\iota(x)).
	\end{aligned}
	\]We further have
	\[\nu_\iota(\lambda x)=\lambda^t\nu_\iota(x).\]
	Hence, $\nu_\iota(x)$ is a $t$-homogeneous function over $\BF_p$.
	
	\medskip
	We need to show $\sum\limits_{P \in X_a} c_P g_P g_P^{T} = 0$.
		When $a \notin \BF_p$, define \[I_a(x) = (1 - Q_0(x)^{p-1})(1 - Q_a(x)^{p-1}).\] This is the indicator function of $D_a\cup\{0\}$ and $\deg(I_a(x))\le 4(p-1)$. Since $2t+2+4(p-1)<m(p-1)$, for any coordinate linear forms $\ell_i, \ell_j$, set \[F_{ij}(x)=\nu_\iota(x)^2\ell_i(x)\ell_j(x)I_a(x).\]
	Then $\deg F_{ij}(x)\le 2t+2+4(p-1)<m(p-1)$, and thus the sum \[\sum_{x\in \BF_q}F_{ij}(x) = 0.\]
		
		Note that every $x\in D_a$ can be written as $x=\lambda r_P$, where $P=[r_P]\in X_a$. Since $I_a$ vanishes on $\BF_q\setminus(D_a\cup\{0\})$, we have
		\[0=\sum_{x\in \BF_q}F_{ij}(x) = F_{ij}(0)+\sum_{P\in X_a}\sum_{\lambda\in\BF_p^*}F_{ij}(\lambda r_P).\]
		Since $F_{ij}(0)=\nu_\iota(0)^2=0$, $\nu_\iota(\lambda r_P)=\lambda^t\nu_\iota(r_P)$ and $\ell_i(\lambda x)=\lambda\ell_i(x)$, we get
		\begin{align*}
		F_{ij}(\lambda r_P)
		&=(\lambda^t\nu_\iota(r_P))^2(\lambda\ell_i(r_P))
		(\lambda\ell_j(r_P))\\
		&=\nu_\iota(r_P)^2\ell_i(r_P)\ell_j(r_P)\lambda^{2t+2}\\
		&=c_P\ell_i(r_P)\ell_j(r_P)\lambda^{2t+2}.
		\end{align*}
		Therefore, we get
		\[
		\left(\sum_{P\in X_a}c_P\ell_i(r_P)\ell_j(r_P)\right)
		\left(\sum_{\lambda\in\BF_p^*}\lambda^{2t+2}\right)=0.
		\]
		By ~\eqref{eq:t-cond}, $p-1 \mid 2t+2$, so $\sum\limits_{\lambda \in \BF_p^*} \lambda^{2t+2} = p-1 = -1 \neq 0$ in $\BF_p$. This forces the matrix entry $\sum\limits_{P \in X_a} c_P \ell_i(r_P) \ell_j(r_P) = 0$, which is exactly ~\eqref{eq:pronewc} in matrix form.
	Therefore,   \[(c_P)\in \ker L_a\cap \mathcal S_p^{X_a}.\]  The proof is completed.
\end{proof}
We next give   a  unified infinite existence range via the norm construction.
\begin{theorem}
	Adopt the notations of Theorem~\ref{th:norcon} and let $t$ be the smallest integer satisfying
	\[t\ge m,\qquad t\equiv-1\pmod{(p-1)/2}.\]
	Then there exists a non-trivial square-weight vector   $(c_P)\in \ker L_a\cap \mathcal S_p^{X_a}$ , and consequently the weighted punctured code $\C$ is self-orthogonal for the following ranges of  $a\notin\BF_p$.
	\[
  m\geq11\ \text{ if }\ p=5,\  m\ge 8\ \text{ if }\ p=7,\ \text{ and }\ m\ge7\ \text{ if }\ p\ge11.\]
		
	This construction is independent of $d=[\BF_p(a):\BF_p]$ and remains valid even when geometric degeneration occurs.
\end{theorem}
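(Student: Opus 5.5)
The plan is to deduce the statement directly from Theorem~\ref{th:norcon}. The only thing to check is that the numerical hypothesis \eqref{eq:cond-a-in-k} holds in each of the three stated ranges when $t$ is the smallest admissible degree. Once that is done, Theorem~\ref{th:norcon} gives the square weights $c_P=\nu_\iota(r_P)^2\in\mathcal S_p$ with $\sum_{P\in X_a}c_Pg_Pg_P^T=0$. By the definition of the weighted punctured code, this sum is exactly its Gram matrix, so the code $\C$ is self-orthogonal.

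First I bound $t$. Put $h_0=(p-1)/2$. The residues $m,m+1,\ldots,m+h_0-1$ cover every class modulo $h_0$, so the smallest $t\ge m$ with $t\equiv -1\pmod{h_0}$ satisfies $t\le m+h_0-1$. Hence
\[
2t+2\le 2m+2h_0=2m+p-1,
\]
and a sufficient condition for \eqref{eq:cond-a-in-k} is
\[
2m+5(p-1)<m(p-1),\qquad\text{i.e.}\qquad m(p-3)>5(p-1).
\]
I then check each range:
\begin{itemize}
\item For $p=5$ the condition reads $m>10$, that is, $m\ge 11$.
\item For $p=7$ it reads $m>7.5$, that is, $m\ge 8$.
\item For $p\ge 11$, the function $5(p-1)/(p-3)=5+10/(p-3)$ is decreasing in $p$. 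It equals $6.25$ at $p=11$, so $m\ge7$ suffices for every $p\ge11$.
\end{itemize}
This is the main computational step, and it is routine. The only care needed is the uniform bound $t\le m+h_0-1$, with no case split on $m \bmod h_0$.

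Next I address non-triviality. The vector $(c_P)$ is non-empty precisely when $X_a\neq\emptyset$. For odd $m\ge7$, Lemma~\ref{om1} gives $\#D_a=p^{m-2}-1>0$. For even $m\ge 8$, Lemma~\ref{n0} together with $|h(a)|\le p+1$ gives
\[
\#D_a\ge p^{m-2}-1-(p^2-1)p^{m/2-2}>0.
\]
So $s\ge1$, and since every $c_P\neq0$ the vector is non-trivial.

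Finally, independence from $d$ holds because the proof of Theorem~\ref{th:norcon} uses only $a\notin\BF_p$. That assumption is needed for the degree bound $\deg I_a\le 4(p-1)$ and for the vanishing of the Chevalley--Warning-type sum. No smoothness of $\mathscr X_a$ and no value of $d$ enters, so the conclusion persists under geometric degeneration.
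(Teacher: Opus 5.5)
Your proposal is correct and follows the paper's proof essentially verbatim. Like the paper, you bound $t\le m+(p-3)/2$ to get $2t+2\le 2m+p-1$, reduce \eqref{eq:cond-a-in-k} to $m(p-3)>5(p-1)$, and read off the three ranges. Your extra check that $X_a\neq\emptyset$ via Lemmas~\ref{om1} and~\ref{n0}, so that the square-weight vector is genuinely non-trivial, is a worthwhile detail the paper leaves implicit. One minor correction: the bound $\deg I_a\le 4(p-1)$ holds whether or not $a\in\BF_p$, so $a\notin\BF_p$ is only the standing hypothesis and not what that degree estimate relies on.
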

\begin{proof}
	Let $t$ be the smallest integer satisfying the congruence. Then
	\[m\le t\le m+(p-3)/2\implies 2t+2\le 2m+p-1.\]

		 For $a\notin\BF_p$, the left-hand side of condition ~\eqref{eq:cond-a-in-k} is at most $2m+5(p-1)$. It suffices to have
		\[m(p-3)>5(p-1),\]which gives the stated lower bounds on  $m$.

	The estimates involve no dependence on $d$. When $d=m$, $\mathscr X_a$ is a smooth complete intersection of two quadrics, when $d<m$, the geometry may degenerate, but the construction uses only the two equations over the finite field and remains effective.
\end{proof}
\begin{example}
Take $p=5$, $m=7$, and $a=1\in\BF_5$. Realize
$E=\BF_5[X]/(X^7+X^6+1)$ and take $\iota$ to be the identity map. With
respect to the polynomial basis $(1,X,\ldots,X^6)$, choose the unique
representative of each projective point whose first nonzero coordinate is
$1$. Then $t=7$ is the smallest integer satisfying
\[
t\ge m,\qquad t\equiv -1\pmod{(p-1)/2}.
\]
An exhaustive computation gives $\#X_a=3906$. For
$c_P=\nu_\iota(r_P)^2=N_{E/\BF_5}(r_P)^2$, the program returns
\[
\#\{P:c_P=1\}=1988,\qquad \#\{P:c_P=4\}=1918,
\]
and
\[
L_a((c_P))=\sum_{P\in X_a}c_Pg_Pg_P^T=0_{7\times7}.
\]
Thus $(c_P)\in\ker L_a\cap \mathcal S_p^{X_a}$ and the resulting weighted punctured
code is self-orthogonal.
\end{example}

\subsection{Dual distance}
\begin{theorem}
	Let $\C_{D_a}$ be a code over $\BF_p$ with defining set $D_a$ and $D_a\neq\emptyset$.
	Then
	\[
	d_{\min}(\C_{D_a}^{\perp})=2,\qquad A_2(\C_{D_a}^{\perp})
	=\frac{n(p-1)(p-2)}2.
	\]
\end{theorem}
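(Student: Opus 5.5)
The plan is to read the dual distance off the columns of a generator matrix of $\C_{D_a}$. Fix an $\BF_p$-basis $e_1,\ldots,e_m$ of $\BF_q$ and use the matrix whose column indexed by $x\in D_a$ is $g(x)=(\Tr(e_1x),\ldots,\Tr(e_mx))^T$. This matrix generates $\C_{D_a}$ even when $\phi$ is not injective, since its row space is exactly $\{\mathbf c_b\}$. A vector $(v_x)_{x\in D_a}$ lies in $\C_{D_a}^\perp$ if and only if $\sum_x v_x g(x)=0$. By linearity of $b\mapsto\Tr(b\,\cdot)$ and non-degeneracy of the trace pairing, this is equivalent to
\[
\sum_{x\in D_a}v_x\,x=0 \quad\text{in }\BF_q .
\]
So the dual code is the space of $\BF_p$-linear relations among the elements of $D_a$, viewed as vectors of $V_q$. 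The remaining steps only use this reformulation.

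\emph{Step 1: no codeword of weight $1$ in the dual.} A weight-one dual codeword would give $v_x x=0$ with $v_x\ne0$, forcing $x=0$. This is impossible because $D_a\subseteq\BF_q^*$. Hence $d_{\min}(\C_{D_a}^\perp)\ge2$.

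\emph{Step 2: describe the weight-$2$ relations.} Suppose a dual codeword is supported on two distinct positions $x\ne y$ with nonzero entries $u$ and $v$. Then $ux+vy=0$, so $y=\lambda x$ with $\lambda=-u/v\in\BF_p^*$ and $\lambda\neq1$. Conversely, the orbit structure \eqref{uni-da} (used in the proof of Theorem~\ref{th:sef}) says that $D_a$ is closed under multiplication by $\BF_p^*$. So for every $x\in D_a$ and every $\lambda\in\BF_p^*\setminus\{1\}$ we have $\lambda x\in D_a$ and $\lambda x\ne x$. For such a pair the relations $ux+v\lambda x=0$ are exactly $u=-\lambda v$ with $v\in\BF_p^*$, giving $p-1$ dual codewords.

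Since $p\ge3$ and $D_a\ne\emptyset$, at least one such pair exists. Therefore $d_{\min}(\C_{D_a}^\perp)=2$.

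\emph{Step 3: count the weight-$2$ codewords.} The weight-two dual codewords are in bijection with pairs (unordered support $\{x,y\}$, admissible coefficients). There are $n(p-2)$ ordered pairs $(x,\lambda x)$ with $\lambda\ne1$, and each support $\{x,y\}$ is counted twice: once as $(x,\lambda x)$ and once as $(y,\lambda^{-1}y)$. So there are $n(p-2)/2$ supports, each carrying $p-1$ codewords, and
\[
A_2(\C_{D_a}^\perp)=\frac{n(p-1)(p-2)}{2}.
\]

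The only delicate point is the double counting in Step 3. One must check that distinct supports give disjoint sets of codewords, and that the coefficient count $p-1$ does not depend on which element of the pair is taken as $x$. Both follow from the bijection above. For $p=3$ the formula gives $n$, which is consistent: the $n/2$ pairs $\{x,-x\}$ each carry the $2$ relations $v(x+(-x))=0$, $v\in\BF_3^*$.
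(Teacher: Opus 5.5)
Your proof is correct and follows essentially the same route as the paper: both use non-degeneracy of the trace pairing to show that weight-two dual codewords correspond exactly to pairs of distinct points in a common $\BF_p^*$-orbit of $D_a$, with $p-1$ relations per pair. Your reformulation of $\C_{D_a}^{\perp}$ as the relation space $\sum_x v_x x=0$, your explicit exclusion of weight one, and your ordered-pair double count (in place of the paper's $\binom{p-1}{2}$ supports per orbit) are only presentational differences.
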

\begin{proof}
	Choose a $\BF_p$-basis $e_1,e_2,\ldots,e_m$ of $\BF_q$. For any $x\in D_a$, its corresponding column of the generator matrix is
	\[
	g(x)=\bigl(\Tr(e_1x),\dots,
	\Tr(e_mx)\bigr)^T.
	\]For any $t\in\BF_p^*$, we get $g(tx)=tg(x)$  ,  this means any two distinct coordinates yield non-zero proportional columns   at the same projective point $[x]\in\mathbb P^{m-1}(\BF_p)$, and they therefore determine a weight-2 dual codeword .
	
	In addition, columns belonging to distinct $\BF_p^*$-orbit are not proportional. Indeed, if  $g(x)=\lambda g(y)$ for some $\lambda\in\BF_p^*$, then
	\[
	\Tr(b(x-\lambda y))=0\quad\text{for all }b\in \BF_q,
	\]Non-degeneracy of the trace pairing forces $x-\lambda y=0$, i.e., $x=\lambda y$, so $x$ and $y$ lie in the same orbit. Consequently, every 2-weight dual codeword arises from exactly two coordinates within the same orbit.
	
	Each orbit contains $p-1$ coordinates, hence contributes $\binom{p-1}{2}$ supports. Then
	\[
	A_2(\C_{D_a}^{\perp})
	=\frac n{p-1}\binom{p-1}{2}(p-1)
	=\frac{n(p-1)(p-2)}2.
	\]
\end{proof}
\begin{example}
Take $p=5$, $m=6$, and let $a\in\BF_{5^3}\subset\BF_{5^6}$ be a root of
$T^3+2T+1\in\BF_5[T]$. An exhaustive computation gives $n=504$ and no
zero column in a generator matrix. Its columns form $126$ projective classes,
each of size $4$, and hence determine $756$ dependent two-column supports.
The program therefore returns
\[
d_{\min}(\C_{D_a}^{\perp})=2,\qquad
A_2(\C_{D_a}^{\perp})=756\cdot4=3024.
\]
\end{example}
Set \[
X_a=\{[x]\in\mathbb P^{m-1}(\BF_p):
\Tr(x^2)=\Tr(ax^2)=0\}.
\]
Since each $\BF_p^*$-orbit corresponds to a unique projective point, so the columns of a generator matrix of $\C_{R_a}$ are precisely the $\BF_p$-rational points of $X_a$ and no two columns are linearly dependent. Therefore,
\[d_{\min}(\C_{R_a}^{\perp})\ge3.\]
\begin{theorem}\label{th:dual3}
	Denote by $\Lambda_a$ the number of $\BF_p$-rational   lines  in $X_a$. Then the number of 3-weight   codewords  in $\C_{R_a}^\perp$ is
	\[A_3(\C_{R_a}^{\perp})
	=(p-1)\binom{p+1}{3}\Lambda_a.\]
\end{theorem}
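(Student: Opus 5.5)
The plan is to translate weight-3 dual codewords into collinear triples of columns, and then count the triples via lines contained in $X_a$. Write $G_{R_a}=(g(r_1),\ldots,g(r_s))$. A dual codeword of weight $3$ supported on $\{i,j,k\}$ is a relation $\lambda_ig(r_i)+\lambda_jg(r_j)+\lambda_kg(r_k)=0$ with all $\lambda$'s nonzero. Since $g$ is $\BF_p$-linear and injective (non-degeneracy of the trace pairing, as in the proof of the previous theorem), this is the same as $\lambda_ir_i+\lambda_jr_j+\lambda_kr_k=0$ in $\BF_q$. Because no two of the $r$'s are proportional, any such relation forces the three projective points $[r_i],[r_j],[r_k]$ to be distinct and collinear in $\mathbb P^{m-1}(\BF_p)$. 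Conversely, three distinct collinear points carry a relation that is unique up to an $\BF_p^*$-scalar, and every coefficient in it is nonzero. Hence
\[
A_3(\C_{R_a}^\perp)=(p-1)\cdot\#\{\text{unordered triples of distinct collinear points of }X_a\}.
\]

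The key step is to show that if three distinct points of $X_a$ lie on a projective line $\ell$, then $\ell\subseteq X_a$, i.e.\ $\ell$ is one of the $\Lambda_a$ rational lines. Restricting $Q_0$ to the $2$-dimensional subspace $W=\langle r_i,r_j\rangle$ gives a binary quadratic form over $\BF_p$. A nonzero binary quadratic form has at most $2$ projective zeros, so vanishing at three points of $\mathbb P(W)$ forces $Q_0|_W\equiv0$. The same argument applies to $Q_a$. Thus every point of $\ell$ satisfies $\Tr(x^2)=\Tr(ax^2)=0$, which gives $\ell\subseteq X_a$. Conversely, any line contained in $X_a$ contributes all $\binom{p+1}{3}$ of its point triples.

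It remains to check that no triple is counted twice. Two distinct lines meet in at most one point, so a triple of distinct points determines its line uniquely. Summing over the $\Lambda_a$ lines then gives $\binom{p+1}{3}\Lambda_a$ collinear triples. Multiplying by $p-1$ yields the stated formula $A_3(\C_{R_a}^\perp)=(p-1)\binom{p+1}{3}\Lambda_a$.

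I expect the main obstacle to be the ``three points force the whole line'' step. Everything else is bookkeeping, but this step relies on the fact that a degree-$2$ form on $\mathbb P^1(\BF_p)$ is either identically zero or has at most two zeros. That fact holds because $p\geq3$ and because $Q_0,Q_a$ are genuine quadratic forms; the case $p=2$, where it fails, is excluded since $p$ is odd.
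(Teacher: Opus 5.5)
Your proposal is correct and follows the same route as the paper: you identify weight-3 dual codewords with collinear triples of columns, note that each line in $X_a$ contributes $(p-1)\binom{p+1}{3}$ of them, and sum over the $\Lambda_a$ lines. Your argument that a nonzero binary quadratic form has at most two projective zeros (so three collinear points of $X_a$ force the whole line into $X_a$), together with the uniqueness of the line through a triple, supplies the justification that the paper's ``which is equivalent to'' step leaves implicit. One small correction: that zero-count bound holds over every field, so your remark that it fails for $p=2$ is inaccurate, though harmless to the proof.
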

\begin{proof}
	Three non-zero, pairwise non-proportional columns are linearly dependent precisely if and only if they lie on a projective line, which is equivalent to there exist a 2-dimensional vector subspace  $W=\langle x,y\rangle_{\BF_p}\subset \BF_q$ such that \[
	\Tr(w^2)=\Tr(aw^2)=0
	\qquad(w\in W).
	\]Equivalently,
	\begin{equation}\label{eq:6}
	\begin{aligned}
	&\Tr(x^2)=\Tr(y^2)
	=\Tr(xy)=0,\\
	&\Tr(ax^2)=\Tr(ay^2)
	=\Tr(axy)=0.
	\end{aligned}
	\end{equation}
	
	Note that a projective line over $\BF_p$ contains exactly $p+1$ points, hence $\binom{p+1}{3}$ triples of distinct points. The corresponding three columns span a two-dimensional space, so their linear-relation space is one-dimensional; accounting for the
	$p-1$ non-zero scalar multiples gives the factor $(p-1)\binom{p+1}{3}$ 3-weight dual codeword over any $\BF_p$-rational line. Summing over all such lines yield $A_3(\C_{R_a}^{\perp})$.
\end{proof}
By Theorem~\ref{th:dual3}, we have
\[d_{\min}(\C_{R_a}^{\perp})=3
\iff \Lambda_a>0.\]

We next give the complete classification of punctured dual distances. Take $U=\langle1,a\rangle_{\BF_p}\subset \BF_q$. For any $(A,B,C)\in U^3$, we define
the binary-polynomial discriminant
\[
\Delta(A,B,C)=B^2-4AC.
\]
Set
\[\mathcal E_2(a)=
\sum_{\substack{(A,B,C)\in U^3\\\Delta\ne0}}
\eta_q(\Delta),\qquad
\mathcal E_1(a)=
\sum_{\substack{(A,B,C)\in U^3\setminus\{0\}\\\Delta=0}}
\tau(A,B,C),\]where\[
\tau(A,B,C)=
\begin{cases}
\eta_q(A),&A\ne0,\\
\eta_q(C),&A=0.
\end{cases}
\]

Clearly,
\[
\begin{aligned}
\bigl|\mathcal E_2(a)\bigr|&\le p^6-1-\#\{(A,B,C)\in U^3\setminus\{0\}:\Delta=0\},\\
\bigl|\mathcal E_1(a)\bigr|&\le \#\{(A,B,C)\in U^3\setminus\{0\}:\Delta=0\}.
\end{aligned}
\]
\begin{lemma}
	Let $n_0=n+1$ be the number of affine common zeros of the two quadratic equations. Then the number of lines contained in $X_a$ is
	\[
	\Lambda_a=
	\frac{p^{-6}\bigl(q^2+q\mathcal E_2(a)+qG_q\mathcal E_1(a)\bigr)-1-(p+1)(n_0-1)}{(p^2-1)(p^2-p)}
	\]
	
	Moreover, the number of non-zero triples satisfying $\Delta=0$ is exactly\[
	R_{\rm rk1}=
	\begin{cases}
	p^4-1,&d=2,\\
	(p+1)(p^2-1),&d\ne2.
	\end{cases}
	\]
\end{lemma}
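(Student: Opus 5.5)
The plan is to count ordered pairs $(x,y)\in\BF_q^2$ satisfying the six trace conditions \eqref{eq:6}, and then pass from pairs to $2$-dimensional subspaces. The six conditions are equivalent to
\[
\Tr\bigl(A x^2+Bxy+Cy^2\bigr)=0\quad\text{for all }(A,B,C)\in U^3,\qquad U=\langle 1,a\rangle_{\BF_p}.
\]
So, by additive-character orthogonality over the $6$-dimensional $\BF_p$-space $U^3$, the number of such pairs is
\[
N=p^{-6}\sum_{(A,B,C)\in U^3}\ \sum_{(x,y)\in\BF_q^2}\chi_q\bigl(Ax^2+Bxy+Cy^2\bigr).
\]

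The inner sum is a binary quadratic form over $\BF_q$ with matrix $\begin{pmatrix}A&B/2\\ B/2&C\end{pmatrix}$. I evaluate it with Lemma~\ref{chengkai} (applied over $\BF_q$), splitting by rank.
\begin{itemize}
\item If $(A,B,C)=0$, the inner sum is $q^2$.
\item If $\Delta\ne0$, the rank is $2$ and the sum is $\eta_q(-\Delta/4)G_q^2$. Since $G_q^2=\eta_q(-1)q$ (from Lemma~\ref{lemma 1}) and $\eta_q(4)=1$, this equals $q\,\eta_q(\Delta)$.
\item If $\Delta=0$ with $(A,B,C)\ne0$, the rank is $1$. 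The diagonal multiplier is $A$ when $A\ne0$, and $C$ when $A=0$ (then $B=0$). This gives $qG_q\tau(A,B,C)$.
\end{itemize}
Summing these yields $N=p^{-6}\bigl(q^2+q\mathcal E_2(a)+qG_q\mathcal E_1(a)\bigr)$.

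Next I remove the degenerate pairs from $N$.
\begin{itemize}
\item The pair $(0,0)$ contributes $1$.
\item Pairs spanning a $1$-dimensional space $\langle z\rangle$ with $z\in D_a$: each such line is counted by $p^2-1$ pairs $(\lambda z,\mu z)$, $(\lambda,\mu)\neq(0,0)$. There are $(n_0-1)/(p-1)$ such lines, so these give $(p+1)(n_0-1)$ pairs.
\item Every remaining pair is an ordered basis of a $2$-dimensional subspace $W$ on which both forms vanish identically. Each such $W$ is counted $\#\GL_2(\BF_p)=(p^2-1)(p^2-p)$ times.
\end{itemize}
Projectivizing $W$ gives an $\BF_p$-rational line in $X_a$, as in the proof of Theorem~\ref{th:dual3}, and dividing yields the formula for $\Lambda_a$.

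For $R_{\rm rk1}$ I count nonzero triples in $U^3$ with $B^2=4AC$.
\begin{itemize}
\item When $d=2$, $U=K=\BF_{p^2}$ is a field. Take $A\ne0$ with $B$ arbitrary and $C=B^2/(4A)$, or $A=B=0$ with $C\ne0$. This gives $(p^4-1)p^2+(p^2-1)$, which is the wrong form, so I recount: $(p^2-1)p^2+(p^2-1)=p^4-1$.
\item When $d\ge3$, write $A=A(a)$, $B=B(a)$, $C=C(a)$ with $A,B,C\in\BF_p[X]_{\le1}$. Since $1,a,a^2$ are $\BF_p$-independent, $B^2=4AC$ in $K$ is equivalent to the polynomial identity $B(X)^2=4A(X)C(X)$. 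I count its solutions by unique factorization:
 \begin{itemize}
 \item $A=0$ forces $B=0$ and $C\ne0$: $p^2-1$ solutions;
 \item $A$ a nonzero constant forces $B$ constant and $C=B^2/(4A)$: $p(p-1)$ solutions;
 \item $A=\alpha(X-r)$ of degree $1$ forces $B=\beta(X-r)$ and $C=\beta^2(X-r)/(4\alpha)$: $p(p-1)\cdot p$ solutions.
 \end{itemize}
 The total is $(p-1)(p+1)^2=(p+1)(p^2-1)$.
\end{itemize}

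The main point needing care is the rank-$1$ Gauss-sum normalization. I must check that Lemma~\ref{chengkai} over $\BF_q$ gives exactly $\eta_q(\text{multiplier})G_q\,q$ with the multiplier matching $\tau$, in particular when $A=0$. I must also confirm the sign identity $\eta_q(-\Delta/4)G_q^2=q\,\eta_q(\Delta)$. The remaining steps are routine bookkeeping.
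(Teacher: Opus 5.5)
Your proposal is correct and follows the paper's proof essentially step for step. You use the same character-sum count of pairs split by the rank of $Ax^2+Bxy+Cy^2$, the same removal of $1+(p+1)(n_0-1)$ dependent pairs, and the same division by $\#\GL_2(\BF_p)$; the only differences are cosmetic: you invoke Lemma~\ref{chengkai} instead of completing the square, and for $d\ge3$ you count rank-one triples via the identity $B(X)^2=4A(X)C(X)$ in $\BF_p[X]_{\le1}$ rather than via $r=B/(2A)\in\BF_p$. In the $d=2$ count, delete the false start ``$(p^4-1)p^2+(p^2-1)$''; your corrected count $(p^2-1)p^2+(p^2-1)=p^4-1$ is right.
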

\begin{proof}
	Denote by $\mathcal M_a$ the number of common solutions to the six trace conditions in ~\eqref{eq:6}. For each $(A,B,C)\in U^3$, define the inner character sum
	\[
	T(A,B,C):=\sum_{x,y\in\BF_q}\chi_q(Ax^2+Bxy+Cy^2).
	\]
	By additive character orthogonality,
\begin{align*}
	\mathcal M_a
	&=p^{-6}\sum_{l_1,\ldots,l_6\in\BF_p}\sum_{x,y\in\BF_q}
	\chi_q\bigl(l_1x^2+l_2y^2+l_3xy+l_4ax^2+l_5ay^2+l_6axy\bigr)\notag\\
	&=p^{-6}\sum_{(A,B,C)\in U^3}T(A,B,C),
	\end{align*}
	where $A=l_1+l_4a$, $B=l_3+l_6a$, and $C=l_2+l_5a$.
	
	If $\Delta\neq0$, the quadratic form is of full rank over $\BF_q^2$. When $A\neq0$, take $u=x+\frac{B}{2A}y$. Then
	\begin{align}\label{eq:m6}
	T(A,B,C)
	&=\left(\sum_{u\in\BF_q}\chi_q(Au^2)\right)
	\left(\sum_{y\in\BF_q}
	\chi_q\left(\left(C-\frac{B^2}{4A}\right)y^2\right)\right)\notag\\
	&=\eta_q(A)G_q\eta_q\left(\frac{4AC-B^2}{4A}\right)G_q\notag\\
	&=\eta_q(-1)\eta_q(\Delta)G_q^2
	=q\eta_q(\Delta).
	\end{align}
	If $A=0$ and $\Delta\neq0$, then $B\neq0$ and the same value follows directly: $T(0,B,C)=q=q\eta_q(B^2)$.
	
	Suppose $\Delta=0$ and $(A,B,C)\neq(0,0,0)$. The quadratic form has rank $1$. If $A\neq0$, Eq.~\eqref{eq:m6} gives
	\begin{equation}\label{eq:m66}
	T(A,B,C)=q\eta_q(A)G_q.
	\end{equation}
	If $A=0$, then $B=0$ and $C\neq0$, so similarly
	\[
	T(0,0,C)=q\eta_q(C)G_q.
	\]
	Finally, $T(0,0,0)=q^2$. Therefore,
	\begin{equation}\label{eq:m7}\mathcal M_a=p^{-6}\bigl(q^2+q\mathcal E_2(a)+qG_q\mathcal E_1(a)\bigr).\end{equation}
	
	The number of linearly dependent solution pairs $(x,y)$ is
	\[
	1+(p+1)(n_0-1),
	\]
	where the zero pair is counted once, and each non-zero common zero vector determines a class of dependent pairs via $[s:t]\in\mathbb P^1(\BF_p)$. Every 2-dimensional totally singular subspace has $\#\GL_2(\BF_p)$ ordered bases. Subtracting the dependent pairs from $\mathcal M_a$ and dividing by $\#\GL_2(\BF_p)$ gives $\Lambda_a$.
	
	\medskip
	
	Suppose $A\neq0$ and $r=B/2A$. The condition $\Delta=0$ gives \[A,Ar,Ar^2\in U\implies A(1,r,r^2)\in U.\]
	
	If $r\notin \BF_p$, then $1$ and $r$ are linearly independent. Since
	\[
	A\langle1,r,r^2\rangle_{\BF_p}\subseteq U
	\quad\text{and}\quad \dim_{\BF_p}U=2,
	\]
	we obtain
	\[
	\dim_{\BF_p}\langle1,r,r^2\rangle_{\BF_p}=2.
	\]
	Thus $r$ has degree two over $\BF_p$, so $\BF_p(r)=\BF_{p^2}$ and $U=A\BF_{p^2}$. Since $1\in U$, we have $1=Az$ for some $z\in\BF_{p^2}$, and hence $A\in\BF_{p^2}^*$. Therefore $U=\BF_{p^2}$. As $U=\langle1,a\rangle_{\BF_p}$ and $a\notin\BF_p$, it follows that $d=2$. The number of non-zero rank-one symmetric $2\times2$ matrices is then
	\[
	\#\mathbb P^1(\BF_{p^2})\times(p^2-1)=(p^2+1)(p^2-1)=p^4-1.
	\]

	If $d\neq2$, then $r\in\BF_p$ and $A\in U, C=Ar^2\in U$.
	\begin{itemize}
		\item If $A\neq0$ and $r\in\BF_p$, $A$ has $p^2-1$ choices, $r$ has $p$ choices and $C=Ar^2$ is determined $\implies p(p^2-1)$.
		\item If $A=0$, then $\Delta=0\implies B=0$; the triples $(0,0,C)$ with $C\in U^*$ form the infinite-slope case. Hence $C$ has $p^2-1$ choices.
	\end{itemize}
	Therefore, the total number is $(p+1)(p^2-1)$.
The proof is finished.\qedhere
\end{proof}

	
	
\begin{theorem}[Complete classification of punctured dual distances]
	The following conclusions hold regarding the code $\C_{R_a}^\perp$.
	\begin{enumerate}[1]
		\item For all $m\ge7$, odd prime $p$ and $a\notin\BF_p$, we have
		\[
		\Lambda_a > 0,\quad	d_{\min}(\C_{R_a}^\perp) = 3.
		\]
		
		\item For $m = 5$, it must be that $d = 5$, and
		\[
		\Lambda_a = 1, \quad d_{\min}(\C_{R_a}^\perp) = 3.
		\]
		
		\item For $m = 6$ with $d \in \{2, 3, 6\}$, all three layers satisfy
		\[
		\Lambda_a > 0, \quad d_{\min}(\C_{R_a}^\perp) = 3.
		\]
		Here, for $d=3$, there is an explicit formula independent of $a$
		\[
		\Lambda_a =
		\begin{cases}
		p^2 - p + 1, & p \equiv 1 \pmod{4}, \\
		3(p^2 + p + 1), & p \equiv 3 \pmod{4}.
		\end{cases}
		\]
		Consequently, the layer \[A_3(\C_{R_a}^\perp) = (p-1)\binom{p+1}{3}\Lambda_a\] is also explicit.
		
		\item For $m = 4$.
		\begin{itemize}
			\item If $d = 2$, then $X_a=\mathscr X_a(\BF_p) = \emptyset$, and the punctured code is the zero code.
			\item If $d = 4$, let $s = \#X_a=\#\mathscr X_a(\BF_p) = p + 1 - h(a)$. Then
			\[
			d_{\min}(\C_{R_a}^\perp) =
			\begin{cases}
			+\infty, & s = \dim(\C_{R_a}), \\
			4, & s > \dim(\C_{R_a}).
			\end{cases}
			\]
		\end{itemize}
		Combining with the dimension classification in Proposition~\ref{pro:dim}, the specific cases for low values of $d$ and $\Lambda_a$ are listed in Table \ref{tab:low_dim_cases}.
		
		\begin{table}[h]
			\centering
			\caption{Specific cases for low dimensions ($p \ge 3$)}
			\label{tab:low_dim_cases}
			\begin{tabular}{c|cc}
				\hline
				$p$ & $s$ & $d_{\min}(\C_{R_a}^\perp)$ \\
				\hline
				3 & 2, 4 & $+\infty$ \\
				3 & 6 & 4 \\
				5 & 2, 4 & $+\infty$ \\
				5 & 6, 8, 10 & 4 \\
				7 & 4 & $+\infty$ \\
				7 & 6, 8, 10, 12 & 4 \\
				\hline
			\end{tabular}
		\end{table}
		For $p \ge 11$, we have $\dim(\C_{R_a}) = 4$ and $s > 4$, so the dual distance is constantly 4.
		
		\item For $m = 3$, $X_a=\mathscr X_a(\BF_p)$ consists of only one point, the punctured code is $[1, 1, 1]_p$, and its dual is the zero code. For $m = 2$, $X_a=\mathscr X_a(\BF_p) = \emptyset$.
	\end{enumerate}
\end{theorem}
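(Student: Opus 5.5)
The plan is to reduce every item to the criterion $d_{\min}(\C_{R_a}^{\perp})=3\iff\Lambda_a>0$ that follows from Theorem~\ref{th:dual3}, together with $d_{\min}(\C_{R_a}^\perp)\ge3$, which holds because the columns are distinct projective points. For item (1) I will use the counting lemma just proved. It writes $\Lambda_a$ in terms of $\mathcal M_a=p^{-6}\bigl(q^2+q\mathcal E_2(a)+qG_q\mathcal E_1(a)\bigr)$. By the trivial bounds recorded before that lemma, and using $R_{\rm rk1}\le p^4-1$,
\[
\mathcal M_a\ \ge\ p^{2m-6}-p^{m}-p^{3m/2-6}\bigl(p^4-1\bigr).
\]
The dependent pairs contribute only $1+(p+1)(n_0-1)=O(p^{m-1})$, by Lemmas~\ref{n0} and~\ref{om1}. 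For $m\ge7$ one has $2m-6>\max\{m,\,3m/2-2,\,m-1\}$, so the numerator of $\Lambda_a$ is positive. Positivity of $\Lambda_a$ only needs a careful comparison of constants; I would check $m=7$, $p=3$ by hand. This gives $\Lambda_a>0$.

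For item (2), note that $m=5$ is prime and $a\notin\BF_p$, so $d=5$. The estimate above is too weak here, so I would compute $\mathcal E_1(a)$ and $\mathcal E_2(a)$ exactly. Since $d\ne2$, the rank-one triples are parametrized as $A(1,r,r^2)$ with $r\in\BF_p$, together with $(0,0,C)$. On these, $\tau=\eta_q(A)=\eta_p(N(A))$ for $A\in U$, and summing over $\mathbb P(U)$ reduces $\mathcal E_1(a)$ to the sum $\sum\eta_q$ over $U^*$, which vanishes for odd $m$, as in Lemma~\ref{om1}. For $\mathcal E_2(a)$, I would split off the rank-one triples, write $\Delta=B^2-4AC$, and sum over $C$ with $A,B$ fixed. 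The substitution $C\mapsto$ the coordinate of $\Delta$ in $U$ turns this into sums of $\eta_q$ over the affine lines $B^2+4A\,U$. These I would evaluate by grouping into projective classes, reusing the pencil computation of Lemma~\ref{om2}. The resulting value should make the numerator exactly $(p^2-1)(p^2-p)$, giving $\Lambda_a=1$.

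For item (3), the case $d=3$ runs the same exact computation, now with $e=2$ and $\eta_q|_{K^*}=1$. Here the rank-one and rank-two sums become point counts. The sign $\delta=\theta^{3}$ of $G_q$ (which equals $\theta$) then produces the two stated formulas according to $p \bmod 4$. For $d=2$ and $d=6$ it suffices to show $\Lambda_a>0$. When $d=2$, $U=\BF_{p^2}$ and $Q_0,Q_a$ span $\{\Tr(cx^2):c\in\BF_{p^2}\}$. A $2$-dimensional $\BF_{p^2}$-subspace that is totally isotropic for $\Tr_{\BF_q/\BF_{p^2}}(x^2)$ exists, since that form has rank $3$ over $\BF_{p^2}$. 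Any $\BF_p$-line inside it is then a line of $X_a$. When $d=6$, I would either use the exact formula or exhibit a line via the simultaneous diagonalization over $\overline{\BF}_p$. I expect this to be checked case by case.

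For item (4) with $d=4$, $X_a$ is the smooth genus-one quartic curve from Proposition~\ref{pro:dim}(4). No three of its points are collinear: a line meeting $Q_0=0$ and $Q_a=0$ in three points lies in both quadrics and hence in the curve, which is absurd. So $d_{\min}(\C_{R_a}^{\perp})\ge4$. If $s=\dim\C_{R_a}$, the columns are independent and the dual is zero. If $s>\dim\C_{R_a}=4$, I need four rational points in a common plane. This is the \textbf{main obstacle}. I would first try the plane through three chosen rational points $P_1,P_2,P_3$. By B\'ezout it meets the curve in a degree-$4$ divisor, whose residual fourth point is rational because it equals the divisor class $H-P_1-P_2-P_3$ on the elliptic curve. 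If that residual point coincides with some $P_i$, I would vary the triple, which is possible once $s\ge5$. Table~\ref{tab:low_dim_cases} then follows from the point counts in the Remark after Proposition~\ref{pro:dim}. Item (5) is immediate from the theorem on $m=3$ and from Proposition~\ref{pro:dim}(2).
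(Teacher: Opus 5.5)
\textbf{Gap in item (1) at $m=7,8$.} Your item (1) rests on a false exponent comparison: $2m-6>3m/2-2$ holds only for $m\ge 9$. At $m=7$ your error term $p^{3m/2-6}(p^4-1)\approx p^{17/2}$ exceeds the main term $p^{8}$. At $m=8$ your lower bound is $p^{10}-p^8-p^6(p^4-1)=p^6-p^8<0$. Both failures occur for every $p$, so this is not a matter of checking constants at $(m,p)=(7,3)$. The cases need different repairs.
\begin{itemize}
\item For odd $m$, use $\mathcal E_1(a)=0$, by the $\BF_p^*$-scaling argument you invoke yourself in item (2). This leaves $\mathcal M_a>p^{2m-6}-p^{m}$, which suffices.
\item For $m=8$, $d\ne2$, use $R_{\rm rk1}=(p+1)(p^2-1)$ instead of $p^4-1$.
\item For $m=8$, $d=2$, no bound on $|\mathcal E_1(a)|$ and $|\mathcal E_2(a)|$ separately can work. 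There $\eta_q|_{K^*}=1$ and $G_q=-p^4$, so $p^{-6}qG_q\mathcal E_1(a)=-p^{10}+p^6$ cancels the main term, and $\mathcal M_a=p^8$ comes from the $\mathcal E_2(a)$-term. You must either evaluate $\mathcal M_a$ exactly or, as the paper does, take an isotropic vector $x$ of the four-dimensional $K$-form $z\mapsto\Tr_{\BF_q/K}(z^2)$ and use the $2$-dimensional $\BF_p$-subspace $Kx$.
\end{itemize}

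\textbf{Gap in item (3) for $d=6$.} This is the more serious gap: you give no argument here. Simultaneous diagonalization over $\overline{\BF}_p$ only produces geometric lines, not rational ones. The exact formula is also not uniform in $a$ when $e=1$. For instance, $\mathcal E_1(a)=(p^2-1)h(a)$ with $h(a)=1+\sum_{t\in\BF_p}\eta_p(f_a(t))$, which is governed by the genus-two curve $y^2=f_a(t)$, and $\mathcal E_2(a)$ is a similar $a$-dependent character sum. Under trivial bounds both $G_q\mathcal E_1(a)$ and $\mathcal E_2(a)$ are as large as the main term $p^6$, so positivity is not a matter of bookkeeping. What is missing is a rationality input. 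For example, the lines on the smooth complete intersection $\mathscr X_a\subset\mathbb P^5$ form a torsor under an abelian surface (the Jacobian of that curve), and Lang's theorem then yields an $\BF_p$-rational line. This is what the paper's appeal to the Hilbert scheme of lines is meant to provide.

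\textbf{Smaller points.}
\begin{itemize}
\item For $d=2$: a non-degenerate ternary form over $\BF_{p^2}$ has Witt index $1$, so there is no $2$-dimensional totally isotropic $\BF_{p^2}$-subspace. However, one isotropic vector $x$ suffices, since $\BF_{p^2}x$ is already $2$-dimensional over $\BF_p$.
\item Item (2): your character-sum plan is a viable and more elementary alternative to the paper's count of the $16$ lines under a Frobenius $5$-cycle, but it must actually be carried out. If $A\in U^*$ is proportional to $a-t$ (or to $1$, for $t=\infty$), then $AU$ is the set of $\Delta(a)$ with $\Delta\in\BF_p[T]_{\le2}$ vanishing at $t$. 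Summing over $B$ produces the twist $\eta_p(\Delta(t))$. Summing over $t\in\mathbb P^1(\BF_p)$ then kills every $\Delta$ not of rank one, as in the proof of Theorem~\ref{valsa}. This gives $\mathcal E_2(a)=p^3(p^2-1)$, hence $\Lambda_a=1$.
\item Item (4): ``vary the triple'' is exactly the statement that every class of $E(\BF_p)$ is a sum of four distinct rational points once $s\ge5$. This still needs a short combinatorial argument.
\end{itemize}
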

\begin{proof}We proceed by induction on $m$ and case analysis based on the dimension of the quadratic form.
	
	\paragraph{Case  $m$  odd}
	By Lemma~\ref{quadratic}, we know $n_0=p^{m-2}$ and the quadratic form is of full rank over $\BF_q^2$. By Eq.~\eqref{eq:m7},
	\[
	|\mathcal E_2(a)|\le p^6-1-R_{\rm rk1},\]this implies that
	\begin{equation}\label{eq:22}\mathcal M_a\ge p^{2m-6}-(p^6-1-R_{\rm rk1})p^{m-6}.
	\end{equation}
	
	When $2\nmid m$ and $m\ge 7$, replacing $R_{\rm rk1}=(p+1)(p^2-1)$ into ~\eqref{eq:22}, then
	\[
	\mathcal M_a>1+(p+1)(p^{m-2}-1).
	\]
	Therefore, there exists an independent solution, which implies $\Lambda_a > 0$.
	
	\paragraph{Case $m$ even and $m \ge 10$}
	Using  the following bound in Eq.~\eqref{eq:m7} \[
	|\mathcal E_2(a)|\le p^6-1-R_{\rm rk1},\qquad |\mathcal E_1(a)|\le R_{\rm rk1},
	\qquad |G_q|=p^{m/2},
	\]
	and by Lemma~\ref{n0}, \[n_0=p^{m-2}-\delta(p-1)p^{m/2-2}h(a) \le p^{m-2} + (p^2-1)p^{m/2-2},\] we similarly obtain an independent solution.
	
	For $m=8, d \ne 2$, substituting $R_{\rm rk1} = (p+1)(p^2-1)$ keeps the inequality strict.
	
	For $m=8, d=2$, regard $\BF_q$ as a four-dimensional space over $K = \BF_{p^2}$. The non-degenerate $K$-quadratic form
	\[ z \longmapsto \Tr_{\BF_q/K}(z^2) \]
	has dimension four over the finite field, and possesses a non-zero isotropic vector $x$. Hence, $Kx$ as a two-dimensional $\BF_p$-subspace satisfies $\Tr_{\BF_q/K}(w^2) = 0$, thereby simultaneously satisfying the two $\BF_p$-trace constraints. This completes the proof for all $m \ge 7$.
	\smallskip
	
	\paragraph{Case $m = 6$}
	If $d=[\BF_p(a):\BF_p]=2$, then $[\BF_{p^6}:K]=3$, meaning $\BF_{p^6}$ is a 3-dimensional vector space over $K$.
	
	A desired 2-dimensional $\BF_p$-subspace of $\BF_{p^6}$ is exactly a 1-dimensional $\BF_{p^2}$-subspace $W=K\cdot x$, since $K$ itself is 2-dimensional over $\BF_p$.
	
	For any $tx\in W (t\in K)$, the trace constraint becomes
	\[\Tr((tx)^2)=\Tr_{K/\BF_p}(t^2\Tr_{\BF_q/K}(x^2)).\]
	This vanishes identically for all $t\in K$ iff $\Tr_{\BF_q/K}(x^2)=0$. Note that $\Tr_{\BF_q/K}(x^2)$ is a non-degenerate quadratic form over $K$, and a non-degenerate quadratic form on an odd-dimensional vector space of dimension $\geq3$  always has non-zero isotropic vectors. Thus there exists nonzero $x\in\BF_{p^6}$ satisfying the constraint, so $W=Kx$  is the required 2-dimensional $\BF_p$-subspace, giving $\Lambda_a\geq1$.
	
	\medskip
	
	If $d=3$, then $K = \mathbb{F}_{p^3}, e=2$. For every $z \in K^*$, \[\eta_{q}(z) = \eta_K(z)^2 = 1.\] Therefore,
	\[
	\mathcal E_2(a)=p^6-1-R_{\rm rk1},\qquad \mathcal E_1(a)=R_{\rm rk1},
	\qquad R_{\rm rk1}=(p+1)(p^2-1).
	\]
	Since $\theta=\eta_p(-1)$, $G_q = -\theta p^3$ and $n_0 = p^4 - \theta p(p^2-1)$. Substituting them into $\Lambda_a$ yields
	{\small
	\begin{equation}
	\Lambda_a=\frac{2p^6-2
	-(1+\theta p^3)(p+1)(p^2-1)
	-(p+1)\bigl(p^4-\theta p(p^2-1)-1\bigr)}{(p^2-1)(p^2-p)}.
	\end{equation}
	}
	When $\theta=1$, the numerator factors as \[(p^2-1)(p^2-p)(p^2-p+1).\] When $\theta=-1$, the numerator is \[3(p^2-1)(p^2-p)(p^2+p+1).\]
This proves $\Lambda_a$.	

\medskip

	If $d=6$, the two quadratic forms over $\overline{\BF}_{p^6}$ can be written as \[Q_0 = \sum_{i=1}^6 X_i^2\ \text{ and }\ Q_a = \sum_{i=1}^6 a_i X_i^2,\] where $a_1, \dots, a_6$ are distinct conjugates.
	
	The Jacobian matrix of the system $\{Q_0=0,Q_a=0\}$ has row equal to the gradients
	\[
	\nabla Q_0=(2X_1,2X_2,\ldots,2X_6),\]and
	\[\nabla Q_a=(2a_1X_1,2a_2X_2,\ldots,2a_6X_6).
	\]
	
	Suppose $P=[X_1:\cdots : X_6]\in \mathscr X_a$ is a singular point, so the two gradients are linearly dependent. Then there exists $\lambda\in \overline{\BF}_p$ such that for all $1\leq i\leq6$,
	\[2X_i=\lambda 2a_iX_i\implies X_i(1-\lambda a_i)=0.\]
	Since all $a_i$ are pairwise distinct, the equation $1=\lambda a_i$ holds for at most one index $i$, so at most one coordinate $X_i$ is nonzero, all others are $0$. But then $Q_0(P)=X_i^2$ forces $X_i=0$, so $P=0$, impossible in projective space.
	
	Hence, $\mathscr X_a\subset \mathbb P^5_{\BF_p}$ is a smooth complete intersection of two quadrics, of dimension $\dim \mathscr X_a=5-2=3$.
	
	We use the Hilbert scheme of lines $\mathrm{Hilb}_{\rm lines}(\mathscr X_a)$ parameterizing all lines $\ell\subset \mathscr X_a$.
Therefore, $\mathrm{Hilb}_{\rm lines}(\mathscr X_a)$ has a $\BF_p$-rational point, which corresponds exactly to a line $\ell\subset \mathscr X_a$ defined over $\BF_p$. This line spans a $2$-dimensional $\BF_p$-subspace of $\BF_q$ on which both trace forms vanish identically, so $\Lambda_a>0$.
	\medskip
	
	\paragraph{Case $m = 5$}
	Here $d=5$.
	The conjugates $a_1,\dots,a_5$ are distinct, and the same diagonalization argument gives $Q_0 = \sum_{i=1}^5 X_i^2$ and $Q_a = \sum_{i=1}^5 a_i X_i^2$. The intersection $\mathscr X_a \subset \mathbb P^4$ is a smooth surface of dimension $2$, which is a del Pezzo surface of degree $4$.
	
	The quadratic form pencil $uQ_0+vQ_a=0$ is singular if and only if its discriminant vanishes, which is equivalent to the condition \[v\cdot a_i+u=0.\]
	
	This corresponds to the projective point
	\[[u:v]=[-a_i:1],\]then there are exactly 5 singular members in the pencil.
	The Frobenius automorphism permutes these 5 conjugate singular members, acting as a cyclic group of order 5, denote by $C_5$.
	
	It is a classical result (Reid) that a smooth del Pezzo surface of degree $4$ contains exactly $16$ lines, these lines are parameterized by the set \[\mathcal{V} = \{ (\epsilon_1,\dots,\epsilon_5) \in \BF_2^5\ : \sum \epsilon_i = 0 \},\] which has size $2^{5-1} = 16$.
	
	Selecting a geometric line as the origin of the affine space, its action can be written as\[z\mapsto Mz+t,\qquad z,t\in \mathcal{V},\] where $M$ is the linear action of $C_5$ on $\mathcal{V}$. Its characteristic polynomial of this action is the $5$-th cyclotomic polynomial \[\Phi_5(T) = T^4+T^3+T^2+T+1\in\BF_2[T].\]  $\Phi_5(1) = 5 \equiv 1 \neq 0$, so the operator $I - M$ is invertible. The fixed point equation
	\[(I-M)z=t\] has exactly one solution. Since the Hilbert scheme of lines is a finite   \'{e}tale  scheme of 16 points, this fixed point corresponds precisely to a $\BF_p$-rational line. Therefore $\Lambda_a = 1$.
	\smallskip
	
	\paragraph{Case $m=4$ and $d=4$}
	Here $\mathscr X_a \subset \mathbb P^3$ is the intersection of two quadrics, which defines a smooth curve of genus $1$  and degree $4$. A line in $\mathbb P^3$ intersects a degree-$4$ curve in at most $4$ points, so $\mathscr X_a$ contains no lines. Consequently, no three points of $X_a=\mathscr X_a(\BF_p)$ are collinear, which implies that the minimum distance of the dual punctured code is at least $4$.
	
	By the Hasse-Weil bound, the number of $\BF_p$-rational points satisfies \[\#\mathscr X_a(\BF_p) \ge p+1 - 2\sqrt{p}>0.\]
	Then we can choose an origin in $X_a=\mathscr X_a(\BF_p)$, and view it as an elliptic curve $E$.
	Set
	\[\mathscr{L} = \mathcal{O}_{E}(1).\]
	Then $h^0(E,\mathscr L)=4$ and $\dim|\mathscr L|=3$. In the group law of $E(\BF_p)$, fixed element $T \in E(\BF_p)$, four points $\{P_1,P_2,P_3,P_4\}$ form a hyperplane section if and only if \[P_1+P_2+P_3+P_4=T.\]
	Here, the complete linear system $\mathscr{L}$ has dimension
	\[h^0(E,\mathscr L)=4,\]
	and since the curve is non-degenerate in $\mathbb P^3$, all effective divisors of degree four linearly equivalent to $\mathscr L$ indeed arise from hyperplane sections.
	
	We now employ an elementary fact from group theory. If the order $N$ of a finite abelian group $G$ satisfies $N\geq5$, then for any $T\in G$, there exist four distinct elements whose sum is $T$.
	
	Therefore, when $s=\#X_a=\#\mathscr X_a(\BF_p)\geq5$, there exist four distinct rational points forming a hyperplane section. The corresponding four columns are linearly dependent, while any three columns are linearly independent, thus $d_{\min}(\C_{R_a}^{\perp})=4$.
	
	If $s\leq4$ and the dual code is non-zero, then $s>\dim(\C_{R_a})$. Since there are no codewords of weight $\leq3$ in the dual code, the only possibility is $s=4$. As all four columns are linearly dependent, the dual distance remains 4. If $s=\dim(\C_{R_a})$, the generator matrix columns are full rank, and the dual code is the zero code. The remaining results can be obtained from Proposition~\ref{pro:dim}.
\end{proof}
\begin{example}
Take $p=5$, $m=6$, and let $a\in\BF_{5^3}\subset\BF_{5^6}$ be a root of
$T^3+2T+1\in\BF_5[T]$. Then $d=3$. An exhaustive computation gives
\[
\#R_a=126,\qquad \Lambda_a=21,
\]
and finds $420$ dependent three-column supports. Hence the program returns
\[
d_{\min}(\C_{R_a}^{\perp})=3,\qquad
A_3(\C_{R_a}^{\perp})=420\cdot4=1680.
\]
\end{example}

\subsection{Divisibility}
By Theorems~\ref{th:weight1} and ~\ref{th:weight3}, we have
\begin{theorem}\label{th:divisible}
	Set
	\[
	\Delta_D=\gcd\{\wt(c):0\ne c\in\C_{D_a}\},\qquad
	\Delta_R=\gcd\{\wt(c):0\ne c\in\C_{R_a}\}.
	\]
	For the zero code, we use the convention $\gcd(\emptyset)=0$.
	Then \[
	\Delta_D=(p-1)\Delta_R.
	\]In particular, for $a\notin\mathbb F_p$.
	\begin{itemize}\item If $m=3$, then \[
		\Delta_R=1,\qquad \Delta_D=p-1.
		\]
		\item If $2\nmid m$ and $m\geq 5$, use $\widehat R_e(\beta)$ as defined above, and set \[
		g_a^{\rm odd}
		=
		\gcd_{\substack{\beta\in K,\ \widehat R_e(\beta)>0\\
				p^{(m-1)/2}-\varepsilon_ms_a(\beta)>0}}
		\left(p^{(m-1)/2}-\varepsilon_ms_a(\beta)\right),
		\]then
		\[
		\Delta_R=p^{(m-5)/2}g_a^{\rm odd},\qquad
		\Delta_D=(p-1)p^{(m-5)/2}g_a^{\rm odd}.
		\]
		\item If $2\mid m$ and $m\geq4$, we take
		\[
		\mathcal B_e^+=\{\beta\in K:\widehat R_e(\beta)>0,
		p^{m/2-1}-\delta(h(a)-Z_a(\beta))>0\},\] and
	\[	g_a^{\rm even}
		=\gcd_{\beta\in\mathcal B_e^+}
		\left(p^{m/2-1}-\delta(h(a)-Z_a(\beta))\right).
		\]then
		\[
		\Delta_R=p^{m/2-2}g_a^{\rm even},\qquad
		\Delta_D=(p-1)p^{m/2-2}g_a^{\rm even}.
		\]
		\item If $m=2$, then $D_a=\emptyset$.
	\end{itemize}
\end{theorem}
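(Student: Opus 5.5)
The plan is to derive everything from the punctured-code relation of Theorem~\ref{puncode-da} together with the explicit weight formulas of Theorems~\ref{th:weight1} and~\ref{th:weight3}. First, Theorem~\ref{puncode-da} says that each codeword of $\C_{D_a}$ has weight exactly $p-1$ times the weight of the corresponding codeword of $\C_{R_a}$, and that the two evaluation maps share the kernel. So the multisets of nonzero weights are related by $w\mapsto (p-1)w$, and hence $\Delta_D=(p-1)\Delta_R$. In the zero-code case both gcds are empty, and the convention $\gcd(\emptyset)=0$ gives the identity trivially. It therefore suffices to compute $\Delta_D$ and divide by $p-1$.

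\textbf{Odd $m$.} For $m=3$ the code $\C_{D_a}$ is $[p-1,1,p-1]_p$ by the $m=3$ theorem, so $\Delta_D=p-1$ and $\Delta_R=1$. For odd $m\ge5$, Theorem~\ref{th:weight3} gives, for $b\ne0$,
\[
\wt(\mathbf c_b)=(p-1)p^{(m-5)/2}\bigl(p^{(m-1)/2}-\varepsilon_m s_a(\beta(b))\bigr).
\]
Every $b\neq 0$ gives a nonzero codeword, and $\beta$ ranges exactly over the set of $\beta\in K$ with $\widehat R_e(\beta)>0$. The positivity condition in the definition of $g_a^{\rm odd}$ is then automatic. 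Factoring out $(p-1)p^{(m-5)/2}$ from the gcd gives the stated formula.

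\textbf{Even $m\ge4$.} By Theorem~\ref{th:weight1},
\[
\wt(\mathbf c_b)=(p-1)p^{m/2-2}\bigl(p^{m/2-1}-\delta(h(a)-Z_a(\beta))\bigr).
\]
Here the nonzero codewords are exactly those $b$ with positive weight, which is why the gcd runs over $\mathcal B_e^+$. Factoring out $(p-1)p^{m/2-2}$ again gives the formula. For $m=2$ we have $D_a=\emptyset$ by Proposition~\ref{pro:dim}.

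\textbf{Main obstacle.} The one delicate point is matching the index sets precisely. We must check that the attained $\beta$-values with positive bracket correspond bijectively to nonzero codewords, with $b$ in the kernel discarded. We also need the gcd to be taken over integer values: when $m=4$ the factor $p^{m/2-2}=1$, and one has to confirm that the bracket is an integer; this holds since $h(a)$ and $Z_a$ are integers.
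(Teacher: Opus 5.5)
Your proposal is correct and is essentially the paper's own argument. The paper reads the statement directly off the weight formulas of Theorems~\ref{th:weight1} and~\ref{th:weight3}, with the relation $\Delta_D=(p-1)\Delta_R$ coming from Theorem~\ref{puncode-da}, and factors out $(p-1)p^{(m-5)/2}$, respectively $(p-1)p^{m/2-2}$, exactly as you do; your explicit handling of the index sets (attained $\beta$ via $\widehat R_e(\beta)>0$, and discarding kernel elements through the positivity condition, as needed for $m=4$) is what the paper leaves implicit.
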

The experiment results show that 
\begin{itemize}
\item For $p=3,5$, $m=d=4$, then $g_a^{\rm even}=1$ for all $a$. 
\item For $p=3$, $m=6$ and $d=2,3,6$, then $g_a^{\rm even}=1$ for all $a$.
\end{itemize}
\begin{theorem}
	Assume $D_a\neq\emptyset$. One has $\mathbf 1\notin\C_{D_a}$ and
	\[
	\begin{aligned}
	\mathbf 1\in\C_{R_a}
	&\Longleftrightarrow
	\exists b\in\BF_q:\
	\Tr(br)=1\quad(\forall r\in R_a).
	\end{aligned}
	\]
	Moreover, the  $p$-divisibility of $\C_{D_a}$ and $\C_{R_a}$ is as follows
	\begin{itemize}
		\item If $a\notin\BF_p$ and $m\geq7$ is odd, the code is $p$-divisible.
		\item If $a\notin\BF_p$ and $m\geq6$ is even, the code is $p$-divisible.
	\end{itemize}
	In all automatically $p$-divisible positive-dimension ranges above,
	\[\#R_a\equiv1\pmod p.\]
	Hence $\C_{R_a}$ cannot contain the all-ones vector either.
\end{theorem}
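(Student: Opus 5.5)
The plan has three parts: show $\mathbf 1\notin\C_{D_a}$ from the orbit structure, read off $\mathbf 1\in\C_{R_a}$ from the generator matrix, and get $p$-divisibility and $\#R_a\bmod p$ from the weight formulas.

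\textbf{Step 1: $\mathbf 1\notin\C_{D_a}$.} Use the decomposition \eqref{uni-da} of $D_a$ into $\BF_p^*$-orbits. Suppose $\mathbf c_b=\mathbf 1$ for some $b\in\BF_q$. Take any $r\in D_a$, which exists since $D_a\neq\emptyset$. Then $\Tr(btr)=t\Tr(br)=1$ for every $t\in\BF_p^*$. Putting $t=1$ gives $\Tr(br)=1$, so $t=1$ for all $t\in\BF_p^*$. This is impossible because $p\ge3$ gives $\#\BF_p^*\ge2$.

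\textbf{Step 2: the criterion for $\C_{R_a}$.} The codewords of $\C_{R_a}$ are exactly the vectors $(\Tr(br))_{r\in R_a}$ with $b\in\BF_q$. The displayed equivalence is therefore just a restatement of $\mathbf 1\in\C_{R_a}$.

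\textbf{Step 3: $p$-divisibility.} I invoke Theorems~\ref{th:weight1} and~\ref{th:weight3}, as in the $p=3$ part of Theorem~\ref{th:sef}, and work case by case.
\begin{itemize}
\item For odd $m\ge7$, every nonzero weight equals $(p-1)\bigl(p^{m-3}-\varepsilon_m p^{(m-5)/2}s_a(\beta)\bigr)$. Since $(m-5)/2\ge1$ and $m-3\ge1$, each such weight is divisible by $p$.
\item For even $m\ge6$, every nonzero weight equals $(p-1)\bigl(p^{m-3}-\delta p^{m/2-2}(h(a)-Z_a(\beta))\bigr)$. Here $m/2-2\ge1$, so again each weight is divisible by $p$.
\end{itemize}
By Theorem~\ref{puncode-da}, weights in $\C_{R_a}$ are the weights of $\C_{D_a}$ divided by $p-1$. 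Since $\gcd(p,p-1)=1$, the same $p$-divisibility holds for $\C_{R_a}$.

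\textbf{Step 4: $\#R_a\equiv1\pmod p$.} Here $\#R_a=n/(p-1)=(n_0-1)/(p-1)$.
\begin{itemize}
\item For odd $m\ge7$, Lemma~\ref{om1} gives $n_0=p^{m-2}$. Hence $\#R_a=(p^{m-2}-1)/(p-1)=1+p+\cdots+p^{m-3}\equiv1\pmod p$.
\item For even $m\ge6$, Lemma~\ref{n0} gives $\#R_a=(p^{m-2}-1)/(p-1)-\delta p^{m/2-2}h(a)$. Since $m/2-2\ge1$, the correction term vanishes mod $p$, so again $\#R_a\equiv1\pmod p$.
\end{itemize}

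\textbf{Step 5: conclusion.} If $\mathbf 1\in\C_{R_a}$, its weight $\#R_a$ would have to be divisible by $p$ by Step 3. This contradicts Step 4, so $\mathbf 1\notin\C_{R_a}$ in these ranges.

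The only genuine care point is the boundary of the exponents. One must check that $m\ge7$ in the odd case and $m\ge6$ in the even case are exactly what make $(m-5)/2\ge1$ and $m/2-2\ge1$. These conditions are needed both for the divisibility in Step 3 and for the vanishing of the correction term in Step 4.
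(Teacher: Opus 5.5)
Your proposal is correct and follows essentially the same route as the paper: $\BF_p^*$-stability of $D_a$ rules out $\mathbf 1\in\C_{D_a}$, $p$-divisibility is read off from the weight formulas in Theorems~\ref{th:weight1} and~\ref{th:weight3}, and $\#R_a\equiv1\pmod p$ comes from Lemmas~\ref{n0} and~\ref{om1}. Your remark that $\gcd(p,p-1)=1$ transfers $p$-divisibility to $\C_{R_a}$ makes explicit a step the paper leaves implicit, and Theorem~\ref{th:weight3} is the right citation for odd $m$; the paper cites Theorem~\ref{th:weight5} there instead.
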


\begin{proof}
	Suppose $\mathbf c_b=\mathbf 1$ and $r\in D_a$. Since $D_a$ is $\BF_p^*$-stable, $2r\in D_a$ as well, and
	\[1=\Tr(b(2r))=2\Tr(br)=2,\]a contradiction. For $\mathbf 1\in \C_{R_a}$, the result is the unwinding of the definition of $\C_{R_a}$.
	
	The divisibility ranges are taken from Theorems~\ref{th:weight1} and ~\ref{th:weight5}. For odd $m$,
	\[\#R_a=\frac{p^{m-2}-1}{p-1}\equiv 1\pmod p.\]
	For even $m\ge6$,
	\[\#R_a=\frac{p^{m-2}-1}{p-1}-\delta p^{m/2-2}h(a)\equiv1\pmod p.\]
	
	If $\C_{R_a}$ is $p$-divisible and contained $\mathbf 1$, its weight $\#R_a$ would be divisible by $p$, this is impossible.
\end{proof}
\begin{remark}\cite[Theorem~11]{21}
	For a linear code over a field of odd characteristic, containment of the all-ones vector together with $p$-divisibility of all non-zero weights implies self-orthogonality. The present theorem shows that this condition is not necessary for the codes arising from this construction.
\end{remark}
\begin{example}
Take $p=3$, $m=6$, and let $a\in\BF_{3^3}\subset\BF_{3^6}$ be a root of
$T^3-T-1\in\BF_3[T]$. Then $d=3$. An exhaustive computation gives
\[
W_{\C_{D_a}}(z)
=1+52z^{54}+312z^{66}+208z^{72}+156z^{78}
\]
and
\[
W_{\C_{R_a}}(z)
=1+52z^{27}+312z^{33}+208z^{36}+156z^{39}.
\]
Thus the program returns
\[
\Delta_D=6,\qquad \Delta_R=3,\qquad \#R_a=52\equiv1\pmod3,
\]
and finds $\mathbf 1\notin\C_{D_a}$ and
$\mathbf 1\notin\C_{R_a}$.
\end{example}
\begin{proposition}
	Let $\mathcal C\subseteq\BF_p^n$ be a nonzero projective linear code of dimension $k_{\mathcal C}$. Then
	\[\gcd\{\wt(c):\ 0\neq c\in \mathcal C\}\mid p^{k_{\mathcal C}-1}.\]
	In particular, the greatest common divisor has no prime divisor other than $p$.
\end{proposition}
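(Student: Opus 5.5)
The plan is to argue geometrically with the point--hyperplane incidence structure of $\mathbb P^{k_{\mathcal C}-1}(\BF_p)$, and then use only a divisibility argument. Write $k=k_{\mathcal C}$. Since $\mathcal C$ is projective, its columns are nonzero and pairwise non-proportional, so they determine a set $X$ of $n$ distinct points of $\mathbb P^{k-1}(\BF_p)$. The codewords, up to the $p-1$ nonzero scalar multiples, are in bijection with the hyperplanes $H$, and
\[
\wt(c_H)=n-\#(X\cap H).
\]
Because $\dim\mathcal C=k$, every hyperplane gives a nonzero codeword. Put $\Delta=\gcd\{\wt(c):0\ne c\in\mathcal C\}$, so $\Delta\mid n-\#(X\cap H)$ for every $H$. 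The case $k=1$ is trivial: then $n=1$ and the only nonzero weight is $1$. So I assume $k\ge2$.

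\textbf{Step 1 (inverting the incidence matrix).} Let $N$ be the point--hyperplane incidence matrix, and set $v=\frac{p^k-1}{p-1}$, $r=\frac{p^{k-1}-1}{p-1}$ and $\lambda=\frac{p^{k-2}-1}{p-1}$. Then
\[
NN^T=p^{k-2}I+\lambda J \qquad\text{and}\qquad N\mathbf 1=r\mathbf 1 .
\]
Write the weight vector as $n\mathbf 1-N^T\mathbf 1_X=\Delta u$ with $u$ an integer vector. Applying $N$ gives
\[
p^{k-2}\mathbf 1_X+\lambda n\mathbf 1=nr\mathbf 1-\Delta Nu .
\]
Since $r-\lambda=p^{k-2}$, this becomes
\[
p^{k-2}\bigl(n\mathbf 1-\mathbf 1_X\bigr)=\Delta\,Nu .
\]
Read at a point $P\in X$, which exists because $\mathcal C\ne0$, this yields $\Delta\mid p^{k-2}(n-1)$.

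\textbf{Step 2 (summing over hyperplanes).} Each point lies on $r$ hyperplanes, so $\sum_H\#(X\cap H)=nr$. Hence
\[
\sum_H\wt(c_H)=vn-rn=np^{k-1},
\]
and therefore $\Delta\mid np^{k-1}$.

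\textbf{Step 3 (combining).} If a prime $\ell\ne p$ divided $\Delta$, Step 1 would give $\ell\mid n-1$ and Step 2 would give $\ell\mid n$, which is impossible. So $\Delta$ is a power of $p$, and $\Delta\mid\gcd\bigl(p^{k-2}(n-1),\,np^{k-1}\bigr)$. If $p\mid n$, then $p\nmid n-1$ and this gcd divides $p^{k-2}$. If $p\nmid n$, the gcd divides $p^{k-1}$. In either case $\Delta\mid p^{k-1}$.

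The only delicate point I expect is Step 1, namely the identity $NN^T=p^{k-2}I+\lambda J$ and the bookkeeping with $r-\lambda$. This is the standard fact that two distinct points lie on exactly $\lambda$ common hyperplanes, and I would cite it rather than rederive it.
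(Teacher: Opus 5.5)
Your proof is correct, and it takes a genuinely different route from the paper's.

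\textbf{The paper's route.} The paper works with additive characters on $\BF_p^{k}$. Using projectivity to kill the $i\neq j$ terms, it shows that the character transform of the weight function at a column satisfies $\sum_u w(u)\chi_p(\langle u,\lambda g_j\rangle)=-p^{k-1}$ exactly. It then gets $\Delta\mid p^{k-1}$ in one stroke, because $p^{k-1}/\Delta$ is a rational algebraic integer in $\mathbb{Z}[\zeta_p]$.

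\textbf{Your route.} You stay entirely inside $\mathbb{Z}$. You replace the character transform by the point--hyperplane incidence matrix of $\mathrm{PG}(k-1,p)$ and the symmetric-design identity $NN^T=p^{k-2}I+\lambda J$; your bookkeeping with $r-\lambda=p^{k-2}$, $v-r=p^{k-1}$ and $J\mathbf 1_X=n\mathbf 1$ is right.

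\textbf{What each approach costs and buys.}
\begin{enumerate}[1]
\item The price of your route is that the $\lambda J$ term brings in $n$. Evaluating at a point of $X$ therefore only gives $\Delta\mid p^{k-2}(n-1)$, which is weaker than the paper's exact value $-p^{k-1}$.
\item To remove primes other than $p$, you need a second relation, $\Delta\mid np^{k-1}$, together with $\gcd(n,n-1)=1$. This relation is the analogue of the Fourier coefficient at $x=0$, which the paper never needs.
\item In return, your argument avoids cyclotomic integers altogether.
\item Your combination step also yields a small refinement the paper's proof does not state: if $p\mid n$, then in fact $\Delta\mid p^{k-2}$. For example, the complement of a line in the Fano plane gives weights $2$ and $4$, so $\Delta=2=p^{k-2}$.
\end{enumerate}
Your separate handling of $k=1$ (where projectivity forces $n=1$) is also fine.
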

\begin{proof}
	Let $G=(g_1,\ldots,g_n)$ with $g_i\in\BF_p^{k_{\mathcal C}}$ be a full-rank generator matrix of $\mathcal C$. For $u\in\BF_p^{k_{\mathcal C}}$, define
	\[w(u)=\#\{i:\ \langle u,g_i\rangle\neq0\}.\]
	Then $w(u)=\sum\limits_{i=1}^n\mathrm{1}_{\{\langle u,g_i\rangle\neq0\}}$ and
	\[\sum_{u\in\BF_p^{k_{\mathcal C}}}w(u)\chi_p(\langle u,x\rangle)=\sum_{i=1}^{n}\sum_{\langle u,g_i\rangle\neq0}\chi_p(\langle u,x\rangle).\]
	
	Now fix $j$ and take $x=\lambda g_j$, $\lambda\in\BF_p^*$. Split the sum into $i=j$ and $i\neq j$.
	\smallskip
	
	\textbf{Case $i\neq j$}.  Projectivity means $g_i$ and $g_j$ are linearly independent, so the map $u\mapsto (\langle u,g_i\rangle,\langle u,g_j\rangle)$ is a surjective linear map to $\BF_p\times\BF_p$.
	
	We count $u$ with $\langle u,g_i\rangle=s\neq0$, for each $s$, the preimage of $(s,t)$ for any $t$ has size $p^{k_{\mathcal C}-2}$. Sum over all $s\neq0$ and all $t$,\[\sum_{s\in\BF_p^*}\sum_{t\in\BF_p}p^{k_{\mathcal C}-2}\cdot\chi_p(\lambda t)=p^{k_{\mathcal C}-2}(p-1)\sum_{t\in\BF_p}\chi_p(\lambda t)=0.\]
	
	\textbf{Case $i= j$}. We restrict to $\langle u,g_j\rangle\neq0$, since $x=\lambda g_j$, $\langle u,x\rangle=\lambda\cdot\langle u,g_j\rangle$. Take $t=\langle u,g_j\rangle$, $t\in\BF_p^*$. Then for each fixed $t\neq0$, the set $\{u:\ \langle u,g_j\rangle=t\}$ is an affine hyperplane, size $p^{k_{\mathcal C}-1}$. Hence,
	\[\sum_{\langle u,g_j\rangle\neq0}\chi_p(\langle u,x\rangle)=\sum_{t\in\BF_p^*}p^{k_{\mathcal C}-1}\chi_p(\lambda t)=-p^{k_{\mathcal C}-1}.\]
	
	Summing over all $i$ gives
	\[\sum_{u\in\BF_p^{k_{\mathcal C}}}w(u)\chi_p(\langle u,x\rangle)=-p^{k_{\mathcal C}-1}.\]
	
	If an integer $\Delta$ divides every $w(u)$, then $p^{k_{\mathcal C}-1}/\Delta$ is a rational algebraic integer, hence belong to $\BZ$. Thus $\Delta\mid p^{k_{\mathcal C}-1}$.
\end{proof}
Application to the punctured code, if $D_a\neq\emptyset$, the punctured code is projective, hence
\[\Delta_R=p^\lambda,\qquad \Delta_D=(p-1)\Delta_R\] for some integer $\lambda\geq0$.

By Theorem~\ref{th:weight1}, for $\beta\neq0$, if $ m\ge4$ is even, we then have
\begin{equation*}
\frac{w_a^{\rm even}(\beta)}{p-1}
=p^{m/2-2}\left(p^{m/2-1}-\delta(h(a)-Z_a(\beta))\right).
\end{equation*}

\begin{lemma}\label{le:021}
	Let $m\ge4$ be even and let $\mathcal B_e^+$ be as defined in Theorem~\ref{th:divisible}.
	Then
	\begin{equation}
	\Delta_R=p^{m/2-2+\kappa_e(a)},
	\end{equation}
	where
	\[
	\kappa_e(a)=
	\begin{cases}
	1,&h(a)-Z_a(\beta)\equiv0\pmod p
	\quad\text{for all }\beta\in\mathcal B_e^+,\\
	0,&\text{otherwise}.
	\end{cases}
	\]
\end{lemma}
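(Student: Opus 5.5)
The plan is to combine the explicit weight formula of Theorem~\ref{th:weight1} with the fact that $\Delta_R$ is a power of $p$. That fact comes from the preceding proposition on projective codes, since $\C_{R_a}$ is projective once $D_a\neq\emptyset$. By Theorem~\ref{puncode-da}, the nonzero weights of $\C_{R_a}$ are exactly the numbers $w_a^{\rm even}(\beta(b))/(p-1)$ for $b\neq0$ with positive weight. By the identity displayed just before the lemma, each such weight equals
\[
p^{m/2-2}\,N(\beta),\qquad N(\beta):=p^{m/2-1}-\delta\bigl(h(a)-Z_a(\beta)\bigr),
\]
and $\beta$ ranges exactly over $\mathcal B_e^+$, because of the reachability condition $\widehat R_e(\beta)>0$ and the positivity condition. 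The case $\beta=0$ also fits this formula, since Theorem~\ref{th:weight1} is valid for all $b\neq0$. Hence
\[
\Delta_R=p^{m/2-2}\gcd_{\beta\in\mathcal B_e^+}N(\beta)=p^{m/2-2}g_a^{\rm even},
\]
which is consistent with Theorem~\ref{th:divisible}. It therefore suffices to show $g_a^{\rm even}\in\{1,p\}$, with $g_a^{\rm even}=p$ exactly when $p\mid h(a)-Z_a(\beta)$ for all $\beta\in\mathcal B_e^+$.

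\textbf{Step 1 ($g_a^{\rm even}$ is a power of $p$).} Since $\Delta_R=p^\lambda$ for some $\lambda\ge0$, we get $g_a^{\rm even}=p^{\lambda-(m/2-2)}$, so $g_a^{\rm even}$ is a power of $p$ (a nonnegative one, being an integer).

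\textbf{Step 2 ($g_a^{\rm even}\mid p$).} From \eqref{ha} and \eqref{Za}, the quantities $h(a)$ and $Z_a(\beta)$ are sums of $p+1$ terms each in $\{0,\pm1\}$. Hence $|h(a)-Z_a(\beta)|\le p+1$. For $m\ge6$ we have $m/2-1\ge2$, so $N(\beta)\equiv-\delta(h(a)-Z_a(\beta))\pmod{p^2}$. I would show that the case $p^2\mid N(\beta)$ for every $\beta\in\mathcal B_e^+$ is impossible. This requires $h(a)-Z_a(\beta)\in\{0,\pm p^2\}$ for each such $\beta$, and since $p^2>p+1$, this forces $h(a)=Z_a(\beta)$ for all reachable $\beta$. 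In that situation every nonzero codeword of $\C_{D_a}$ has the same weight $(p-1)p^{m-3}$. The code would then be a constant-weight (one-weight) projective code of dimension $m$, so $\C_{R_a}$ would be a simplex code of length $(p^m-1)/(p-1)$. This contradicts $\#R_a\approx p^{m-3}$ from Lemma~\ref{n0}. Alternatively, one can use the second Pless moment in Proposition~\ref{pro:dim}(5), which would fail for a one-weight code of this length.

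For $m=4$, the relation $N(\beta)=p-\delta(h(a)-Z_a(\beta))$ together with $|h(a)-Z_a(\beta)|\le p+1$ gives $|N(\beta)|\le 2p+1<p^2$. So every positive $N(\beta)$ is less than $p^2$, and $p^2\nmid g_a^{\rm even}$ directly.

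\textbf{Step 3 (the criterion).} Since $g_a^{\rm even}\in\{1,p\}$, we have $g_a^{\rm even}=p$ if and only if $p\mid N(\beta)$ for all $\beta\in\mathcal B_e^+$. Because $m/2-1\ge1$, this holds if and only if $p\mid h(a)-Z_a(\beta)$ for all such $\beta$, which is exactly the definition of $\kappa_e(a)$.

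The main obstacle is Step 2 for $m\ge6$, namely excluding that the code is one-weight. I would first try the length comparison with the simplex code, since constant-weight projective codes are replicated simplex codes. If that comparison becomes delicate, I would fall back on the Pless moments.
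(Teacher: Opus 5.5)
Your proof is correct. Its skeleton is the paper's: the weight formula of Theorem~\ref{th:weight1}, the bound $|h(a)-Z_a(\beta)|\le p+1$, and the fact that $\Delta_R$ is a power of $p$ because $\C_{R_a}$ is projective. The difference is your Step~2, which the paper's two-line proof lacks. The paper asserts that the $p$-adic valuation of $N(\beta)=p^{m/2-1}-\delta\bigl(h(a)-Z_a(\beta)\bigr)$ is $0$ or $1$. This overlooks those $\beta$ with $h(a)=Z_a(\beta)$, for which the valuation is $m/2-1$. So when $\kappa_e(a)=1$ and $m\ge6$, one really must exclude $h(a)=Z_a(\beta)$ on all of $\mathcal B_e^+$, since otherwise $\Delta_R=p^{m-3}$. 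Your one-weight argument does exactly this.

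Three small points.

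\emph{The one-weight case.} The cleanest way to rule it out is the first identity of Proposition~\ref{pro:dim}(5). It would give $(p^m-1)p^{m-3}=np^{m-1}$, i.e.\ $p^2n=p^m-1$, which is impossible modulo $p$; this avoids the classification of one-weight codes. In either version, cite $\mathcal H=1$ from Proposition~\ref{pro:dim}(1) explicitly. With $\mathcal H=p^2$ the same count would be consistent, giving $n=p^{m-2}-1$.

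\emph{The bound.} Justify it termwise:
$h(a)-Z_a(\beta)=\sum_{j\in\mathbb P^1(\BF_p)}\eta_q(\alpha_j)\mathrm{1}_{\{\Tr_{K/\BF_p}(\beta/\alpha_j)\ne0\}}$.
As worded, your argument (each of $h(a)$, $Z_a(\beta)$ is a sum of $p+1$ terms in $\{0,\pm1\}$) only gives $2(p+1)$. That is too weak for your $m=4$, $p=3$ step.

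\emph{The hypothesis $D_a\neq\emptyset$.} The statement tacitly needs it. For $m=4$, $d=2$ one has $\mathcal B_e^+=\emptyset$, so $\kappa_e(a)=1$ vacuously while $\Delta_R=0$ by the paper's convention. Your appeal to projectivity already assumes $D_a\neq\emptyset$, but it is worth stating.
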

\begin{proof}
	By Theorem~\ref{th:weight1}, we know
	\[
	|h(a)-Z_a(\beta)|\leq p+1,
	\qquad \frac{m-2}{2}\ge1.
	\]
	Then the $p$-adic valuation is $0$ if $h(a)-Z_a(\beta)$ cannot be divisible by $p$, and is $1$ if $h(a)-Z_a(\beta)=\pm p$.
\end{proof}
By  Theorem~\ref{th:weight3}, if $m\ge5$ is odd, then
\begin{equation*}
\frac{w_a^{\rm odd}(\beta)}{p-1}
=p^{(m-5)/2}\left(p^{(m-1)/2}-\varepsilon_m s_a(\beta)\right).
\end{equation*}
and we also have $|s_a(\beta)|\le p+1$. Hence the following result can be obtained directly.
\begin{lemma}\label{le:19}
	Assume $m\ge5$ is odd and let $\mathcal B_o$ be as defined in Theorem~\ref{th:weight3}.
	Then
	\begin{equation}
	\Delta_R=p^{(m-5)/2+\kappa_o(a)},
	\end{equation}
	where
	\[
	\kappa_o(a)=
	\begin{cases}
	1,&s_a(\beta)\equiv0\pmod p\quad
	\text{for all }\beta\in\mathcal B_o,\\
	0,&\text{otherwise}.
	\end{cases}
	\]
\end{lemma}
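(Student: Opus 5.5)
The plan is to read $\Delta_R$ off the weight formula of Theorem~\ref{th:weight3}, using Theorem~\ref{puncode-da} to pass to $\C_{R_a}$. For $m\ge5$ odd, Theorem~\ref{th:weight3} gives $\dim\C_{D_a}=m$, so $\phi$ is injective. Hence the nonzero codewords of $\C_{R_a}$ are exactly the $\mathbf c_b/(p-1)$, in the sense of Theorem~\ref{puncode-da}, with $b\ne0$. Their weights are
\[
\frac{w_a^{\rm odd}(\beta)}{p-1}=p^{(m-5)/2}\bigl(p^{(m-1)/2}-\varepsilon_m s_a(\beta)\bigr),
\]
where $\beta=\Tr_{\BF_q/K}(b^2)$ ranges over exactly the set $\mathcal B_o$. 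This is the reachability analysis at the end of the proof of Theorem~\ref{th:weight3}: all of $K$ when $e\ge3$, and the nonzero squares of $K$ when $e=1$. Therefore
\[
\Delta_R=p^{(m-5)/2}\gcd_{\beta\in\mathcal B_o}\bigl(p^{(m-1)/2}-\varepsilon_m s_a(\beta)\bigr),
\]
and every term in this gcd is positive, since $|s_a(\beta)|\le p+1<p^{(m-1)/2}$ for $m\ge5$.

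Next I use the projectivity proposition proved just above. Since $D_a\ne\emptyset$, the code $\C_{R_a}$ is projective, so $\Delta_R$ is a power of $p$. It therefore suffices to compute the $p$-adic valuation of the gcd $g$. Because $(m-1)/2\ge2$, each term satisfies $p^{(m-1)/2}-\varepsilon_m s_a(\beta)\equiv -\varepsilon_m s_a(\beta)\pmod p$. Thus $p\mid g$ if and only if $p\mid s_a(\beta)$ for all $\beta\in\mathcal B_o$. This already gives $\kappa_o(a)=0$ in the ``otherwise'' case.

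In the remaining case $p\mid s_a(\beta)$ for all $\beta\in\mathcal B_o$, and the bound $|s_a(\beta)|\le p+1$ forces $s_a(\beta)\in\{0,\pm p\}$. I must then show $p^2\nmid g$. This is the main obstacle. A term with $s_a(\beta)=0$ equals $p^{(m-1)/2}$ exactly, while a term with $s_a(\beta)=\pm p$ equals $p\bigl(p^{(m-3)/2}\mp\varepsilon_m\bigr)$.

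For $m\ge5$ we have $(m-3)/2\ge1$, so $p^{(m-3)/2}\mp\varepsilon_m$ is prime to $p$. So if some $\beta\in\mathcal B_o$ has $s_a(\beta)=\pm p$, the valuation of that term is exactly $1$, and since $\Delta_R$ is a pure $p$-power, $g=p$.

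It remains to rule out $s_a\equiv0$ on $\mathcal B_o$. I would do this with a first-moment computation. Summing $\wt(\mathbf c_b)$ over $b$ gives $\sum_b\wt(\mathbf c_b)=(p-1)p^{m-1}n$ by Proposition~\ref{pro:dim}(5) (or directly). The constant part of the weight formula contributes $(p^m-1)(p-1)p^{m-3}$. Comparing these yields
\[
\sum_{b\ne0}S_a(b)\ne0,
\]
because $(p^m-1)p^{m-3}\ne p^{m-1}(p^{m-2}-1)$. Hence some reachable $\beta$ has $s_a(\beta)\ne0$, and so $s_a(\beta)=\pm p$ in this case. Combining the two cases gives $\Delta_R=p^{(m-5)/2+\kappa_o(a)}$.
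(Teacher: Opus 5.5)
Your proof is correct and follows the same route as the paper: you read the punctured weights $p^{(m-5)/2}\bigl(p^{(m-1)/2}-\varepsilon_m s_a(\beta)\bigr)$ off Theorem~\ref{th:weight3}, use projectivity of $\C_{R_a}$ to see that $\Delta_R$ is a power of $p$, and use $|s_a(\beta)|\le p+1$ to force $s_a(\beta)\in\{0,\pm p\}$ whenever $p$ divides every $s_a(\beta)$ with $\beta\in\mathcal B_o$. Your first-moment argument showing $\sum_{b\ne0}S_a(b)\neq0$ rules out the degenerate case $s_a\equiv0$ on $\mathcal B_o$, in which $\Delta_R$ would be $p^{m-3}$; this step is genuinely needed, and the paper's one-line proof leaves it unaddressed.
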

For each $[c]\in\mathbb P(\BF_p+\BF_pa)$, choose a representative $c$ and define once and for all
\[
\ell_c(\beta):=\Tr_{K/\BF_p}(\beta/c),\qquad
\omega_c:=\eta_q(c)=\eta_K(c)^e,\qquad \beta\in K.
\]
\begin{theorem}[Complete classification of maximal divisibility for odd $m$]
	\label{n021-T3}
	Assume $a \notin \BF_p$ and $m \ge 5$ is odd.  Then the following holds.
	
	\begin{enumerate}
		\item[\rm (i)] \textbf{Case $d = 3$.} One has
		\begin{equation}\label{eq:odd-div-d3}
		\Delta_R = p^{(m-3)/2}, \qquad \Delta_D = (p-1)\,p^{(m-3)/2}.
		\end{equation}
		Moreover, $g_a^{\mathrm{odd}} = p$.
		
		\item[\rm (ii)] \textbf{Case $d \ge 5$.} One has
\begin{equation}\label{eq:no1}
		\Delta_R = p^{(m-5)/2}, \qquad \Delta_D = (p-1)\,p^{(m-5)/2}.
\end{equation}
		In particular, $g_a^{\mathrm{odd}} = 1$ whenever $d \ge 5$.
	\end{enumerate}
	
\end{theorem}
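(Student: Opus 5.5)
The plan is to reduce everything to Lemma~\ref{le:19}. By that lemma, $\Delta_R=p^{(m-5)/2+\kappa_o(a)}$, where $\kappa_o(a)=1$ exactly when $p\mid s_a(\beta)$ for every $\beta\in\mathcal B_o$. Since $|s_a(\beta)|\le p+1$, the condition $p\mid s_a(\beta)$ means $s_a(\beta)\in\{0,\pm p\}$.

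Three facts tie the remaining quantities to $\kappa_o(a)$:
\begin{itemize}
\item Theorem~\ref{th:divisible} gives $\Delta_R=p^{(m-5)/2}g_a^{\rm odd}$, so $g_a^{\rm odd}=p^{\kappa_o(a)}$.
\item $\Delta_D=(p-1)\Delta_R$, again by Theorem~\ref{th:divisible}.
\item Since $m$ is odd, both $d$ and $e$ are odd.
\end{itemize}
So it suffices to show $\kappa_o(a)=1$ when $d=3$ and $\kappa_o(a)=0$ when $d\ge5$.

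\emph{Case $d=3$.} Then $m=3e\ge9$, because $m\ge5$ is odd, so $e\ge3$ and $\mathcal B_o=K$. Theorem~\ref{valsa} shows that $s_a(\beta)\in\{-p,0,p\}$ for all $\beta\in K$, hence $\kappa_o(a)=1$ and $\Delta_R=p^{(m-3)/2}$. As a consistency check, the values of $p^{(m-1)/2}-\varepsilon_ms_a(\beta)$ are $p^{(m-1)/2}$ and $p^{(m-1)/2}\pm p$. All are attained, by Table~\ref{tab:joint-dist} and the reachability in Theorem~\ref{th:weight3}. Their gcd is $p$ because $(m-1)/2\ge2$.

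\emph{Case $d\ge5$.} It suffices to exhibit $\beta\in\mathcal B_o$ with $s_a(\beta)\notin\{0,\pm p\}$. I will do this by a moment argument on $s_a(\beta)=\sum_{[c]}\eta_K(c)\eta_p(\ell_c(\beta))$.

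First I need independence of the functionals. For distinct classes $[c_1],\dots,[c_k]$ with $k\le4<d$, the elements $1/c_1,\dots,1/c_k$ are $\BF_p$-linearly independent. This follows as in the proof of the lemma on $\Omega$: clearing denominators gives a nonzero polynomial of degree $\le k-1<d$ vanishing at $a$. Hence the functionals $\ell_{c_1},\dots,\ell_{c_k}$ are independent on $K$.

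With this, I expand $\sum_{\beta\in K}s_a(\beta)^2$ and $\sum_{\beta\in K}s_a(\beta)^4$. Only index tuples in which each class appears an even number of times survive, since an odd power of $\eta_p$ of an independent coordinate sums to zero. This gives
\[
\sum_\beta s_a^2=(p^2-1)p^{d-1},\qquad \sum_\beta s_a^4=(p^2-1)p^{d-1}(3p-2).
\]
The first comes from the $p+1$ diagonal terms. The second comes from the $p+1$ all-equal tuples together with the $3(p+1)p$ tuples forming two distinct pairs.

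If every $s_a(\beta)$ lay in $\{0,\pm p\}$, we would have $\sum s_a^4=p^2\sum s_a^2$. That forces $3p-2=p^2$, which is impossible for an odd prime $p$.

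For $e\ge3$ we have $\mathcal B_o=K$, so this already gives the required $\beta$. For $e=1$, take $\lambda\in\BF_p^*$ a nonresidue. Then $s_a(\lambda\beta)=\eta_p(\lambda)s_a(\beta)$, while $\eta_K(\lambda)=\eta_p(\lambda)^d=-1$ because $d$ is odd. So $\beta\mapsto\lambda\beta$ swaps nonzero squares and nonsquares and preserves $|s_a|$. Both moments restricted to the squares are therefore exactly half of the totals, and the same contradiction applies within $\mathcal B_o$. This yields $\kappa_o(a)=0$, hence $\Delta_R=p^{(m-5)/2}$ and $g_a^{\rm odd}=1$.

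The main obstacle is the fourth-moment bookkeeping. I must check that exactly the two pairing patterns survive, using independence of up to four of the $\ell_c$, which is where $d\ge5>4$ is used. I must also get the multiplicities $p^{d-1}(p-1)$ and $p^{d-2}(p-1)^2$ of the surviving terms right.
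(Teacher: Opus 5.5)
Your proof is correct. Part (i) follows the paper: both read $s_a(\beta)\in\{0,\pm p\}$ off Theorem~\ref{valsa}, and all three values are attained since $e\ge 3$.

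For part (ii) you take a genuinely different route. The paper reduces $s_a(\beta)$ modulo $p$ by Euler's criterion to the form $\sum_{[c]}\omega_c\,\ell_c(\beta)^{(p-1)/2}$. It then argues that this form is nonzero: the points $[c^{-1}]$ lie on a rational normal curve of degree $d-1$, whose degree-$(p-1)/2$ Veronese image is a rational normal curve of degree $N=(p-1)(d-1)/2\ge p$. Hence the $p+1$ powers $\ell_c^{(p-1)/2}$ are linearly independent. This implicitly also uses that a nonzero form of degree less than $p$ cannot vanish identically on $K$.

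You instead compare the second and fourth moments of $s_a$ over $K$. This needs only the independence of at most four of the functionals $\ell_c$, hence $d\ge 4$, i.e.\ $d\ge 5$ by parity. You get that independence correctly from the partial-fraction argument of the $\Omega$-lemma; the cleared-denominator polynomial is nonzero because the $1/(u_i+v_iX)$ for pairwise non-proportional linear forms are independent in $\BF_p(X)$. Your counts $(p^2-1)p^{d-1}$ and $(p^2-1)p^{d-1}(3p-2)$, with only the patterns $(4)$ and $(2,2)$ surviving, are right.

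What your route buys is an elementary, self-contained and quantitative argument. The identity $p^2\sum_\beta s_a(\beta)^2-\sum_\beta s_a(\beta)^4=(p^2-1)(p-1)(p-2)p^{d-1}>0$ shows there is even a $\beta$ with $0<|s_a(\beta)|<p$. The paper's route, in exchange, identifies $s_a \bmod p$ as an explicit nonzero polynomial function of degree $(p-1)/2$ on $K$.

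For $e=1$ you use the same nonresidue scaling $\beta\mapsto\lambda\beta$ as the paper. Halving both moments works, but simply rescaling a nonsquare witness $\beta$ into the square class, as the paper does, already suffices.
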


\begin{proof}
	\begin{equation}\label{eq:odd-div-correction}
	s_a(\beta)\equiv\sum_{[c] \in \mathbb P(\BF_p + \BF_p a)} \omega_c \, \ell_c(\beta)^{(p-1)/2}\pmod p.
	\end{equation}

	\medskip
	\noindent\textbf{Case $d = 3$.} 
	The projective points $[c^{-1}] \in \mathbb P(K)$ lie on a rational normal curve of degree $d - 1 = 2$. Then
	\[
	N = \frac{p-1}{2}\,(d-1) = p - 1.
	\]
By Theorem~\ref{valsa}, $s_a(\beta)\in\{-p,0,p\}$.	By Theorems~\ref{thm:d=3} and ~\ref{puncode-da},  the weights of punctured code $\C_{R_a}$ are
	\[
	P - B_m, \qquad P, \qquad P + B_m,
	\]
	where
	\[
	P = p^{m-3}, \qquad B_m = p^{(m-3)/2}.
	\]
	Their greatest common divisor is therefore
\begin{equation*}
	\gcd(P - B_m,\, P,\, P + B_m) = \gcd(B_m,\, P) = B_m = p^{(m-3)/2}.
\end{equation*}
	The middle weight $P$ corresponds to the square locus, so the extra factor $p$ survives, $g_a^{\mathrm{odd}} = p$. This establishes \eqref{eq:odd-div-d3}.
	
	\medskip
	\noindent\textbf{Case $d \ge 5$.}
	Euler's criterion gives
	\begin{equation}\label{eq:16}
		s_a(\beta)\equiv
		\sum_{[c]\in\mathbb P(\BF_p+\BF_pa)}
		\omega_c\ell_c(\beta)^{(p-1)/2}
		\pmod p,
	\end{equation} the right-hand side is a homogeneous polynomial on the $d$-dimensional $\BF_p$-vector space $K$.

The projective points
\[
[c^{-1}]\qquad([c]\in\mathbb P(\BF_p+\BF_pa))
\]
	lie on a rational normal curve of degree $d-1$.  After applying the Veronese embedding of degree $(p-1)/2$, the resulting $p+1$ points lie on a rational normal curve of degree
	\[
	N = \frac{p-1}{2}\,(d-1).
	\]
	Since $d\ge5$, 
	\[N \ge 2(p-1) \ge p.\]
	Then the $p + 1$ projective points are linearly independent already in the Veronese embedding. It follows that the polynomial in \eqref{eq:odd-div-correction} is non-zero, and there exists some $\beta \in K$ such that
\begin{equation}\label{eq:17}
	s_a(\beta) \not\equiv 0 \pmod{p}.
\end{equation}

If $e\ge3$, every $\beta\in K$ is attainable, so~\eqref{eq:17} immediately yields 
\[\kappa_o(a)=0.\]

It remains to consider $e=1$, in which case the attainable nonzero values are precisely the squares in $K^*$. If the element $\beta$ in \eqref{eq:17} is a nonsquare, choose a nonsquare $t\in\BF_p^*$. Since $d$ is odd, $t$ is also a nonsquare in $K$, and hence $t\beta$ is a square in $K$. Moreover,
\[s_a(t\beta)=\eta_p(t)s_a(\beta),\]
which is again nonzero modulo $p$. Hence $\kappa_o(a)=0$ also in the full-degree case, and Eq.~\eqref{eq:no1} can be obtained from Lemma~\ref{le:19}.
\end{proof}
\begin{example}
Take $p=3$, $m=9$, and let $a\in\BF_{3^3}\subset\BF_{3^9}$ be a root of
$T^3-T-1\in\BF_3[T]$. Then $d=3$ and $e=3$. An exhaustive computation gives
\[
W_{\C_{D_a}}(z)
=1+3024z^{1404}+13850z^{1458}+2808z^{1512}
\]
and
\[
W_{\C_{R_a}}(z)
=1+3024z^{702}+13850z^{729}+2808z^{756}.
\]
Thus the program returns
\[
\Delta_D=54,\qquad \Delta_R=27,\qquad
g_a^{\mathrm{odd}}=\frac{\Delta_R}{3^{(m-5)/2}}=3.
\]
\end{example}


\begin{theorem}
	\label{n021-T4}
	Let $m \ge 4$ be even, $a \notin \BF_p$, and  $e \ge 2$. Then the following holds.
	
	\begin{enumerate}
		\item[\rm (i)] If $d = 2$ and $e = 2$ (i.e. $m = 4$), then
		\begin{equation}\label{eq:even-div-empty}
		\Delta_R = \Delta_D = 0.
		\end{equation}
		
		\item[\rm (ii)] If $d = 2$ and $e \ge 4$ is even, then
		\begin{equation}\label{eq:even-div-d2-even}
		\Delta_R = p^{m/2-1}, \qquad \Delta_D = (p-1)p^{m/2-1}.
		\end{equation}
		
		\item[\rm (iii)] If $d = 2$ and $e \ge 3$ is odd, then
		\begin{equation}\label{eq:even-div-d2-odd}
		\Delta_R = p^{m/2-2}, \qquad \Delta_D = (p-1)p^{m/2-2}.
		\end{equation}
		
		\item[\rm (iv)] If $d \ge 3$, regardless of whether $e = 2$ or $e \ge 3$,
		\begin{equation}\label{eq:even-div-dge3}
		\Delta_R = p^{m/2-2}, \qquad \Delta_D = (p-1)p^{m/2-2}.
		\end{equation}
	\end{enumerate}
\end{theorem}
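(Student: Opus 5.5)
The plan is to apply Lemma~\ref{le:021} in every case. Since $\Delta_R=p^{m/2-2+\kappa_e(a)}$, it suffices to decide whether $h(a)-Z_a(\beta)\equiv0\pmod p$ for every $\beta\in\mathcal B_e^+$. Then $\Delta_D=(p-1)\Delta_R$ follows from Theorem~\ref{th:divisible}.

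Case (i) is immediate. By Proposition~\ref{pro:dim}(3), $D_a=\emptyset$, and the convention $\gcd(\emptyset)=0$ gives $\Delta_R=\Delta_D=0$.

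For $d=2$, $K=\BF_{p^2}$ and the $p+1$ elements $\alpha_j$ represent all points of $\mathbb P(K)$. Hence for $\beta\ne0$ exactly one index $j$ satisfies $\Tr_{K/\BF_p}(\beta/\alpha_j)=0$, so $Z_a(\beta)=\eta_q(\alpha_j)$ for that $j$, while $Z_a(0)=h(a)$.
\begin{itemize}
\item In case (ii), $e$ is even, so $\eta_q|_{K^*}=1$ and $h(a)=p+1$. Then $h(a)-Z_a(\beta)\in\{p,0\}$ for all $\beta$, which gives $\kappa_e(a)=1$. Since $m\ge8$, Proposition~\ref{pro:dim}(1) shows the code is nonzero, so $\mathcal B_e^+\neq\emptyset$.
\item In case (iii), $e$ is odd, so $\eta_q(a-t)=\eta_p(f_a(t))$ with $f_a$ an irreducible quadratic. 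As in Proposition~\ref{pro:dim}(2), $h(a)=0$, and therefore $h(a)-Z_a(\beta)=\mp1$ for every $\beta\ne0$. By Lemma~\ref{bb} with $e\ge3$ odd, every $\beta\ne0$ is attained, so $\kappa_e(a)=0$.
\end{itemize}

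For case (iv), $d\ge3$, I will use the isomorphism $\Phi$ from the proof of Theorem~\ref{th:weight5}. It matches linear functionals $\ell$ on $K$ with polynomials $H$ of degree $\le d-1$, via $\ell(1/(a-t))=-H(t)/f_a(t)$, where the leading coefficient of $H$ plays the role of $\ell(1)$. Functionals are then realized as $\beta\mapsto\Tr_{K/\BF_p}(\beta\,\cdot)$ with $\beta$ attainable; this is guaranteed by Lemma~\ref{bb}, or by the count $R(\mathbf b)=p^{-d}(p^m+\delta p^{m/2})$.
\begin{itemize}
\item When $e$ is even, $h(a)=p+1$. Taking $H$ monic irreducible of degree $d-1$ produces $b$ with $z(b)=0$. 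The positivity of $R(\mathbf b)$ holds for both signs of $\delta$, since $m\ge2d$ gives $p^m>p^{m/2}$. Thus $h(a)-Z_a=p+1\not\equiv0\pmod p$.
\item When $e$ is odd, $d$ is even, so $d\ge4$ and every $\beta$ is attained. Suppose, for contradiction, that $Z_a(\beta)\equiv h(a)\pmod p$ for all $\beta$ with positive weight.
\begin{itemize}
\item Choosing $H$ irreducible of degree $d-1$ gives $\beta_1$ with $Z_a(\beta_1)=0$.
\item Choosing $H$ irreducible of degree $d-2\ge2$, so that $\ell(1)=0$ while $\ell(1/(a-t))\ne0$ for all $t$, gives $\beta_2$ with $Z_a(\beta_2)=\eta_q(\alpha_\infty)=1$.
\end{itemize}
Both weights are positive because $Z_a\ne h(a)$ for at least one of them. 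We would then get $0\equiv1\pmod p$, a contradiction. Hence $\kappa_e(a)=0$.
\end{itemize}

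The main obstacle is this last subcase. I must verify that the interpolation through $\Phi$ really permits prescribing exactly the vanishing pattern ``only at $\infty$,'' and handle the possibility that one of $\beta_1,\beta_2$ has zero weight. The fallback for the latter is to compare $\beta_1$ and $\beta_2$ directly: both lie in $\mathcal B_e^+$ unless $h(a)\in\{0,1\}$, and in that case the other value already differs from $h(a)$ by a unit modulo $p$.
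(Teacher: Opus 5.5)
Your handling of (i)--(iii) matches the paper's. In (iv) you take a genuinely different route, and it is correct.

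\textbf{The paper's route.} The paper works modulo $p$ with the function $\beta\mapsto\sum_{[c]}\omega_c\,\ell_c(\beta)^{p-1}\equiv h(a)-Z_a(\beta)$. It shows this function is not identically zero by a geometric argument. The $p+1$ points $[c^{-1}]$ lie on a rational normal curve of degree $d-1$. Its degree-$(p-1)$ Veronese image is a rational normal curve of degree $(p-1)(d-1)\ge p$, so the forms $\ell_c^{p-1}$ are linearly independent. This treats both parities of $e$ at once and ignores the signs $\omega_c$. It is also the template reused, with exponent $(p-1)/2$, in the odd-$m$ divisibility theorem. However, it only proves that a good $\beta$ exists.

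\textbf{Your route.} You use the interpolation isomorphism $\Phi$ from Theorem~\ref{th:weight5}, which is the concrete form of the rational-normal-curve statement, to exhibit explicit witnesses.
For even $e$, you take a $\beta$ with $Z_a(\beta)=0$, so $h(a)-Z_a(\beta)=p+1$.
For odd $e$, where $d\ge4$, you take two witnesses with $Z_a=0$ and $Z_a=1$. Their corrections $h(a)$ and $h(a)-1$ cannot both vanish mod $p$.
This is more elementary and fully checkable. The vanishing pattern you worried about is realizable: the $T^{d-1}$-coefficient of $\Phi(\ell)$ is $\ell(1)$. So an irreducible $H$ of degree exactly $d-2\ge2$ gives $\ell(1)=0$ and $\ell(1/(a-t))=-H(t)/f_a(t)\neq0$ for all $t$.

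\textbf{The zero-weight worry.} This is a non-issue, and the reason you give (``because $Z_a\ne h(a)$ for at least one of them'') is not the right one. Positivity of the weight has nothing to do with whether $Z_a(\beta)=h(a)$. Since $m\ge6$ (indeed $m\ge12$ in the odd-$e$ subcase), $|h(a)-Z_a(\beta)|\le p+1<p^{m/2-1}$. Hence every attained nonzero $\beta$ lies in $\mathcal B_e^+$; this is Proposition~\ref{pro:dim}(1). The fallback discussion about $h(a)\in\{0,1\}$ can be dropped.
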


\begin{proof} (i) By Proposition~\ref{pro:dim}, then $D_a = \emptyset$ and Eq.~\eqref{eq:even-div-empty} holds.
	\smallskip
	
	As a $\BF_p$-valued function,
	\[
	\mathrm{1}_{\{\ell_c(\beta) \neq 0\}} = \ell_c(\beta)^{p-1},
	\]
	so the normalized weight correction is
	\begin{equation}\label{eq:even-div-correction}
	h(a) - Z_a(\beta) \equiv \sum_{[c] \in \mathbb P(\BF_p + \BF_pa)} \omega_c \, \ell_c(\beta)^{p-1} \pmod{p}.
	\end{equation}

	\noindent\textbf{Case $d \ge 3$.} We have
	\[
N = (p-1)(d - 1)
	\ge 2(p-1) \ge p,
	\]
	so the $p + 1$ points are linearly independent after the Veronese embedding. In both cases \eqref{eq:even-div-correction} is not the zero polynomial, and there exists $0\ne \beta \in K$ with
	\begin{equation}\label{eq:even-div-nonzero}
	h(a)-Z_a(\beta) \not\equiv 0 \pmod{p}.
	\end{equation}
	
	In the present even-$m$ case with $e\ge2$, Lemma~\ref{bb} shows that the relevant nonzero values of $\beta$ are attained by the map $b\mapsto \Tr_{\BF_q/K}(b^2)$. Hence \eqref{eq:even-div-nonzero} gives $\kappa_e(a)=0$. Combined with Lemma~\ref{le:021}, \eqref{eq:even-div-dge3} follows.
	
	\medskip
	\noindent\textbf{Case $d = 2$.}
	Here $\mathbb P(\BF_p + \BF_pa) = \mathbb P(K)$. If $e$ is even, $\omega_c = 1$ for all $c\neq0$. For each non-zero $\beta$, the linear functional $c^{-1} \mapsto \Tr_{K/\BF_p}(\beta / c)$ vanishes at exactly one projective point, so
	\[
	h(a) = p + 1, \qquad Z_a(\beta) = 1,
	\]
	and therefore
\begin{equation*}
	h(a) - Z_a(\beta) = p.
\end{equation*}
	For $\beta = 0$, $h(a)=Z_a(0)$. Hence the extra g.c.d. of all normalized weights is exactly $p$, $e = 2$ is the empty-code case by Lemma~\ref{n0}, while $e \ge 4$ yields \eqref{eq:even-div-d2-even}.
	
	If $e$ is odd, Lemma~\ref{n0} gives $h(a) = 0$. Each $\beta \neq 0$ still lies on one zero hyperplane, so $Z_a(\beta) = \pm 1$ and $h(a) - Z_a(\beta) = \mp 1$, thus $\kappa_e(a) = 0$, yielding \eqref{eq:even-div-d2-odd}.
\end{proof}
\begin{example}
Take $p=3$, $m=6$, and let $a\in\BF_{3^3}\subset\BF_{3^6}$ be a root of
$T^3-T-1\in\BF_3[T]$. Then $d=3$ and $e=2$. An exhaustive computation gives
\[
W_{\C_{D_a}}(z)
=1+52z^{54}+312z^{66}+208z^{72}+156z^{78}
\]
and
\[
W_{\C_{R_a}}(z)
=1+52z^{27}+312z^{33}+208z^{36}+156z^{39}.
\]
Thus the program returns
\[
\Delta_D=6,\qquad \Delta_R=3.
\]
\end{example}


\section{Conclusions}

We constructed a family of \(p\)-ary trace codes from two quadratic trace
conditions and determined their parameters and complete weight distributions
using quadratic forms and character sums. The results explicitly describe
the roles of \(d=[\mathbb F_p(a):\mathbb F_p]\), \(e=m/d\), and their parities.
By exploiting the \(\mathbb F_p^*\)-orbit structure, we also obtained punctured
codes and characterized their dual distances, self-orthogonality,
divisibility and projectivity. For \(p>3\), weighted
self-orthogonality was reduced to the condition
\(\ker L_a\cap \mathcal S_p^{X_a}\ne\emptyset\), with explicit existence ranges.

Compared with \cite{10, 11, 15, 20, 21}, the main distinction of this paper lies in the underlying
construction and its scope. In particular, Xiang and Tang \cite{15} treats the double-trace square-function construction only for $p = 3$ and even $m$, whereas we extend it to arbitrary odd primes and both parities
of m. Unlike \cite{10, 11, 21}, which employ bent functions, special defining sets or $p$-divisibility as
the main construction framework, we study trace codes arising from a system of two quadratic
trace equations. The general counting results in \cite{20} serve as a useful tool in our analysis.
Beyond the weight distributions, we also study the orbit-induced punctured codes and their
structural properties.

Further study may consider additional quadratic trace constraints and related
quantum or LCD codes.

\section*{Data Availability}

No data was used for the research described in the article.

\Acknowledgements{This research is supported by the National Natural Science Foundation of China under Project 62402004, 62201009, the Natural Science Foundation of Anhui Province under Project 2508085QA033, the Natural
Science Research Project of Anhui Provincial Education Committee under Project 2024AH050430 and in part by the Talent Research Foundation of Anhui Agricultural University, China under Grant rc482402.}

\end{document}